\theoremstyle{plain}
\newtheorem{theorem}{Theorem}[section]
\newtheorem{lemma}[theorem]{Lemma}
\newtheorem{corollary}[theorem]{Corollary}
\theoremstyle{definition}
\newcommand{\ignore}[1]{}
\newcommand{\CS}[1]{{\mathrm{#1}}}
\newcommand{\E}[1]{\text{\normalfont E}\left[ #1 \right]}
\newcommand{\@giventhatnostar}[2]{#1\;\middle|\;#2}
\newcommand{\@giventhatstar}[3][]{#1(#2\;#1|\;#3#1)}
\newcommand{\given}{\@ifstar\@giventhatstar\@giventhatnostar}
\global\long\def\IID{\text{i.i.d.}}
\global\long\def\ALG{\text{\normalfont ALG}}
\global\long\def\OPT{\text{\normalfont OPT}}
\DeclareMathOperator*{\argmin}{arg\,min}
\newcommand{\rofl}[1][q]{{#1}\textsc{DistProb}}
\newcommand{\pul}[2]{{p(#2,#1)}}
\newcommand{\pful}[3]{{p_{#1}(#3,#2)}}
\newcommand{\coin}[1]{\mathsf{p}(#1)}
\newcommand{\distalg}[1]{\mathsf{d}(#1)}
\newcommand{\distletter}{x}
\newcommand{\demands}{{\CS{U}}}
\newcommand{\optcluster}{{\CS{C}^*}}
\newcommand{\optcenter}{{c^*}}
\newcommand{\cpoints}[1]{{C^*_{\geq #1}}}
\newcommand{\tcpoints}[1]{{C^*_{ \left[#1, T_{#1}\right]}}}
\newcommand{\uptocpoints}[1]{{C^*_{\leq #1}}}
\newcommand{\closer}[2]{ {\cpoints{#1}\left( #2 \right)} }
\newcommand{\event}{ {\mathcal{E}} }
\newcommand{\balanced}[1]{ {\text{Bal}\left(#1\right)} }
\newcommand{\imbalanced}[1]{ {\text{ImBal}\left( #1 \right)} }
\newcommand{\nofacility}[1]{ {\text{N}\left( #1 \right)} }
\newcommand{\cI}{{\cal I}}
\newenvironment{proofw}{\par
  \pushQED{\qed}%
  \normalfont \topsep7\p@\@plus6\p@\relax
  \trivlist
  \item[]\ignorespaces
}{%
  \popQED\endtrivlist\@endpefalse
}
\title{Almost Tight Bounds for Online Facility Location in the Random-Order Model}
\author{
    Haim Kaplan%
    \thanks{Blavatnik School of Computer Science, Tel Aviv University, Israel.
    Email: \texttt{haimk@tau.ac.il}. 
    Supported by ISF grant.\ 1595-19 and the Blavatnik Family Foundation.}
    \and
    David Naori%
    \thanks{Computer Science Department, Technion, Israel.
    Emails: \texttt{\{dnaori,danny\}@cs.technion.ac.il}.} 
    \and Danny Raz%
    \footnotemark[2]
}
\date{}
\begin{document}
\maketitle
\begin{abstract}
We study the online facility location problem with uniform facility costs in the random-order model. Meyerson's algorithm [FOCS'01] is arguably the most natural and simple online algorithm for the problem with several advantages and appealing properties. Its analysis in the random-order model is one of the cornerstones of random-order analysis beyond the secretary problem. Meyerson's algorithm was shown to be  (asymptotically) optimal in the standard worst-case adversarial-order model and $8$-competitive in the random order model. While this bound in the random-order model is the long-standing state-of-the-art, it is not known to be tight, and the true competitive-ratio of Meyerson's algorithm remained an open question for more than two decades. 

We resolve this question and prove tight bounds on the competitive-ratio of Meyerson's algorithm in the random-order model, showing that it is exactly $4$-competitive. Following our tight analysis, we introduce a generic parameterized version of Meyerson's algorithm that retains all the advantages of the original version. We show that the best algorithm in this family is exactly $3$-competitive. On the other hand, we show that no online algorithm for this problem can achieve a competitive-ratio better than $2$. Finally, we prove that the algorithms in this family are robust to partial adversarial arrival orders.
\end{abstract}

\section{Introduction}\label{sec:intro}

In the classical online metric uncapacitated facility location problem, we have a metric space where facilities can be opened at any point for a given cost (uniform facility cost). A sequence of demand points arrive one by one over time, and upon arrival of a demand point, it must be irrevocably assigned to an open facility. A demand point can either be assigned to an existing open facility, or a new facility can be opened for this purpose. The cost of assigning a demand point to a facility is the distance between the demand point and the facility. The goal is to minimize the total cost (assignment cost and facility opening cost) paid for serving all the demand points.

This problem was first considered by Meyerson~\cite{meyerson2001online}, and it has been studied extensively since (see e.g.,~\cite{anagnostopoulos2004simple, fotakis2003competitive, fotakis2007primal, jiang2021online, almanza2021online, fotakis2021learning, azar2022online} and the survey by Fotakis~\cite{fotakis2011online} and references therein). In his seminal paper~\cite{meyerson2001online}, Meyerson studied the problem both in the standard worst-case (adversarial order) model, and mainly in the random-order model, which is particularly suitable for many applications of the facility location problem.

Meyerson considered what is arguably the most simple and natural online algorithm for the facility location problem, which we call $\rofl[]$: When a demand point arrives, $\rofl[]$ randomly decides whether to assign it to an existing open facility, or to open a new facility at the demand point. The decision is based only on the distance, $d$, between the demand point and the nearest open facility (which is the assignment cost to this facility), and the facility opening cost $f$. $\rofl[]$ opens a new facility with probability $\min\{d/f,1\}$, and otherwise, it assigns the demand point to the nearest open facility.

$\rofl[]$ has many advantages: it is simple and intuitive, very efficient computationally and memoryless. It has inspired many studies, and it is used as a building block in algorithms for other online problems (see e.g.,~\cite{guha2003clustering, charikar2003better,  fotakis2007primal, fotakis2011online, fotakis2021learning}). Meyerson's elegant analysis of $\rofl[]$ is one of the cornerstones of random-order analysis, and is sometimes taught in algorithms classes~\cite{lecture_roughgarden, gupta2022random}.

In the worst-case adversarial-order model, the facility location problem is considered to be solved. Shortly after the seminal paper by Meyerson, Fotakis showed that $\rofl[]$ is optimal $O(\log{n}/\log{\log{n}})$-competitive in the worst-case~\cite{fotakis2003competitive}. In contrast, this is not the case in the random-order model. In~\cite{meyerson2001online}, Meyerson showed that $\rofl[]$ is $8$-competitive, and it remained the best-known competitive-ratio for the problem until now.

In some cases, tailoring an optimal algorithm in the random-order model compromises its performance in the worst-case, and vice versa~\cite{kaplan2020competitive}. Analyzing a single algorithm in both the worst-case and the random-order model provides a deeper understanding of its performance in different conditions, and sets new standards in the design of online algorithms~\cite{mirrokni2012simultaneous, raghvendra2016robust, molinaro2017online, gupta2021random}. The ideal is to have the ``best-of-both-worlds'' algorithm, i.e., a single algorithm with good performance guarantees in both models. $\rofl[]$ is a good example for this, it provides the best performance guarantee in the worst-case (asymptotically), and when the online sequence arrives in random order, it provides a much better guarantee. 

In this paper, we study the online facility location problem in the random order model. We provide tight bounds on the competitive-ratio of $\rofl[]$. We then design an improved algorithm that admits a better competitive-ratio while maintaining all the advantages of $\rofl[]$. We show that the performance of our improved algorithm is close to optimal by proving a lower bound on the performance of any algorithm for the problem. We also prove that the algorithms that we consider have a nice robustness property to partial adversarial orders. High-level details follow.

\subsection{Our Contribution}

We provide tight bounds on the performance of $\rofl[]$ in the random-order model, and show that its exact competitive-ratio is $4$. 

Our analysis maintains the elegance in Meyerson's analysis, and even simplifies some aspects of it. Like Meyerson's analysis, our analysis is done per cluster of the optimal solution. The crux of our analysis lies in pinpointing a probabilistic event of opening a facility which is, roughly speaking, well placed among the remaining demand points in the cluster. This opened facility provides an upper bound on the expected distances between the following demand points and their closest open facility (which, in turn, upper bounds the expected cost incurred by serving these demand points). On the other hand, we also need to upper bound the cost incurred by serving the demand points that arrive before this event occurs. The choice of this event carefully balances these two things.

Our analysis sheds new light on the trade-off between assigning demand points to existing  facilities at a low cost, and opening new facilities for future use. It allows us to generalize $\rofl[]$ and consider a family of algorithms that open facilities with different probabilities. 

More concretely, we consider a generalized version of $\rofl[]$ that when a demand point arrives, instead of opening a facility with probability $\min\{d / f, 1\}$, it opens a facility with probability $g(d)$, for some function $g : \mathbb{R}_{\geq 0} \rightarrow [0,1]$. We show that the best functions has the form $g(d) = \min\{ q \cdot d / f , 1\}$ for some $q \in (0,1)$. Our analysis generalizes to provide tight bounds on the competitive-ratio of this generic algorithm for any value of $q$. The best competitive-ratio of an algorithm in this family is $3$ and it is obtained for $q=1/2$. We call this algorithm $\rofl[\nicefrac{1}{2}]$. We note that $\rofl[\nicefrac{1}{2}]$ is still (asymptotically) optimal in the worst-case adversarial-order model, and therefore it has the ``best-of-both-worlds'' property.

We then prove a lower-bound of $2$ on the competitive-ratio of any algorithm for the facility location problem, which applies even in the weaker online $\IID$ model. In the $\IID$ model, the demand points are drawn independently from a probability distribution over the points in the metric space, and the online algorithm has full prior knowledge of this distribution. 

Finally, we study how well $\rofl[]$ performs in a partial adversarial order setting. We consider a setting in which 
for a parameter $\rho \in (0,1)$,
an adversarially chosen $(1-\rho)$-fraction of the demand points arrive in adversarial order. The remaining demand points in each cluster of the optimal solution are injected in random positions between the adversarially ordered points of the cluster. In this setting, we show that the competitive-ratio of $\rofl[]$, is within a factor of at most $(2-\rho)/\rho$ from its random-order competitive-ratio. For instance, our analysis shows that $\rofl[\nicefrac{1}{2}]$ is $3.66$-competitive if $10\%$ of the demand points arrive in adversarial order. This result can be seen as part of a growing body of work on robust random-order algorithms and non-uniform arrival orders that aim to weaken the random order assumption~\cite{kesselheim2015secretary, bradac2020robust, argue2022robust, kesselheim2020knapsack}.
\subsection{Additional Related Work}
In recent years, the random-order model has been widely adopted for the design and analysis of online algorithms for various online problems. Some recent examples are the set cover problem~\cite{gupta2022random}, edge coloring~\cite{bhattacharya2021online, bahmani2012online}, weighted bipartite matching, and various other packing problems that generalize the classical secretary problem~\cite{DBLP:conf/esa/KesselheimRTV13, kesselheim2018primal, kaplan2022online, korula2009algorithms, FeldmanSZ18, naori2019online, albers2021improved}. See also the survey by Gupta and Singla~\cite{gupta2021random} and references therein. 

Following the seminal paper by Meyerson~\cite{meyerson2001online}, the online facility location problem was mostly considered in the standard worst-case adversarial-order model. In~\cite{fotakis2003competitive}, Fotakis gave a lower bound of $\Omega(\log{n}/\log{\log{n}})$, and noted that $\rofl[]$ achieves this bound. Fotakis also presented a deterministic $O(\log{n}/\log\log{n})$-competitive algorithm. Anagnostopoulos et al.~\cite{anagnostopoulos2004simple} presented a simpler and more computationally efficient deterministic online algorithm, that achieves a worst-case competitive-ratio of $O(\log{n})$ for Euclidean spaces of constant dimension. On the other hand, their algorithm is not constant competitive in the random-order model.

In~\cite{fotakis2007primal}, Fotakis presented a simple deterministic $O(\log{n})$-competitive algorithm for the online facility location problem which is guided by the dual of an LP relaxation for the problem. Later, Nagarajan and Williamson~\cite{nagarajan2013offline} presented an elegant dual-fitting analysis of Fotakis' algorithm which also proves a competitive-ratio of $O(\log{n})$. They adapted Fotakis' algorithm to the more general online facility leasing problem. For other variants of the facility location problem see the survey by Fotakis~\cite{fotakis2011online}. More recently, Cygan et al.~\cite{cygan2018online} modified $\rofl[]$ to a setting in which demand points may depart.

In a work related to our results on $\rofl[]$ in a partial adversarial order setting, Lang \cite{lang2018online} studied $\rofl[]$ in the $t$-semi-random order setting. In this setting, the demand points are initially ordered uniformly at random. Then, the random-order sequence can be manipulated by a $t$-bounded adversary. This means that at each point in time, the adversary holds a set of $t$ demand points from which it can choose the next demand point to arrive in the online sequence. Initially, the adversary gets the first $t$ demand points in the random-order sequence from which it selects the first demand point to arrive in the online sequence. Then, at each online round, the next demand point from the random-order sequence is added to the set of demand points from which the adversary selects the next demand point to arrive in the online sequence.

Lang shows that $\rofl[]$ is $O(\log{t}/ \log{\log{t}})$-competitive in the $t$-semi-random order setting, and gives a matching lower bound on the competitive-ratio of any algorithm for the facility location problem in the $t$-semi-random setting. We note that our partial adversarial order setting and the $t$-semi-random setting are not directly comparable. For example, in our partial adversarial order setting with $\rho = 1/2$, the adversary can make sure that half of the demand points in the input (which the adversary can choose) will always arrive in the same relative order in the online sequence. To achieve this in the $t$-semi-random setting, we need $t = n/2$,  and no online algorithm can achieve a constant competitive-ratio for this choice of $t$ (while $\rofl[]$ achieves a constant competitive-ratio in our partial adversarial order setting with $\rho = 1/2$). On the other hand, in the $t$-semi-random setting with $t=2$, the adversary can make sure that two demand points $u_1,u_2$ will always arrive consecutively in the online sequence. This cannot be achieved in the partial adversarial order setting that we study in this paper.

The facility location problem has received much recent attention in online settings with predictions~\cite{fotakis2021learning, jiang2021online,azar2022online,almanza2021online}. For instance, in~\cite{fotakis2021learning, jiang2021online}, upon the arrival of a demand point, the algorithm receives a prediction on the facility that the demand point is assigned to in the optimal solution. 
The goal is to have an algorithm that uses the predictions, and obtains a performance guarantee that depends on the prediction error: It should be better than the best online worst-case performance guarantee when the predictions are accurate, and close to it when the predictions are erroneous. The works in~\cite{fotakis2021learning, jiang2021online,azar2022online,almanza2021online} achieve this goal with different prediction types and different performance guarantees. We note that the algorithms in~\cite{jiang2021online, fotakis2021learning, almanza2021online} are based on Meyerson's algorithm. 
\subsection{Organization of the paper}
In Section~\ref{sec:def} we give a formal definition of the online facility location problem in the random-order model and establish notations.
In Section~\ref{sec:meyerson} we present our tight analysis of $\rofl[]$ in the random-order model.
In Section~\ref{sec:improve}, we introduce our generic version of $\rofl[]$, and present tight bounds on its performance.
In Section~\ref{sec:lowerbound} we prove the hardness result for any algorithm for online facility location in the $\IID$ model. In Section~\ref{sec:mixed} we introduce the setting of partial adversarial arrival order, and prove a robustness result for our considered algorithms in this setting. Finally, we conclude and discuss open questions in Section~\ref{sec:discussion}.
\section{Problem Definition}\label{sec:def}
In the (metric, uncapacitated) facility location problem, we are given a metric space $(\CS{M},d)$ where $\CS{M}$ is the set of points, and $d : \CS{M} \times \CS{M} \rightarrow \mathbb{R}_{\geq 0}$ is a non-negative and symmetric distance function that satisfies the triangle inequality. We are also given a multiset of demand points $\demands$ in $\CS{M}$, and a facility opening cost $f \in \mathbb{R}_{\geq 0}$ (uniform facility cost). 

Each demand must be assigned to an open facility. A facility can be opened at any point in the metric space for a cost of $f$. For $\CS{F} \subseteq \CS{M}$ and $v \in \CS{M}$, $d(\CS{F},v)$ is the minimal distance between a point in $\CS{F}$ and $v$, that is, $d(\CS{F},v) = \min_{u \in \CS{F}} d(u,v)$. We define $d(\emptyset,v) = \infty$. Given a set of open facilities $\CS{F} \subseteq \CS{M}$, the cost of assigning a demand $v$ is $d(\CS{F},v)$. The goal is to find a set of facilities $\CS{F} \subseteq \CS{M}$ that minimizes the total cost $|\CS{F}| \cdot f + \sum_{v\in \demands} d(\CS{F},v)$. By scaling, we assume throughout this paper, without loss of generality, that $f=1$.

In the online version of the problem, the demands in $\demands$ arrive one by one. Let $v_1,\dots,v_n$ denote the input sequence. The arrival order is determined by the online model, which we specify thereafter. When a demand point $v_\ell$ arrives, the online algorithm must decide immediately, whether and where to open new facilities. Then, $v_\ell$ is irrevocably assigned to its nearest open facility. Let $F_\ell$ be the set of all facilities that the algorithm opens by the end of online round $\ell \in [n]$ (note that $F_\ell$ may be a random variable).  The algorithm's \textit{service cost} for $v_\ell$ is the cost incurred by the algorithm at online round $\ell$, i.e., the facility opening cost and the assignment cost, $(|F_{\ell}| - |F_{\ell-1}|) +  d(F_\ell,v_\ell)$. The total cost incurred by the algorithm is given by $|F_n| + \sum_{\ell=1}^n d(F_\ell,v_\ell)$.

In the online random-order model, the input sequence $v_1,\dots,v_n$ is a uniformly random permutation of the demand points in $\demands$.\footnote{We note that in contrast to other problems in the random-order model, we do not need to assume that $n$ is known to the online algorithm.}
For an algorithm $\ALG$ and an input instance $\cI = (\CS{M},d,\demands)$, let $\ALG(\cI)$ be the random variable that gets the total cost incurred by the algorithm on $\cI$, and let $\OPT(\cI)$ be the cost of an optimal solution. Also, for a multiset of demand points $\CS{X} \subseteq \demands$, let $\ALG(\CS{X})$ be the random variable that gets the total service cost of the algorithm for the demand points in $\CS{X}$ (i.e., the cost that the algorithm pays at the online rounds when the demand points from $\CS{X}$ arrive).

With a slight abuse of notation, we also use $\OPT(\cI)$ to refer to the optimal solution as a family of clusters, where a cluster is a  multiset of demand points that are assigned to the same facility.\footnote{In case of multiple optimal solutions, we break ties arbitrarily.} We also refer to the facility of a cluster by the name center. When $\mathcal{I}$ is clear from the context, we omit it from the notation and write $\OPT$ instead of $\OPT(\mathcal{I})$.

An algorithm $\ALG$ is called $c$-competitive in the random-order model, if for any input instance $\cI$, $\E{\ALG(\cI)} \leq c \cdot \OPT$, where the expectation is taken over the random arrival order of the demand points, and the internal randomness of $\ALG$.
\section{\texorpdfstring{$\rofl[]$}{DistProb} (Meyerson's Algorithm) is \texorpdfstring{$4$}{4}-Competitive}\label{sec:meyerson}

In this section, we analyze the fundamental randomized online algorithm, described in Algorithm~\ref{alg:rofl}, which was first introduced by Meyerson~\cite{meyerson2001online}. We refer to this algorithm by the name $\rofl[]$.

\IncMargin{1em}
\begin{algorithm2e}[ht]
\caption{$\rofl[]$}\label{alg:rofl}
$F_0 \leftarrow \emptyset$\;
\For {a demand $v_{\ell}$ that arrives at round $\ell$ } {
    $\distalg{v_\ell} \leftarrow d(F_{\ell-1},v_\ell)$\; \label{line:dist}
    $\coin{v_\ell} \leftarrow \min\{\distalg{v_\ell}, 1\}$\;\label{line:coin}
    Flip a coin with probability $\coin{v_\ell}$ of Heads\; \label{line:heads}
    \uIf{Heads}{
        \tcp{Open a facility at $v_\ell$}
        $F_{\ell} \leftarrow F_{\ell-1} \cup \{v_\ell\}$\;
    } \Else {
        $F_{\ell} \leftarrow F_{\ell - 1}$
    }

    Assign $v_\ell$ to its nearest facility in $F_\ell$\;
}
\end{algorithm2e}
\DecMargin{1em}

We bound the cost of $\rofl[]$ for each cluster of $\OPT$ separately. Consider a cluster $\optcluster$ in $\OPT$ with center $\optcenter$. For $u \in \optcluster$, we denote $d^*_u = d(\optcenter,u)$. The cost of $\OPT$ for serving $\optcluster$ is $ \OPT(\optcluster) = 1 + \sum_{u\in \optcluster} d^*_u$.

The basic idea of the analysis is to wait until an online round $T$ in which the algorithm opens a facility at a point $v_T$ from $\optcluster$, and to use this facility to upper bound the distances of the demand points in $\optcluster$ that arrive after round $T$ from their closest open facility (which, in turn, upper bounds the expected cost that the algorithm pays for serving these demand points). 
On the other hand, we also need to upper bound the expected cost that the algorithm pays for serving the demand points in $\optcluster$ that arrive before round $T$. Roughly speaking, to obtain a good upper bound on the distances of the demand points in $\optcluster$ that arrive after time $T$ from their closest open facility, we need the facility $v_T$ to be well placed among these points. However, being too selective about the location of $v_T$ may result in a high service cost for the demand points in $\optcluster$ that arrive before time $T$. Hence, the choice of $T$ should carefully balance between these two considerations.

Before discussing how we define $T$, we establish notations and prove simple lemmas that hold regardless of the definition of $T$. This will help in explaining the intuition behind our definition of $T$.

For a demand point $u\in \demands$ and online round $\ell \in [n]$, let $\distalg{u}$ be the distance between $u$ and the closest open facility at the point in time when $u$ arrives, and $\coin{u} = \min\{\distalg{u},1\}$ ($\distalg{\cdot}$ and $\coin{\cdot}$ are also defined in lines~\ref{line:dist} and \ref{line:coin} of Algorithm~\ref{alg:rofl}). 

Let $T$ be a random variable that gets values in the set $\{1,\dots,n+1\}$. $T$ will get a value of an online round in which a facility from $\optcluster$ is opened. If there is no such online round, $T$ will get the value $n+1$ (for the analysis we define $v_{n+1}$ to be a dummy demand point with an arbitrary value of $d^*_{v_{n+1}}$ that will not be used). As mentioned earlier, the precise definition of $T$ is deferred. 

For an online round $\ell \in [n]$, let $\uptocpoints{\ell} = \{v_1,\dots,v_\ell\} \cap \optcluster$ be the set of demand points from $\optcluster$ that arrive before round $\ell$, and let $\cpoints{\ell}=\optcluster \cap \{v_{\ell},\dots,v_{n}\}$ be the set of remaining demand points in $\optcluster$ at round $\ell$.
We derive upper bounds on the expected service cost of the algorithm for each of the subsets, $\uptocpoints{T}$ and $\cpoints{T+1}$, separately. 

The next lemma shows that to upper bound the cost that $\rofl[]$ pays for the service of the demands in $\uptocpoints{T}$ and $\cpoints{T+1}$, it suffices to upper bound $\E{\sum_{u \in \uptocpoints{T}} \coin{u}}$ and $\E{\sum_{u \in \cpoints{T+1}} \coin{u}}$, respectively.

\begin{lemma}\label{lem:price_factor}
The expected service cost of $\rofl[]$ for a demand $u$ is at most $2 \cdot \E{ \coin{u}}$
\end{lemma}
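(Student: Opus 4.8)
The statement to prove is that the expected service cost of $\rofl[]$ for a demand $u$ is at most $2\cdot\E{\coin{u}}$. The plan is to condition on everything that happens before $u$ arrives, so that $\distalg{u}$ and hence $\coin{u}$ become fixed quantities, and then bound the expected service cost of the round in which $u$ arrives in terms of $\coin{u}$. Recall that the service cost for $u$ is $(|F_\ell| - |F_{\ell-1}|) + d(F_\ell, u)$, i.e., the facility-opening cost at this round plus the assignment cost.

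First I would split into the two coin outcomes. On Heads (probability $\coin{u}$), the algorithm opens a facility at $u$, so the opening cost is $1$ and the assignment cost $d(F_\ell,u)$ is $0$; the contribution to the service cost is exactly $1$. On Tails (probability $1-\coin{u}$), no facility is opened, so the opening cost is $0$ and the assignment cost is $d(F_{\ell-1},u) = \distalg{u}$; the contribution is $\distalg{u}$. Hence the expected service cost, conditioned on the history, equals $\coin{u}\cdot 1 + (1-\coin{u})\cdot\distalg{u} \le \coin{u} + \distalg{u}$.

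Now I need to relate $\distalg{u}$ to $\coin{u} = \min\{\distalg{u},1\}$. If $\distalg{u}\le 1$ then $\distalg{u} = \coin{u}$, so the bound above is at most $2\coin{u}$. If $\distalg{u} > 1$ then $\coin{u} = 1$, so $\coin{u} + \distalg{u}$ is not bounded by $2\coin{u}$ directly — this is the one case needing a little care, and it is the only real subtlety. The fix is that the Tails probability is $1-\coin{u} = 0$ in this case, so the conditional expected service cost is just $\coin{u}\cdot 1 = 1 = \coin{u} \le 2\coin{u}$; the term $\distalg{u}$ never actually contributes. More uniformly: the conditional expected service cost is $\coin{u} + (1-\coin{u})\distalg{u}$, and $(1-\coin{u})\distalg{u} \le (1-\coin{u})\cdot\distalg{u}$ where whenever $1-\coin{u} > 0$ we have $\coin{u} = \distalg{u}$, so $(1-\coin{u})\distalg{u} = (1-\coin{u})\coin{u} \le \coin{u}$; and whenever $1-\coin{u} = 0$ the term vanishes. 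In all cases the conditional expected service cost is at most $2\coin{u}$.

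Finally I would take expectation over the history (the arrival order and internal randomness up to the arrival of $u$). Since the pointwise bound "conditional expected service cost $\le 2\coin{u}$" holds for every realization of the history, the tower property gives that the unconditional expected service cost of $u$ is at most $2\,\E{\coin{u}}$, as claimed. The main obstacle is purely the bookkeeping in the $\distalg{u} > 1$ regime described above; everything else is a one-line case analysis followed by conditioning.
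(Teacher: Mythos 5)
Your proof is correct and follows essentially the same route as the paper's: condition so that $\coin{u}$ is fixed, compute the expected round cost as $\coin{u}\cdot 1 + (1-\coin{u})\cdot\distalg{u}$, treat the $\coin{u}=1$ (i.e., $\distalg{u}\geq 1$) case separately, and conclude by taking expectation. The paper conditions directly on the value of $\coin{u}$ rather than on the full history, but this is only a cosmetic difference.
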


\begin{proof}
Fix $u \in \demands$. Let $\ALG(u)$ be the cost that $\rofl[]$ pays for serving $u$. Conditioned on $\coin{u} = 1$, the algorithm open a facility at $u$ and pays $1 < 2 = 2 \cdot \coin{u}$ ($1$ for the facility opening cost, and $0$ for the assignment cost). Now let $p \in [0,1)$. Conditioned on $\coin{u} = p$, it holds that $\coin{u} = \distalg{u}$, and the algorithm opens a facility at $u$ with probability $p$ and pays $1$, and with probability $1-p$, it serves $u$ through an open facility at distance $p$ and pays $p$. Hence, $\E{\given{\ALG(u)}{\coin{u} = p}} = p \cdot 1 + (1- p)\cdot p = 2 p - p^2 \leq 2p$. The lemma follows by taking the expectation over $\coin{u}$.
\end{proof} 

In the next lemma, we derive a simple upper bound on the cost that the algorithm pays for the service of the demand points in $\cpoints{T+1}$, as a function of $d^*_{v_T}$. By Lemma~\ref{lem:price_factor} together with the fact that  $\coin{u} = \min\{\distalg{u},1\} \leq \distalg{u}$ for all $u\in \CS{U}$, it suffices to bound $\E{\sum_{u \in \cpoints{T+1}} \distalg{u}}$.

\begin{lemma}\label{lem:C_Tplus1}
$\E{\sum_{u \in \cpoints{T+1}} \distalg{u}} \leq \E{ \sum_{u \in \cpoints{T}} d^*_u} + \E{(|\cpoints{T}| - 2)d^*_{v_T}}$.
\end{lemma}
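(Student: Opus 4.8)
The plan is to prove the inequality pointwise --- for every realization of the arrival order and of the internal coins of $\rofl[]$ --- and then take expectations. Fix such a realization. The case $T=n+1$ (no facility of $\optcluster$ is ever opened) is handled directly: $\cpoints{T+1}=\cpoints{T}=\emptyset$, and the inequality is immediate from the conventions for the dummy point $v_{n+1}$. So I would assume $T=t$ with $t\le n$, which means $v_t\in\optcluster$ and $\rofl[]$ opens a facility at $v_t$ in round $t$.

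The heart of the argument is a single use of the triangle inequality. Take any $u\in\cpoints{t+1}$, i.e., a demand of $\optcluster$ arriving in some round $\ell>t$. Since the set of open facilities never shrinks (in Algorithm~\ref{alg:rofl}, $F_\ell\in\{F_{\ell-1},\,F_{\ell-1}\cup\{v_\ell\}\}$), the facility at $v_t$ is already open when $u$ arrives, so
\[
\distalg{u}=d(F_{\ell-1},u)\le d(v_t,u)\le d(v_t,\optcenter)+d(\optcenter,u)=d^*_{v_t}+d^*_u .
\]
Summing over all $u\in\cpoints{t+1}$ then gives $\sum_{u\in\cpoints{t+1}}\distalg{u}\le |\cpoints{t+1}|\,d^*_{v_t}+\sum_{u\in\cpoints{t+1}}d^*_u$.

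It remains to re-express this in terms of $\cpoints{t}$. From $\cpoints{t}=\optcluster\cap\{v_t,\dots,v_n\}$, $\cpoints{t+1}=\optcluster\cap\{v_{t+1},\dots,v_n\}$ and $v_t\in\optcluster$ we get $\cpoints{t+1}=\cpoints{t}\setminus\{v_t\}$, hence $|\cpoints{t+1}|=|\cpoints{t}|-1$ and $\sum_{u\in\cpoints{t+1}}d^*_u=\sum_{u\in\cpoints{t}}d^*_u-d^*_{v_t}$. Substituting,
\[
\sum_{u\in\cpoints{t+1}}\distalg{u}\le(|\cpoints{t}|-1)\,d^*_{v_t}+\sum_{u\in\cpoints{t}}d^*_u-d^*_{v_t}=\sum_{u\in\cpoints{t}}d^*_u+(|\cpoints{t}|-2)\,d^*_{v_t},
\]
and taking expectations over the random order and the coins of $\rofl[]$ proves the lemma. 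I do not expect a genuine obstacle: the only points needing a moment of care are the monotonicity of $F_\ell$ (which is what lets the facility at $v_t$ control $\distalg{u}$ for later-arriving $u$), the bookkeeping $\cpoints{T+1}=\cpoints{T}\setminus\{v_T\}$, and the degenerate case $T=n+1$; the coefficient $|\cpoints{T}|-2$ simply falls out as $(|\cpoints{T}|-1)-1$.
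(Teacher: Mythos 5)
Your proposal is correct and follows essentially the same route as the paper's proof: bound $\distalg{u}$ for each $u \in \cpoints{T+1}$ via the triangle inequality through the facility opened at $v_T$, then use the bookkeeping $\cpoints{T} = \cpoints{T+1} \cup \{v_T\}$ and $|\cpoints{T+1}| = |\cpoints{T}|-1$ before taking expectations. Your extra care with the pointwise argument and the degenerate case $T=n+1$ is fine but not a substantive difference.
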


\begin{proof}
Observe that each demand point $u \in \cpoints{T+1}$ can be served by the open facility at $v_T$, and thus $\E{\distalg{u}} \leq \E{d^*_{v_T} + d^*_u}$. Hence, we get that
\begin{align*}
    \E{\sum_{u \in \cpoints{T+1}} \distalg{u}} &\leq \E{\sum_{u \in \cpoints{T+1}} \left(d^*_{v_T} + d^*_u\right)} 
    = \E{ \left| \cpoints{T+1} \right| d^*_{v_T} + \sum_{u \in \cpoints{T+1}} d^*_u} \\
    &= \E{(|\cpoints{T+1}| - 1)d^*_{v_T} +\sum_{u \in \cpoints{T}} d^*_u } = \E{(|\cpoints{T}| - 2)d^*_{v_T}}  +\E{\sum_{u \in \cpoints{T}} d^*_u },
\end{align*}
where the second equality follows from the fact that $\cpoints{T} = \cpoints{T+1} \cup \{v_T\}$, and also the last equality is due to the fact that $|\cpoints{T+1}| = |\cpoints{T}| - 1$.
\end{proof}

\begin{corollary}\label{cor:C_Tplus1}
$\E{\sum_{u \in \cpoints{T+1}} \coin{u}} \leq \E{ \sum_{u \in \cpoints{T}} d^*_u} + \E{(|\cpoints{T}| - 2)d^*_{v_T}}$.
\end{corollary}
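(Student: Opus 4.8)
The plan is straightforward: Corollary~\ref{cor:C_Tplus1} follows immediately from Lemma~\ref{lem:C_Tplus1} by combining it with Lemma~\ref{lem:price_factor} and the elementary inequality $\coin{u} = \min\{\distalg{u},1\} \leq \distalg{u}$. Concretely, for each demand point $u$ we have $\coin{u} \leq \distalg{u}$ by definition of $\coin{\cdot}$, so summing over $u \in \cpoints{T+1}$ and taking expectations yields $\E{\sum_{u \in \cpoints{T+1}} \coin{u}} \leq \E{\sum_{u \in \cpoints{T+1}} \distalg{u}}$. Then I would simply invoke Lemma~\ref{lem:C_Tplus1} to bound the right-hand side by $\E{\sum_{u \in \cpoints{T}} d^*_u} + \E{(|\cpoints{T}| - 2)d^*_{v_T}}$, which is exactly the claimed bound.

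One subtlety worth a sentence: the inequality $\coin{u} \leq \distalg{u}$ is a pointwise (deterministic) inequality that holds on every realization of the randomness, so monotonicity of expectation applies directly without any independence or measurability concerns; the random variable $T$ and the (random) index set $\cpoints{T+1}$ do not interfere, since the bound $\coin{u} \le \distalg{u}$ holds term by term regardless of which points end up in $\cpoints{T+1}$. There is essentially no obstacle here — this is a one-line corollary, and the only thing to be careful about is not to accidentally reintroduce the factor of $2$ from Lemma~\ref{lem:price_factor}, which is not needed here because we are bounding $\sum \coin{u}$ directly rather than the service cost.

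So the proof I would write is: ``By definition, $\coin{u} = \min\{\distalg{u}, 1\} \leq \distalg{u}$ for every $u \in \demands$. Therefore $\sum_{u \in \cpoints{T+1}} \coin{u} \leq \sum_{u \in \cpoints{T+1}} \distalg{u}$ always holds, and taking expectations and applying Lemma~\ref{lem:C_Tplus1} gives the claim.'' That is the entire argument.
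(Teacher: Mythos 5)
Your proposal is correct and matches the paper's own argument: the corollary is obtained exactly by noting $\coin{u} = \min\{\distalg{u},1\} \leq \distalg{u}$ pointwise and then applying Lemma~\ref{lem:C_Tplus1} to $\E{\sum_{u \in \cpoints{T+1}} \distalg{u}}$. Your remark that Lemma~\ref{lem:price_factor} (and its factor of $2$) is not needed here is also accurate, since that factor only enters later when converting $\coin{u}$ into service cost.
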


For a demand point $u\in \demands$ and online round $\ell \in [n]$, let $\pul{u}{\ell} = \min\{d(F_{\ell-1}, u),1\}$. Recall that $d(F_{\ell-1},u)$ is the distance between $u$ and the closest facility in $F_{\ell-1}$ (the set of open facilities at the beginning of online round $\ell$). Note that when $u$ arrives at online round $\ell$, i.e., $v_\ell = u$, we have $\coin{u} = \pul{u}{\ell}$, and this is exactly the probability of Heads in line~\ref{line:heads} of Algorithm~\ref{alg:rofl} (that is, the probability that the algorithm opens a facility at $u$).

Next, we proceed with an intuitive, informal discussion about our definition of $T$. As discussed before, we want to choose $T$ in a way that will allow us to obtain good upper bounds on $\E{\sum_{u \in \uptocpoints{T}} \coin{u}}$ and $\E{\sum_{u \in \cpoints{T+1}} \coin{u}}$. Corollary~\ref{cor:C_Tplus1} essentially shows that to get a good upper bound on $\E{\sum_{u \in \cpoints{T+1}} \coin{u}}$, we only need $\E{d^*_{v_T}}$ to be small. Instead of strictly requiring $v_T$ to be ``close'' to $\optcenter$, we apply a less stringent probabilistic approach. We allow the facility $v_T$ to be (sometimes) far away from $\optcenter$, and require only $\E{d^*_{v_T}}$ to be small.

Due to the random arrival order, at each online round $\ell$, each remaining demand point in $\cpoints{\ell}$ is equally likely to \textit{arrive}. Suppose that for every $\ell \in [n]$, each point $u \in \cpoints{\ell}$ had an equal probability to \textit{be opened} at online round $\ell$ (that is, $\pul{u}{\ell} = \pul{u'}{\ell}$ for all $u,u' \in \cpoints{\ell}$). Then, we could simply define $T$ to be the online round in which the first facility from $\optcluster$ is opened by the algorithm. In this hypothetical case, conditioned on the event that a facility $v_T \in \cpoints{T}$ is opened, each point $u \in \cpoints{T}$ was equally likely to be the opened facility $v_T$. Hence, we would have got that the expected distance $d^*_{v_T}$ is the average distance of a point in $\cpoints{T}$ from $\optcenter$, i.e., $\E{d^*_{v_T}} = \E{\frac{1}{|\cpoints{T}|} \sum_{u \in \cpoints{T}} d^*_u}$. Also, with this simple definition of $T$, we have a simple upper bound on $\E{\sum_{u \in \uptocpoints{T}} \coin{u}}$. Using expected waiting time techniques, as used by Meyerson in~\cite{meyerson2001online}, we have $\E{\sum_{u \in \uptocpoints{T}} \coin{u}} \leq 1$.

In reality, however, each point $u \in \cpoints{\ell}$ may have a different probability to be opened by the algorithm. With the simple definition of $T$ as the first round in which a facility from $\optcluster$ is opened, an issue arises when demand points farther away from $\optcenter$ have higher probabilities to be opened than points closer to $\optcenter$, resulting in a bad upper bound on $\E{d^*_{v_T}}$. 

To overcome this issue, we ``balance out" the probabilities by randomly ignoring some openings of facilities in $\optcluster$ which are far from $\optcenter$.
To do so, every time a demand point from $\optcluster$ arrives, we flip an independent coin to decide whether to consider it for the definition of $T$ in case it becomes an open facility. We define $T$ to be the first round in which a facility from $\optcluster$ is opened, and its coin comes up Heads (the probability of Heads is carefully chosen to balance out the facility opening probability). When this happens we say that a \textit{balanced} facility is opened. 
When $v_\ell = u$ the algorithm opens a facility at $u$ with probability $\pul{u}{\ell}$, but we define $T$ to be the current time, and $u$ to be a balanced facility, only with a fraction of this probability.
This fraction is chosen to be no larger than the probabilities of the points that are closer than $u$ to $\optcenter$ to be opened (i.e., the probabilities $\pul{w}{\ell}$ for $w \in \cpoints{\ell}$ with $d^*_{w} \leq d^*_{u}$). More concretely, we take this fraction to be the minimum opening probability of a demand point in $\cpoints{\ell}$ with distance at most $d^*_{u}$ from $\optcenter$ ($\min_{w \in \cpoints{\ell}, d^*_{w} \leq d^*_{u}} p(\ell,w)$). By doing so, we get that conditioned on the event that a balanced facility $v_T \in \cpoints{T}$ is opened, a point $u \in \cpoints{T}$ is at least as likely to be the opened balanced facility $v_T$, as demand points in $\cpoints{T}$ farther away from $\optcenter$. Hence, we get that the expected distance of $d^*_{v_T}$ is at most the average distance of a point in $\cpoints{T}$ from $\optcenter$, i.e., $\E{d^*_{v_T}} \leq \E{\frac{1}{|\cpoints{T}|} \sum_{u \in \cpoints{T}} d^*_u}$.  Lemma~\ref{lem:exp_v_T} gives the formal statement.

To upper bound $\E{\sum_{u \in \uptocpoints{T}} \coin{u}}$, we note that the difference between the actual probability of $u$ to be opened by the algorithm at round $\ell$ (conditioned on $v_\ell = u$), i.e., $\pul{u}{\ell}$, and the probability that $u$ is opened as a balanced facility at round $\ell$, is upper bounded by the difference between the distance of $u$ from its closest open facility in $F_{\ell-1}$, and the distance of some $w \in \cpoints{\ell}$, with $d^*_{w} \leq d^*_{u}$, from its closest open facility, i.e., $d(F_{\ell-1},u) - d(F_{\ell-1},w)$. We will show that this difference is upper bounded by $2d^*_{u}$. Using this fact, we prove in Lemma~\ref{lem:until_opening} that the increase in $\E{\sum_{u \in \uptocpoints{T}} \coin{u}}$ that we incur by waiting for a balanced facility from $\optcluster$ to be opened (instead of any facility in $\optcluster$), is at most $2 \cdot \E{\sum_{u \in \uptocpoints{T} } d^*_u}$.

We now proceed to formalize this intuitive discussion and fill in all the details. We start with some notation and a formal definition of balanced and imbalanced facilities. For an online round $\ell \in [n]$, recall that $\cpoints{\ell}=\optcluster \cap \{v_{\ell},\dots,v_{n}\}$ is the set of remaining demand points in $\optcluster$ at round $\ell$. For $x \in \mathbb{R}_{\geq 0}$, let $\closer{\ell}{x} = \{u \in \cpoints{\ell} : d^*_{u} \leq x\}$, that is, $\closer{\ell}{x}$ is the set of demand points in $\cpoints{\ell}$ whose distance from $\optcenter$ is at most $x$.
When a demand point $v_\ell \in \cpoints{\ell}$ arrives (at online round $\ell$), we consider the demand point 
\begin{align} \label{def:w_v_l}
    w(v_\ell) = \argmin_{u \in \closer{\ell}{ d^*_{v_{\ell}} } } { \pul{u}{\ell} },
\end{align} 
i.e., $w(v_\ell)$ is a point in $\closer{\ell}{ d^*_{v_{\ell}} }$ with a minimal distance to an open facility (in $F_{\ell-1}$).\footnote{Ties are broken arbitrarily.} We illustrate the definition of $w(v_\ell)$ in Figure~\ref{fig:w_def}. The red points are the open facilities in $F_{\ell-1}$ and the blue points are the demand points in $\closer{\ell}{d^*_{v_\ell}}$. The red lines connect each demand point to its closest open facility (regardless of the line pattern). With the solid and dotted line patterns, we highlight two distances of interest. The dotted red line is the shortest distance between a demand point in $\closer{\ell}{d^*_{v_{\ell}}}$, and an open facility in $F_{\ell-1}$, and the solid red line is the distance between $v_\ell$ and its closest open facility in $F_{\ell-1}$. 
\def\r{3}
\def\psize{2}
\def\s2{1.41421356237}
\begin{figure}
    \centering
    \begin{tikzpicture}
    \coordinate (c) at (0,0); 
    \coordinate (v_ell) at (\r / \s2, \r / \s2); 
    \coordinate (u_0) at (-1,1);
    \coordinate (u_1) at (-1,2);
    \coordinate (u_2) at (-2,1);
    \coordinate (u_3) at (1,-1.5);
    \coordinate (u_4) at (2.3,-1.4); 
    \coordinate (u_5) at (-1,-1);
    \coordinate (f_1) at (-4, 4);
    \coordinate (f_2) at (3,-3);
    \coordinate (f_3) at (4.5,3);

    \draw[fill] (c) node[] {} circle (\psize pt) node[below] {$c^*$};
    \draw[fill, color=blue] (v_ell) node[] {} circle (\psize pt) node[above,color=black] {$v_\ell$};
    \draw[color=blue] (0,0) node (circumference) {} circle (\r cm);
    \node[color=blue] at (0,2.6) {$\closer{\ell}{d^*_{v_\ell}}$};
    \draw[<->] (v_ell) -- (c) node [midway, below, yshift=-1mm] {$d^*_{v_\ell}$};

    \draw[fill, color=blue] (u_0) node[] {} circle (\psize pt) node[right, color=black] {};
    \draw[fill, color=blue] (u_1) node[] {} circle (\psize pt) node[right, color=black] {};
    \draw[fill, color=blue] (u_2) node[] {} circle (\psize pt) node[right, color=black] {};
    \draw[fill, color=blue] (u_3) node[] {} circle (\psize pt) node[left, color=black] {};
    \draw[fill, color=blue] (u_4) node[] {} circle (\psize pt) node[above, color=black] {$w(v_\ell)$};
    \draw[fill, color=blue] (u_5) node[] {} circle (\psize pt) node[left, color=black] {};
    
    \draw[fill, color=red] (f_1) node[] {} circle (\psize pt) node[left, color=red] {$F_{\ell-1} \ni\ $};  
    \draw[fill, color=red] (f_2) node[] {} circle (\psize pt) node[right, color=red] {$\ \in F_{\ell-1}$};
    \draw[fill, color=red] (f_3) node[] {} circle (\psize pt) node[right, color=red] {$\ \in F_{\ell-1}$};
    
    \draw[dashed, color=red] (u_3) -- (f_2) node [midway, below, yshift=-1mm] {};
    \draw[thick, dotted, color=red] (u_4) -- (f_2) node [midway, right] {$p(\ell,w(v_\ell))$};    
    \draw[dashed, color=red] (u_5) -- (f_2) node [midway, below, yshift=-1mm] {};  
    
    \draw[dashed, color=red] (u_0) -- (f_1) node [midway, below, yshift=-1mm] {};
    \draw[dashed, color=red] (u_1) -- (f_1) node [midway, below, yshift=-1mm] {};
    \draw[dashed, color=red] (u_2) -- (f_1) node [midway, below, yshift=-1mm] {}; 
    
    \draw[thick, color=red] (v_ell) -- (f_3) node [midway, below] {$p(v_\ell)$};    
    
    \end{tikzpicture}
    \caption{An illustration of $w(v_\ell)$}
    \label{fig:w_def}
\end{figure}
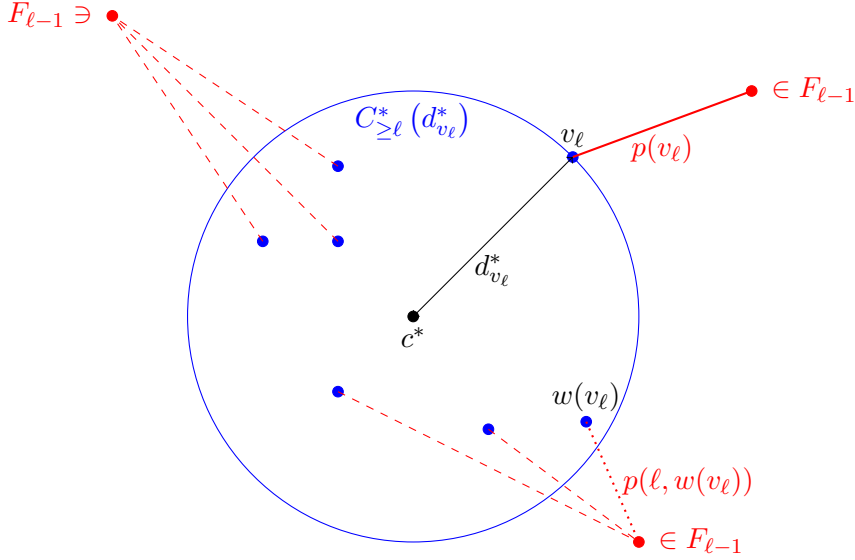

\let\r\relax
\let\psize\relax
\let\s2\relax

Recall that the algorithm flips a coin and opens a facility at $v_\ell$ with probability $\coin{v_\ell}$. We call this coin the \textit{algorithm coin}.
To make our distinction, we flip an additional independent coin, which we call the \textit{analysis coin}, with probability $\pul{w(v_\ell)}{\ell}/ \coin{v_\ell}$ of Heads (note that $\pul{w(v_\ell)}{\ell}/ \coin{v_\ell} \leq 1$). We say that a \textit{balanced} facility is opened at $v_\ell$ if the algorithm coin comes up Heads (a facility is opened at $v_\ell$ by the algorithm) and the analysis coin comes up Heads. If the algorithm coin comes up Heads and the analysis coin comes up Tails, we say that an \textit{imbalanced} facility is opened at $v_\ell$. Observe that overall, a balanced facility is opened at $v_\ell$ if both coins come up Heads which happens with probability $\coin{v_\ell} \cdot \pul{w(v_\ell)}{\ell}/\coin{v_\ell} =  \pul{w(v_\ell)}{\ell}$, and an imbalanced facility is opened at $v_\ell$ with probability $\coin{v_\ell} -  \pul{w(v_\ell)}{\ell}$.

We are now ready to formally define $T$. Let $T$ be the online round in which the first balanced facility from $\optcluster$ is opened (that is, a balanced facility $v_T \in \cpoints{T}$ is opened). If there is no such demand point, we define $T= n+1$ (as mentioned before, for the analysis we define $v_{n+1}$ to be a dummy demand point with an arbitrary value of $d^*_{v_{n+1}}$ that  will not be used).

We continue with Lemma~\ref{lem:exp_v_T} that upper bounds $\E{d^*_{v_T}}$ in terms of the average distance of a point in $\cpoints{T}$ from $\optcenter$, as promised above.

\begin{lemma}\label{lem:exp_v_T}
$\E{|\cpoints{T}|\cdot d^*_{v_T}} \leq \E{\sum_{u \in \cpoints{T}} d^*_{u}}$.
\end{lemma}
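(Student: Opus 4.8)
The plan is to condition on the state of the process just before round $T$ and then invoke an elementary rearrangement inequality; the key is that the balancing construction makes the probability that a \emph{balanced} facility is opened at a point $u \in \cpoints{\ell}$ monotone non-increasing in $d^*_u$. For $\ell \in [n]$, let $\mathcal{F}_{\ell-1}$ denote the information revealed in rounds $1,\dots,\ell-1$ (the arriving points $v_1,\dots,v_{\ell-1}$ together with the algorithm and analysis coins of those rounds), let $A_\ell = \{T \ge \ell\}$ be the event that no balanced facility of $\optcluster$ has been opened before round $\ell$, and let $B_\ell$ be the event that a balanced facility of $\optcluster$ is opened at round $\ell$, so that $\{T = \ell\} = A_\ell \cap B_\ell$. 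Crucially, $A_\ell$, the set $\cpoints{\ell}$, the set $F_{\ell-1}$, and hence the values $\pul{u}{\ell}$ for every $u \in \cpoints{\ell}$, are $\mathcal{F}_{\ell-1}$-measurable. Conditioned on $\mathcal{F}_{\ell-1}$ and on $A_\ell$, the point $v_\ell$ is uniform over the $n-\ell+1$ not-yet-arrived demand points, so it equals a fixed $u \in \cpoints{\ell}$ with probability $\frac{1}{n-\ell+1}$; and, given $v_\ell = u$, a balanced facility is opened at $u$ with probability exactly $\beta_u := \pul{w(u)}{\ell}$, where $w(u) = \argmin_{u'\in\closer{\ell}{d^*_u}}\pul{u'}{\ell}$ as in~\eqref{def:w_v_l} (as noted above; here $\beta_u=0$ when $\pul{u}{\ell}=0$).

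First I would record the monotonicity. For $u, u' \in \cpoints{\ell}$ with $d^*_u \le d^*_{u'}$ we have $\closer{\ell}{d^*_u} \subseteq \closer{\ell}{d^*_{u'}}$, and minimizing $\pul{\cdot}{\ell}$ over the larger set can only make the minimum smaller; hence $\beta_{u'} \le \beta_u$. Thus, indexed by $\cpoints{\ell}$, the sequences $(d^*_u)$ and $(\beta_u)$ are oppositely ordered.

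Next I would decompose both sides of the claimed inequality over the value of $T$. On $\{T = \ell\}$ one has $\cpoints{T} = \cpoints{\ell}$ and $v_T = v_\ell$, and the event $\{T = n+1\}$ contributes $0$ to both sides since then $\cpoints{T} = \emptyset$; therefore, pulling the $\mathcal{F}_{\ell-1}$-measurable factors out of the inner conditional expectations,
\begin{align*}
\E{|\cpoints{T}|\, d^*_{v_T}} &= \sum_{\ell=1}^{n}\E{\1{A_\ell}\,|\cpoints{\ell}|\cdot\E{\given{\1{B_\ell}\, d^*_{v_\ell}}{\mathcal{F}_{\ell-1}}}},\\
\E{\textstyle\sum_{u\in\cpoints{T}} d^*_u} &= \sum_{\ell=1}^{n}\E{\1{A_\ell}\,\bigl(\textstyle\sum_{u\in\cpoints{\ell}} d^*_u\bigr)\cdot\E{\given{\1{B_\ell}}{\mathcal{F}_{\ell-1}}}}.
\end{align*}
Using the conditional law of round $\ell$ described above, on $A_\ell$ the inner expectations equal $\sum_{u\in\cpoints{\ell}}\frac{\beta_u}{n-\ell+1} d^*_u$ and $\sum_{u\in\cpoints{\ell}}\frac{\beta_u}{n-\ell+1}$, respectively. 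Hence it suffices to show, for every realization of $\mathcal{F}_{\ell-1}$ on which $A_\ell$ holds,
\[
|\cpoints{\ell}|\sum_{u\in\cpoints{\ell}}\beta_u d^*_u \;\le\; \Bigl(\sum_{u\in\cpoints{\ell}}\beta_u\Bigr)\Bigl(\sum_{u\in\cpoints{\ell}} d^*_u\Bigr),
\]
since then the $\ell$-th term on the left is at most the $\ell$-th term on the right, and summing over $\ell$ finishes the proof.

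The displayed inequality is exactly Chebyshev's sum inequality for the oppositely-ordered sequences $(d^*_u)$ and $(\beta_u)$: expanding $\sum_{u,u'\in\cpoints{\ell}}(d^*_u - d^*_{u'})(\beta_u - \beta_{u'})$ gives $2|\cpoints{\ell}|\sum_u \beta_u d^*_u - 2(\sum_u d^*_u)(\sum_u \beta_u)$, and by the monotonicity each summand on the left is a product of two factors of opposite sign, so the whole sum is $\le 0$. I expect the only real difficulty to be the bookkeeping in the first step: one must take exactly the right conditioning — the filtration frozen just before round $T$ — so that $\cpoints{\ell}$, the open facilities, and all the probabilities $\pul{u}{\ell}$ are determined while $v_\ell$ is still uniform over the remaining points. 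That is precisely what the analysis coins were introduced to make possible, and once the conditioning is in place the remainder is the routine rearrangement estimate.
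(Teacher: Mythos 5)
Your proof is correct and follows essentially the same route as the paper: condition on the history just before round $\ell$, use that $v_\ell$ is uniform over the remaining points while the balanced-opening probability $\pul{w(u)}{\ell}$ is non-increasing in $d^*_u$, and conclude by a Chebyshev/rearrangement inequality. The only difference is bookkeeping: the paper conditions on $\{T=\ell\}$ and bounds the resulting weighted average of the $d^*_{u_j}$ by the plain average, which is the same inequality you apply in product form after keeping $\1{B_\ell}$ inside the expectation.
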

\begin{proof}

Let $\ell \in [n + 1]$. We condition on the event $\{T = \ell\}$, i.e., that the first balanced facility from $\optcluster$ is opened at online round $\ell$ (recall that $T=n+1$ means that no balanced facility is opened). Observe that for $T=n+1$, we have $\cpoints{T} = \emptyset$, and the lemma holds.\footnote{For $T=n+1$, we define $d^*_{v_T} = 0$.} 

For $\ell \leq n$, we also condition on the online sequence until round $\ell$, and on the set of open facilities at the beginning of online round $\ell$, i.e., on $\bm{v}_{\ell-1} = (v_1,\dots,v_{\ell-1})$ and $F_{\ell-1}$. Let $\mathbf{x}_{\ell-1} = (x_1,\dots,x_{\ell-1})$ be a sub-sequence of the demand points and let $\CS{F} \subseteq \demands$, such that $\Pr[ T=\ell, \bm{v}_{\ell-1} =\mathbf{x}_{\ell-1}, F_{\ell-1} = \CS{F}]\neq 0$.  For brevity, we denote the event $\{ T=\ell, \bm{v}_{\ell-1} =\mathbf{x}_{\ell-1}, F_{\ell-1} = \CS{F}\}$ by $\event(\mathbf{x}_{\ell-1}, \CS{F})$. 

Note that conditioned on $\event(\mathbf{x}_{\ell-1}, \CS{F}) $, the random variable $\pul{u}{\ell}$ is fixed for all $u \in \demands$, and gets the value  $\pful{\CS{F}}{u}{\ell} = \min\{d(\CS{F}, u), 1\}$. 
Likewise, the set $\cpoints{\ell}$ is also fixed, and gets the value $\CS{Y} = \left( \CS{U} \setminus \{x_1,\dots,x_{\ell-1}\} \right) \cap \optcluster$. Let $u_1,\dots,u_k$ be the demand points in $\CS{Y}$ ordered by their distance from $\optcenter$, i.e., $d^*_{u_1} \leq d^*_{u_2} \leq \dots \leq d^*_{u_k}$. Note that $k = | \CS{Y} | $. For $1\leq  j\leq k$, let $w_j = \argmin_{1 \leq  i \leq j} \{ \pful{\CS{F}}{u_i}{\ell}\}$. Note that $w_j$ is the value of $w(v_\ell)$ (defined in~\eqref{def:w_v_l}) when $u_j$ arrives at round $\ell$, i.e., when $v_\ell = u_j$. We have 
\begin{align}
    \Pr\left[\given{v_T = u_j}{\event(\mathbf{x}_{\ell-1}, \CS{F}) }\right] 
    &= \frac{ \Pr\left[\given{v_T = u_j, T = \ell}{\bm{v}_{\ell-1} =\mathbf{x}_{\ell-1}, 
    F_{\ell-1} = \CS{F}}\right] } { \Pr\left[\given{T = \ell} {\bm{v}_{\ell-1} =\mathbf{x}_{\ell-1}, 
    F_{\ell-1}} = \CS{F}\right] }.\label{eq:conditional_explanation}
\end{align}
Since each remaining demand point in $\demands$ is equally likely to arrive at round $\ell$, it holds that $u_j$ arrives at round $\ell$ with probability $\frac{1}{n-\ell+1}$, and when $v_\ell = u_j$, a balanced facility is opened at $u_j$ with probability $\pful{\CS{F}}{w_j}{\ell}$ (by the definition of balanced opening of a facility). Hence, $\Pr[v_T = u_j, T = \ell \mid \bm{v}_{\ell-1} = \mathbf{x}_{\ell-1}, F_{\ell-1} = \CS{F}] = \frac{1}{n-\ell+1} \cdot \pful{\CS{F}}{w_j}{\ell}$
and $\Pr\left[\given{T = \ell} {\bm{v}_{\ell-1} =\mathbf{x}_{\ell-1}, F_{\ell-1}}\right] = \sum_{i=1}^{k} \frac{1}{n-\ell+1} \cdot \pful{\CS{F}}{w_i}{\ell}$, therefore, by substituting the last expressions in the numerator and denominator of Equation~\eqref{eq:conditional_explanation}, we obtain
\begin{align*}
    \Pr\left[\given{v_T = u_j}{\event(\mathbf{x}_{\ell-1}, \CS{F}) }\right] = \frac{\pful{\CS{F}}{ w_j}{\ell}} {\sum_{i=1}^{k} \pful{\CS{F}}{w_i}{\ell}}.
\end{align*}
Now since $d^{*}_{u_1} \leq \cdots \leq d^*_{u_k}$, and $\pful{\CS{F}}{w_1}{\ell} \geq \pful{\CS{F}}{w_2}{\ell} \geq \cdots \geq \pful{\CS{F}}{w_k}{\ell}$, we get that
\begin{align*}
    \E{\given{|\cpoints{T}| \cdot d^*_{v_T}}{\event(\mathbf{x}_{\ell-1}, \CS{F})}} &= k \cdot \E{\given{ d^*_{v_T}}{\event(\mathbf{x}_{\ell-1}, \CS{F})}} \\ &= k \cdot \sum_{j=1}^{k}  d^*_{u_j}\frac{\pful{\CS{F}}{ w_j}{\ell}} {\sum_{i=1}^{k} \pful{\CS{F}}{w_i}{\ell}} \leq k \cdot \frac{1}{k} \sum_{j=1}^{k}  d^*_{u_j} = \sum_{u \in \CS{Y}}  d^*_{u}.
\end{align*}
To conclude the proof we take the expectation over $T$, $\bm{v}_{T-1}$ and $F_{T-1}$, and get that $\E{|\cpoints{T}| \cdot d^*_{v_T}} \leq \E{\sum_{u \in \CS{C}_T}  d^*_{u} }$.
\end{proof}

We now prove an upper bound for the demand points in $\uptocpoints{T}$. 
\begin{lemma}\label{lem:until_opening}
$\E{\sum_{u \in \uptocpoints{T}} \coin{u}} \leq 1 + 2 \cdot  \E{\sum_{u \in \uptocpoints{T}} d^*_{u}}$
\end{lemma}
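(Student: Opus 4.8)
The plan is to decompose the opening probability at each round into the part ``captured'' by the balanced-opening mechanism and the remaining ``imbalanced excess'', and to bound the two contributions separately. For $\ell\in[n]$ write $\1{v_\ell\in\optcluster}\,\coin{v_\ell}=q_\ell+r_\ell$, where $q_\ell=\1{v_\ell\in\optcluster}\,\pul{w(v_\ell)}{\ell}$ and $r_\ell=\1{v_\ell\in\optcluster}\bigl(\coin{v_\ell}-\pul{w(v_\ell)}{\ell}\bigr)$; note that $r_\ell\geq 0$ since $v_\ell\in\closer{\ell}{d^*_{v_\ell}}$ implies $\pul{w(v_\ell)}{\ell}\leq\pul{v_\ell}{\ell}=\coin{v_\ell}$, and that $q_\ell$ is exactly the conditional probability, given the history and the identity of $v_\ell$, that a balanced facility from $\optcluster$ is opened at round $\ell$. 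Setting $q_{n+1}=r_{n+1}=0$ (the dummy $v_{n+1}$ is not in $\optcluster$), we have $\sum_{u\in\uptocpoints{T}}\coin{u}=\sum_{\ell=1}^{T}(q_\ell+r_\ell)$, so it suffices to prove $\E{\sum_{\ell=1}^{T}q_\ell}\leq 1$ and $\E{\sum_{\ell=1}^{T}r_\ell}\leq 2\,\E{\sum_{u\in\uptocpoints{T}}d^*_u}$.

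For the first inequality I would use an expected-waiting-time argument. Let $\mathcal H_\ell$ denote the history that reveals the arrivals $v_1,\dots,v_\ell$, the facility sets $F_0,\dots,F_{\ell-1}$, and all algorithm/analysis coin flips of rounds $1,\dots,\ell-1$, and let $B_\ell$ be the indicator that a balanced facility from $\optcluster$ is opened at round $\ell$, so that $T$ is the first $\ell$ with $B_\ell=1$ (and $n+1$ if none). The event $\{T\geq\ell\}$ is $\mathcal H_\ell$-measurable, while the round-$\ell$ algorithm and analysis coins are independent of $\mathcal H_\ell$ and have $\mathcal H_\ell$-measurable parameters; hence $\E{\1{B_\ell=1}\mid\mathcal H_\ell}=\1{v_\ell\in\optcluster}\cdot\coin{v_\ell}\cdot\frac{\pul{w(v_\ell)}{\ell}}{\coin{v_\ell}}=q_\ell$ (with the convention that a zero opening probability gives $B_\ell=0$ and $q_\ell=0$). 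Therefore $\E{q_\ell\,\1{T\geq\ell}}=\E{\1{T\geq\ell}\,\E{\1{B_\ell=1}\mid\mathcal H_\ell}}=\E{\1{B_\ell=1}\,\1{T\geq\ell}}$, and since $B_\ell=1$ together with $T\geq\ell$ forces $T=\ell$, summing over $\ell\in[n]$ gives $\E{\sum_{\ell=1}^{n}q_\ell\1{T\geq\ell}}=\Pr[T\leq n]\leq 1$. As $\sum_{\ell=1}^{T}q_\ell=\sum_{\ell=1}^{n}q_\ell\1{T\geq\ell}$ (also when $T=n+1$, using $q_{n+1}=0$), this yields $\E{\sum_{\ell=1}^{T}q_\ell}\leq 1$.

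For the second inequality it suffices to prove the pointwise bound $r_\ell\leq\1{v_\ell\in\optcluster}\cdot 2d^*_{v_\ell}$; summing over $\ell\leq T$ and taking expectations then gives $\E{\sum_{\ell=1}^{T}r_\ell}\leq 2\,\E{\sum_{u\in\uptocpoints{T}}d^*_u}$. Suppose $v_\ell\in\optcluster$ and set $a=d(F_{\ell-1},v_\ell)$ and $b=d(F_{\ell-1},w(v_\ell))$, so that $r_\ell=\min\{a,1\}-\min\{b,1\}$. If $b\geq 1$ then $\min\{b,1\}=1\leq\min\{a,1\}$ forces $r_\ell=0$; if $b<1$ then $r_\ell\leq a-b\leq d(v_\ell,w(v_\ell))\leq d^*_{v_\ell}+d^*_{w(v_\ell)}\leq 2d^*_{v_\ell}$, using the triangle inequality and the fact that $w(v_\ell)\in\closer{\ell}{d^*_{v_\ell}}$ implies $d^*_{w(v_\ell)}\leq d^*_{v_\ell}$. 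Combining the two inequalities proves the lemma.

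I expect the main obstacle to be pinning down the conditioning in the first part: one must argue carefully that $\{T\geq\ell\}$ is determined by information available before the round-$\ell$ coins are flipped, so that $q_\ell$ is genuinely the conditional probability of a balanced opening at round $\ell$ on that event and the telescoping identity $\sum_\ell\Pr[T=\ell]\leq 1$ goes through. The second part is a routine triangle-inequality estimate.
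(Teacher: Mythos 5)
Your proof is correct, and it takes a genuinely different route from the paper's. The paper proves this bound by a downward induction on the round index: it defines $P_\ell = \sum_{u \in \tcpoints{\ell}} \coin{u}$, conditions on $(\bm{v}_{\ell-1}, F_{\ell-1}, v_\ell)$, and at each round splits into the balanced/imbalanced/no-opening cases, pushing the inductive hypothesis through the resulting conditional expectations. You instead decompose the opening probability at each round into the balanced part $q_\ell = \1{v_\ell\in\optcluster}\pul{w(v_\ell)}{\ell}$ and the excess $r_\ell$, bound $\E{\sum_{\ell\leq T} q_\ell}$ by the optional-stopping/waiting-time identity $\sum_{\ell\leq n}\E{q_\ell\1{T\geq\ell}} = \Pr[T\leq n]\leq 1$ (your measurability bookkeeping is right: $\{T\geq\ell\}$ and the coin parameters are determined before the round-$\ell$ coins are flipped, since $\cpoints{\ell}$ is fixed by $v_1,\dots,v_{\ell-1}$ once the multiset $\demands$ is fixed), and bound $r_\ell \leq 2d^*_{v_\ell}$ pointwise. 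The pointwise estimate is the same key inequality as the paper's (the paper routes the triangle inequality through $\optcenter$, you go directly through $d(v_\ell,w(v_\ell))$ — identical content), but your first half replaces the entire induction with a two-line martingale-style argument; in effect you rigorize directly the informal "expected waiting time" discussion the paper gives before its formal proof. What the paper's induction buys is a template that is reused almost verbatim for the generic $\rofl$ variant (Lemma 4.4), but your decomposition generalizes there just as easily (replace $2d^*$ by $2qd^*$), so if anything your argument is the more economical one. One cosmetic nit: in the case $d(F_{\ell-1},w(v_\ell))\geq 1$ the cleanest phrasing is that $\min\{a,1\}\leq 1=\min\{b,1\}\leq\min\{a,1\}$ forces $r_\ell=0$; your wording is slightly tangled but the conclusion is right given $r_\ell\geq 0$, which you established.
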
 

\begin{proof}
For $1\leq \ell \leq n$, let $T_\ell \geq \ell$ be the first online round (from round $\ell$ onward) in which a balanced facility from $\optcluster$ is opened by the algorithm, let $\tcpoints{\ell} = \{v_\ell,\dots,v_{T_\ell}\} \cap \optcluster$, and let $P_\ell = \sum_{u \in \tcpoints{\ell}}\coin{u}$. Observe that $T_1 = T$ and so $\tcpoints{1} = \uptocpoints{T}$. Therefore, to prove the  statement of the lemma we need to show that $\E{ P_1 } \leq 1 + 2 \cdot  \E{\sum_{u \in \tcpoints{1}} d^*_{u}}$.

We condition on the online sequence until round $\ell$, and on the set of open facilities at the beginning of online round $\ell$, i.e., on $\bm{v}_{\ell-1} = (v_1,\dots,v_{\ell-1})$ and $F_{\ell-1}$.
We prove by downwards induction on $\ell$ that for any $\CS{F} \subseteq \demands$ and any sub-sequence of the demand points $\mathbf{x}_{\ell-1} = (x_1,\dots,x_{\ell-1})$ such that $\Pr[\bm{v}_{\ell-1} = \mathbf{x}_{\ell-1} ,F_{\ell-1} = \CS{F}] \neq 0$, it holds that $\E{\given{P_\ell}{\bm{v}_{\ell-1} = \mathbf{x}_{\ell-1} ,F_{\ell-1} = \CS{F}}} \leq 1 + 2  \E{\given{\sum_{u \in \tcpoints{\ell}} d^*_{u}}{ \bm{v}_{\ell-1} = \mathbf{x}_{\ell-1} ,F_{\ell-1} = \CS{F}} }$.

For $\ell = n$, we have $P_n = \coin{v_n}$ if $v_n \in \optcluster$ and $0$ otherwise. Since $\coin{v_n} \leq 1$, we have for any $\mathbf{x}_{n-1}$ and $\CS{F}$ (such that $\Pr[\bm{v}_{n-1} = \mathbf{x}_{n-1}, F_{n-1} = \CS{F}] \neq 0$), that $\E{\given{P_n}{\bm{v}_{n-1} = \mathbf{x}_{n-1}, F_{n-1} = \CS{F}}} \leq 1$.

Now let $\ell < n$. Let $\CS{F} \subseteq \demands$, and let $\mathbf{x}_{\ell-1} = (x_1,\dots,x_{\ell-1})$ be a sub-sequence of the demand points of length $\ell-1$. As in the proof of Lemma~\ref{lem:exp_v_T}, conditioned on $\{F_{\ell-1} = \CS{F}, \bm{v}_{\ell-1} = \mathbf{x}_{\ell-1}\}$, the random variable $\pul{u}{\ell}$ is fixed for all $u \in \demands$, and gets the value $\pful{\CS{F}}{u}{\ell} = \min\{d(\CS{F},u), 1\}$. 
Also $\cpoints{\ell}$ gets the value $\CS{Y} = \left( \CS{U} \setminus \{x_1,\dots,x_{\ell-1}\} \right) \cap \optcluster$. Let $u_1,\dots,u_k$ be the demand points in $\CS{Y}$ ordered by their distance from $\optcenter$, i.e., $d^*_{u_1} \leq d^*_{u_2} \leq \dots \leq d^*_{u_k}$, and for $1\leq  j\leq k$, let $w_j = \argmin_{1 \leq  i \leq j} \{ \pful{\CS{F}}{u_i}{\ell}\}$. As before, note that $w_j$ is the value of $w(v_\ell)$ (defined in~\eqref{def:w_v_l}) when $u_j$ arrives at round $\ell$, i.e., when $v_\ell = u_j$. 

In what follows, we further condition on $v_\ell$, and show that the induction step holds for all possible values of $v_\ell$. We distinguish between two cases, $v_\ell \in \CS{Y}$ and $v_\ell \notin \CS{Y}$. We begin with the case $v_\ell \in \CS{Y}$.

For $u_j \in \CS{Y}$, let $\balanced{u_j}$ be the event that a balanced facility is opened at $u_j$. Likewise, let $\imbalanced{u_j}$ be the event that an imbalanced facility is opened at $u_j$, and let $\nofacility{u_j}$ be the event that no facility is opened at $u_j$. We have
\begin{align}
\begin{split}
    & \E{\given{P_\ell}{	
        \begin{array}{c}
            v_{\ell} = u_j,\\
            \bm{v}_{\ell-1} = \mathbf{x}_{\ell-1},\\
            F_{\ell-1} = \CS{F}
        \end{array}
        }
    }\\ 
    & \quad = \Pr\left[\given{\balanced{u_j}}
        { 
            \begin{array}{c}
                v_{\ell} = u_j,\\
                \bm{v}_{\ell-1} = \mathbf{x}_{\ell-1},\\
                F_{\ell-1} = \CS{F}
            \end{array} 
        } \right] \cdot \pful{\CS{F}}{u_j}{\ell} \\
        & \quad \quad  + \Pr\left[\given{\imbalanced{u_j}}
        {
            \begin{array}{c}
                v_{\ell} = u_j,\\
                \bm{v}_{\ell-1} = \mathbf{x}_{\ell-1},\\
                F_{\ell-1} = \CS{F}
            \end{array}
        } \right] \left( \pful{\CS{F}}{u_j}{\ell} + \E{\given{P_{\ell+1}}{ 	
            \begin{array}{c}
                 v_\ell = u_j, \\
                 \bm{v}_{\ell-1} = \mathbf{x}_{\ell-1}, \\
                 F_{\ell} = \CS{F} \cup \{u_j\}
            \end{array} } } \right) \\
        & \quad \quad + \Pr\left[\given{\nofacility{u_j}}
        {
            \begin{array}{c}
                v_{\ell} = u_j,\\
                \bm{v}_{\ell-1} = \mathbf{x}_{\ell-1},\\
                F_{\ell-1} = \CS{F}
            \end{array}
        } \right] \left( \pful{\CS{F}}{u_j}{\ell} +  \E{\given{P_{\ell+1}}{	
            \begin{array}{c}
                 v_\ell = u_j, \\
                 \bm{v}_{\ell-1} = \mathbf{x}_{\ell-1}, \\
                 F_{\ell} = \CS{F}
            \end{array} }
        } \right).
\end{split}\label{eq:p_l_explained}
\end{align}

By the definition of balanced opening of a facility, the (conditional) probabilities of the events $\balanced{u_j}$, $\imbalanced{u_j}$ and $\nofacility{u_j}$ are $\pful{\CS{F}}{w_j}{\ell}$, $\left(\pful{\CS{F}}{u_j}{\ell} - \pful{\CS{F}}{w_j}{\ell}\right)$ and $\left(1-  \pful{\CS{F}}{u_j}{\ell}\right)$, respectively. By substituting these probabilities in Equation~\eqref{eq:p_l_explained} and using the induction hypothesis (twice) and get that
\begin{align}
\begin{split}
    & \E{\given{P_\ell}{ 
        \begin{array}{c}
            v_{\ell} = u_j,\\
            \bm{v}_{\ell-1} = \mathbf{x}_{\ell-1},\\
            F_{\ell-1} = \CS{F}
        \end{array}
        }
    } \\
    & \quad \leq \pful{\CS{F}}{u_j}{\ell} + \left(\pful{\CS{F}}{u_j}{\ell} - \pful{\CS{F}}{w_j}{\ell}\right) \left(1 + 2 \E{\given{\sum_{u \in \tcpoints{\ell+1}} d^*_{u} }{
            \begin{array}{c}
                 v_\ell = u_j, \\
                 \bm{v}_{\ell-1} = \mathbf{x}_{\ell-1}, \\
                 F_{\ell} = \CS{F} \cup \{u_j\}
            \end{array} } }\right) \\
    &  \quad \quad +\left(1-  \pful{\CS{F}}{u_j}{\ell}\right) \left( 1+ 2\E{\given{\sum_{ u \in \tcpoints{\ell+1}} d^*_{u}}{
            \begin{array}{c}
                 v_\ell = u_j, \\
                 \bm{v}_{\ell-1} = \mathbf{x}_{\ell-1}, \\
                 F_{\ell} = \CS{F}
            \end{array} }
        } \right) \\
    &\quad  = 1 + \left(\pful{\CS{F}}{u_j}{\ell} - \pful{\CS{F}}{w_j}{\ell}\right)  \\
    & \quad \quad + 2\left(\pful{\CS{F}}{u_j}{\ell} - \pful{\CS{F}}{w_j}{\ell}\right)  \E{\given{\sum_{u \in \tcpoints{\ell+1}} d^*_{u} }{
            \begin{array}{c}
                 v_\ell = u_j, \\
                 \bm{v}_{\ell-1} = \mathbf{x}_{\ell-1}, \\
                 F_{\ell} = \CS{F} \cup \{u_j\}
            \end{array} } } \\
    & \quad \quad  + 2\left(1-  \pful{\CS{F}}{u_j}{\ell}\right) \E{\given{\sum_{u \in \tcpoints{\ell+1}} d^*_{u}}{
            \begin{array}{c}
                 v_\ell = u_j, \\
                 \bm{v}_{\ell-1} = \mathbf{x}_{\ell-1}, \\
                 F_{\ell} = \CS{F}
            \end{array} }
        }.
\end{split}\label{eq:exp_round_l_2}
\end{align}

We now upper bound the term $\left(\pful{\CS{F}}{u_j}{\ell} - \pful{\CS{F}}{w_j}{\ell}\right)$. Let $d(\CS{F}, \optcenter)$ be the distance between $\optcenter$ ($\OPT$'s center of $\optcluster$) and the closest open facility in $\CS{F}$. By the triangle inequality, we have $d(\CS{F}, u_j) \leq d^*_{u_j} + d(\CS{F}, \optcenter)$ and $d(\CS{F}, \optcenter) \leq  d(\CS{F}, w_j) + d^*_{w_j} \leq d(\CS{F}, w_j) + d^*_{u_j}$. Hence, 
$d(\CS{F}, u_j) \leq 2d^*_{u_j} + d(\CS{F}, w_j)$, and so 
\begin{align}
    d(\CS{F},u_j) - d(\CS{F}, w_j) \leq  2d^*_{u_j} .\label{eq:triangle_dist}
\end{align}

Now since $d({\CS{F}},u_j) \geq d(\CS{F}, w_j)$, $\pful{\CS{F}}{u_j}{\ell} = \min\{d({\CS{F}},u_j),1\}$ and $\pful{\CS{F}}{w_j}{\ell} = \min\{d({\CS{F}},w_j),1\}$, it holds that $\pful{\CS{F}}{u_j}{\ell} - \pful{\CS{F}}{w_j}{\ell} \leq d(\CS{F},u_j) - d(\CS{F}, w_j) $. To see this, observe that if both $d(\CS{F},u_j) \leq 1$ and $d(\CS{F},w_j) \leq 1$, then  $\pful{\CS{F}}{u_j}{\ell} = d(\CS{F},u_j)$ and $\pful{\CS{F}}{w_j}{\ell} = d(\CS{F},w_j) $. If both $d(\CS{F},u_j)  > 1$ and $d(\CS{F},w_j)  > 1$ then $\pful{\CS{F}}{u_j}{\ell} - \pful{\CS{F}}{w_j}{\ell} = 0 \leq d(\CS{F},u_j) - d(\CS{F}, w_j)$, and if $ d(\CS{F},u_j) > 1$ and $ d(\CS{F},w_j) \leq 1$, then  $\pful{\CS{F}}{u_j}{\ell} - \pful{\CS{F}}{w_j}{\ell} = 1 - d(\CS{F},w_j) <  d(\CS{F},u_j) - d(\CS{F},w_j)$. Hence, we have
\begin{align}
\pful{\CS{F}}{u_j}{\ell} - \pful{\CS{F}}{w_j}{\ell} \leq d(\CS{F},u_j) - d(\CS{F}, w_j) \leq  2d^*_{u_j}. \label{eq:triangle}
\end{align}

Now, observe that 
\begin{align}
\begin{split}
    & \E{\given{\sum_{u \in \tcpoints{\ell}} d^*_{u}} {
            \begin{array}{c}
                 v_\ell = u_j, \\
                 \bm{v}_{\ell-1} = \mathbf{x}_{\ell-1}, \\
                 F_{\ell} = \CS{F}
            \end{array}
    } } \\ 
    & \quad =  \pful{\CS{F}}{w_j}{\ell}  \cdot d^*_{u_j} + \left( \pful{\CS{F}}{u_j}{\ell}  - \pful{\CS{F}}{w_j}{\ell}  \right) 
    \left( 
        d^*_{u_j} +
        \E{ \given{\sum_{u \in \tcpoints{\ell+1}} d^*_{u}} {
                \begin{array}{c}
                 v_\ell = u_j, \\
                 \bm{v}_{\ell-1} = \mathbf{x}_{\ell-1}, \\
                 F_{\ell} = \CS{F} \cup \{u_j\}
            \end{array}
        } } 
    \right) \\
    & \quad \quad + \left(1 -\pful{\CS{F}}{u_j}{\ell}   \right) 
    \left( d^*_{u_j} + \E{ \given{\sum_{u \in \tcpoints{\ell+1}} d^*_{u}} { 
            \begin{array}{c}
                 v_\ell = u_j, \\
                 \bm{v}_{\ell-1} = \mathbf{x}_{\ell-1}, \\
                 F_{\ell} = \CS{F}
            \end{array}
    }} \right) \\
    &\quad = d^*_{u_j} + \left( \pful{\CS{F}}{u_j}{\ell}  - \pful{\CS{F}}{w_j}{\ell}  \right) 
    \E{ \given{\sum_{u \in \tcpoints{\ell+1}} d^*_{u}} {
                \begin{array}{c}
                 v_\ell = u_j, \\
                 \bm{v}_{\ell-1} = \mathbf{x}_{\ell-1}, \\
                 F_{\ell} = \CS{F} \cup \{u_j\}
            \end{array}
        } }   \\
    & \quad \quad + \left(1 - \pful{\CS{F}}{u_j}{\ell}   \right) 
      \E{ \given{\sum_{u \in \tcpoints{\ell+1}} d^*_{u}} { 
            \begin{array}{c}
                 v_\ell = u_j, \\
                 \bm{v}_{\ell-1} = \mathbf{x}_{\ell-1}, \\
                 F_{\ell} = \CS{F}
            \end{array}
    }}.
\end{split}\label{eq:exp_sum_dist}
\end{align}
Thus, by substituting~\eqref{eq:triangle} in~\eqref{eq:exp_round_l_2} and using Equation~\eqref{eq:exp_sum_dist}, we obtain
\begin{align}
\begin{split}
    \E{\given{P_\ell} { 
        \begin{array}{c}
            v_\ell = u_j, \\
            \bm{v}_{\ell-1} = \mathbf{x}_{\ell-1}, \\
            F_{\ell} = \CS{F}
        \end{array}
        } 
    } 
    &\leq  1 + 2  \E{\given{\sum_{u \in \tcpoints{\ell}} d^*_{u}} {        \begin{array}{c}
            v_\ell = u_j, \\
            \bm{v}_{\ell-1} = \mathbf{x}_{\ell-1}, \\
            F_{\ell} = \CS{F}
        \end{array}
        }}.
\end{split}
\end{align}

For $v_\ell = u \in \CS{U} \setminus (\CS{Y} \cup \{x_1,\dots,x_{\ell-1}\} )$, we can consider only whether a facility is opened at $u$ or not. We have
\begin{align}
\begin{split}
    \E{\given{P_\ell}{\begin{array}{c}
                 v_\ell = u, \\
                 \bm{v}_{\ell-1} = \mathbf{x}_{\ell-1}, \\
                 F_{\ell} = \CS{F}
            \end{array}}} 
    &= \pful{\CS{F}}{u}{\ell}
        \E{\given{P_{\ell+1}} { 
            \begin{array}{c}
                 v_\ell = u, \\
                 \bm{v}_{\ell-1} = \mathbf{x}_{\ell-1}, \\
                 F_{\ell} = \CS{F} \cup \{u \}
            \end{array}}} \\
            & \quad + (1-\pful{\CS{F}}{u}{\ell}) 
        \E{\given{P_{\ell+1}}{
            \begin{array}{c}
                 v_\ell = u, \\
                 \bm{v}_{\ell-1} = \mathbf{x}_{\ell-1}, \\
                 F_{\ell} = \CS{F}
            \end{array}
        }} \\
    &\leq 1 + 2 \cdot \pful{\CS{F}}{u}{\ell} 
        \E{ \given{\sum_{u \in \tcpoints{\ell+1}} d^*_{u}} {
            \begin{array}{c}
                 v_\ell = u, \\
                 \bm{v}_{\ell-1} = \mathbf{x}_{\ell-1}, \\
                 F_{\ell} = \CS{F} \cup \{u \}
            \end{array}
        }} \\
    & \quad + 2 \cdot (1-\pful{\CS{F}}{u}{\ell}) \E{ \given{\sum_{u \in \tcpoints{\ell+1}} d^*_{u}} {
            \begin{array}{c}
                 v_\ell = u, \\
                 \bm{v}_{\ell-1} = \mathbf{x}_{\ell-1}, \\
                 F_{\ell} = \CS{F}
            \end{array}
    }} \\
    &\leq 1 + 2 \cdot \E{\given{\sum_{u \in \tcpoints{\ell}} d^*_{u}} {
            \begin{array}{c}
                 v_\ell = u, \\
                 \bm{v}_{\ell-1} = \mathbf{x}_{\ell-1}, \\
                 F_{\ell} = \CS{F}
            \end{array}
    } },
\end{split}\label{eq:not_in_C}
\end{align}
where in the first inequality we used the induction hypothesis, and the last inequality follows by the fact that, similarly to Equation~\eqref{eq:exp_sum_dist}, we have
\begin{align}
\begin{split}
    \E{\given{\sum_{u \in \tcpoints{\ell}} d^*_{u}} {
            \begin{array}{c}
                 v_\ell = u, \\
                 \bm{v}_{\ell-1} = \mathbf{x}_{\ell-1}, \\
                 F_{\ell} = \CS{F}
            \end{array}
    }} 
    &= \pful{\CS{F}}{u}{\ell}  \E{ \given{\sum_{u \in \tcpoints{\ell+1}} d^*_{u}} { 
            \begin{array}{c}
                 v_\ell = u, \\
                 \bm{v}_{\ell-1} = \mathbf{x}_{\ell-1}, \\
                 F_{\ell} = \CS{F} \cup \{u\}
            \end{array}
    }} \\
    & \quad + \left(1 - \pful{\CS{F}}{u}{\ell}  \right)  \E{ \given{\sum_{u \in \tcpoints{\ell+1}} d^*_{u}} {
            \begin{array}{c}
                 v_\ell = u, \\
                 \bm{v}_{\ell-1} = \mathbf{x}_{\ell-1}, \\
                 F_{\ell} = \CS{F}
            \end{array}
    }}.
\end{split}\label{eq:notinc_expression}
\end{align}
To conclude the inductive argument we take the expectation over $v_\ell$.
\end{proof}

We are now ready to prove the competitive-ratio of the algorithm.

\begin{theorem}\label{thm:cr_meyerson}
$\rofl[]$ is $4$-competitive.
\end{theorem}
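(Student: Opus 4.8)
The plan is to run the analysis cluster by cluster and simply assemble the four lemmas. Fix a cluster $\optcluster$ of $\OPT$ with center $\optcenter$, so that $\OPT(\optcluster)=1+\sum_{u\in\optcluster}d^*_u$. Since the optimal clusters partition $\demands$, we have $\OPT(\cI)=\sum_{\optcluster}\OPT(\optcluster)$ and, by additivity of service costs, $\ALG(\cI)=\sum_{\optcluster}\ALG(\optcluster)$; hence it suffices to prove $\E{\ALG(\optcluster)}\le 4\,\OPT(\optcluster)$ for every cluster. Recall that $\uptocpoints{T}$ and $\cpoints{T+1}$ partition $\optcluster$ ($v_T$ itself lies in $\uptocpoints{T}$, and the case $T=n+1$ is degenerate), so $\ALG(\optcluster)=\ALG(\uptocpoints{T})+\ALG(\cpoints{T+1})$.

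First I would upgrade Lemma~\ref{lem:price_factor} from a single demand to these two random subsets. For $u\in\optcluster$ arriving at round $\ell$, the bound $\E{\given{\ALG(u)}{\cdot}}\le 2\coin{u}$ from the proof of Lemma~\ref{lem:price_factor} holds conditionally on the entire history up to (but not including the coin flips of) round $\ell$; on that history both $\coin{u}$ and the indicator of $\{u\in\uptocpoints{T}\}$ are already determined (the latter because $u=v_\ell\in\uptocpoints{T}$ precisely when no balanced facility from $\optcluster$ has been opened before round $\ell$). Summing the conditional inequality over $u\in\optcluster$ therefore yields $\E{\ALG(\uptocpoints{T})}\le 2\,\E{\sum_{u\in\uptocpoints{T}}\coin{u}}$ and, likewise, $\E{\ALG(\cpoints{T+1})}\le 2\,\E{\sum_{u\in\cpoints{T+1}}\coin{u}}$.

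Next I bound the two $\coin{}$-sums. For the part after round $T$, chain Corollary~\ref{cor:C_Tplus1} with Lemma~\ref{lem:exp_v_T}, being careful to keep the full negative term: $\E{(|\cpoints{T}|-2)d^*_{v_T}}=\E{|\cpoints{T}|\,d^*_{v_T}}-2\,\E{d^*_{v_T}}\le \E{\sum_{u\in\cpoints{T}}d^*_u}-2\,\E{d^*_{v_T}}$, and since $\cpoints{T}=\cpoints{T+1}\cup\{v_T\}$ we get $2\,\E{\sum_{u\in\cpoints{T}}d^*_u}-2\,\E{d^*_{v_T}}=2\,\E{\sum_{u\in\cpoints{T+1}}d^*_u}$, so Corollary~\ref{cor:C_Tplus1} gives $\E{\sum_{u\in\cpoints{T+1}}\coin{u}}\le 2\,\E{\sum_{u\in\cpoints{T+1}}d^*_u}$. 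For the part up to round $T$, Lemma~\ref{lem:until_opening} gives directly $\E{\sum_{u\in\uptocpoints{T}}\coin{u}}\le 1+2\,\E{\sum_{u\in\uptocpoints{T}}d^*_u}$.

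Plugging both into the bound of the second paragraph and using that $\uptocpoints{T}\sqcup\cpoints{T+1}=\optcluster$:
\begin{align*}
\E{\ALG(\optcluster)}
&\le 2\Bigl(1+2\,\E{\textstyle\sum_{u\in\uptocpoints{T}}d^*_u}\Bigr)+4\,\E{\textstyle\sum_{u\in\cpoints{T+1}}d^*_u}\\
&= 2+4\sum_{u\in\optcluster}d^*_u \;\le\; 4+4\sum_{u\in\optcluster}d^*_u \;=\; 4\,\OPT(\optcluster),
\end{align*}
the last inequality being just $2\le 4=4f$. Summing over all clusters gives $\E{\ALG(\cI)}\le 4\,\OPT(\cI)$. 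I expect the only genuinely delicate point — the "main obstacle" — to be the bookkeeping in the third paragraph: one must \emph{not} discard the $-2\,\E{d^*_{v_T}}$ slack from Lemma~\ref{lem:exp_v_T}, because it is exactly what converts the overlapping pair $(\uptocpoints{T},\cpoints{T})$ into the genuine partition $(\uptocpoints{T},\cpoints{T+1})$; once that is arranged, the additive $+2$ coming from the two waiting-time "$+1$"s is absorbed for free by the unit facility cost hidden in $4\,\OPT(\optcluster)$.
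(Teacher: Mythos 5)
Your proposal is correct and takes essentially the same route as the paper's proof: the same per-cluster decomposition into $\uptocpoints{T}$ and $\cpoints{T+1}$, the same four lemmas, and the same use of the $-2\,\E{d^*_{v_T}}$ slack from Lemma~\ref{lem:exp_v_T} to remove the double-counting of $v_T$ (the paper merely groups $d^*_{v_T}$ with $\uptocpoints{T-1}$ and $\cpoints{T}$ instead of with $\uptocpoints{T}$ and $\cpoints{T+1}$, a purely cosmetic difference). Your explicit conditioning argument for applying Lemma~\ref{lem:price_factor} to the random subsets is a detail the paper leaves implicit, but it does not constitute a different approach.
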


\begin{proof} 
For a cluster $\optcluster$ in $\OPT$, we have $\optcluster =  \uptocpoints{T} \cup \cpoints{T+1} $. By Lemma~\ref{lem:until_opening} and Corollary~\ref{cor:C_Tplus1} together with Lemma~\ref{lem:price_factor}, we get that
\begin{align*}
\E{\ALG\left(\optcluster\right)} &= \E{\ALG\left(\uptocpoints{T}\right)} +  \E{\ALG\left(\cpoints{T+1}\right)} \\
&\leq 2 \cdot \E{\sum_{u \in \uptocpoints{T}} \coin{u}} + 2 \cdot \E{\sum_{u \in \cpoints{T+1}} \coin{u}} \\
&\leq  2 + 4 \cdot  \E{\sum_{u \in \uptocpoints{T}} d^*_{u}} + 2 \cdot \E{ \sum_{u \in \cpoints{T}} d^*_u} + 2 \cdot \E{(|\cpoints{T}| - 2)d^*_{v_T}} \\
&= 2 + 4 \cdot  \E{\sum_{u \in \uptocpoints{T-1}} d^*_{u}}  + 4 \cdot \E{d^*_{v_T}} + 2 \cdot \E{ \sum_{u \in \cpoints{T}} d^*_u} + 2 \cdot \E{(|\cpoints{T}| - 2)d^*_{v_T}} \\
&= 2 + 4 \cdot  \E{\sum_{u \in \uptocpoints{T-1}} d^*_{u}} + 2 \cdot \E{ \sum_{u \in \cpoints{T}} d^*_u}  + 2 \cdot \E{(|\cpoints{T}|)d^*_{v_T}} \\
&\leq  2 + 4 \cdot  \E{\sum_{u \in \uptocpoints{T-1}} d^*_{u}} + 2 \cdot \E{ \sum_{u \in \cpoints{T}} d^*_u}  + 2 \cdot \E{\sum_{u \in \cpoints{T}} d^*_{u}} = 2 + 4 \cdot  \E{\sum_{u \in \optcluster} d^*_{u}}.
\end{align*}
Where in the last inequality we used Lemma~\ref{lem:exp_v_T}. Now since $\OPT(\optcluster) = 1 + \sum_{u \in \optcluster} d^*_u$, we have $\E{\ALG(\optcluster)} / \OPT(\optcluster) \leq 4$. Since this is true for each cluster $\optcluster$ in $\OPT$, we get that $\E{\ALG}/\OPT \leq 4$.
\end{proof}

We now show that our analysis of $\rofl[]$ is tight.

\begin{theorem}\label{thm:tight_4}
There is an infinite sequence of input instances $\cI_{1},\cI_{2},\dots$, with increasing number of demand points and decreasing distances between the demand points, where the competitive-ratio of $\rofl[]$ on $\cI_{k}$ approaches $4$ as $k \rightarrow \infty$.
\end{theorem}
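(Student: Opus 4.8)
\textbf{Proof proposal for Theorem~\ref{thm:tight_4}.}

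The plan is to construct a single-cluster instance where the analysis in Theorem~\ref{thm:cr_meyerson} is essentially tight at every inequality. Concretely, I would place the optimal center $\optcenter$ at a point, and take $\cI_k$ to consist of $\optcenter$ itself (at distance $0$) together with $m = m(k)$ demand points all at a common small distance $\delta = \delta(k)$ from $\optcenter$, with $\delta \to 0$ and $m\delta \to \infty$ chosen appropriately (e.g.\ $\delta = 1/k$, $m = k^2$). The optimal solution opens one facility at $\optcenter$ and serves everything, paying $\OPT(\cI_k) = 1 + m\delta$. The intuition for why $\rofl[]$ does badly: since all points are at the same tiny distance $\delta$ from $\optcenter$ and from each other (up to a factor $2$), the algorithm behaves like a sequence of near-independent coin flips with success probability $\approx \delta$ when no facility is near, so it takes roughly $1/\delta$ arrivals before the \emph{first} facility is opened, and that first opened facility is just some arbitrary demand point, not $\optcenter$; then subsequent points are at distance up to $2\delta$ from it, so the algorithm keeps paying $\Theta(\delta)$ per point both before and after, and also opens a few extra facilities.

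The key steps, in order: (1) Compute $\OPT(\cI_k) = 1 + m\delta$. (2) Lower bound $\E{\ALG(\cI_k)}$ by tracking the first facility opening. Until the first facility opens, each arriving point is at distance exactly $\delta$ from the nearest ``facility'' (there is none, so actually $d = \delta$ measured to... wait — careful: before any facility is open, $d(\emptyset, v) = \infty$, so $\coin{v_1} = \min\{\infty,1\} = 1$ and the algorithm \emph{always} opens a facility at the very first arrival). So I need the center $\optcenter$ \emph{not} to be a demand point, or I need a different gadget. The cleaner construction: make $\cI_k$ have two groups — a ``dense core'' near $\optcenter$ that forces an early opening at a bad location, realized e.g.\ by letting $\optcenter$ be a demand point too, or by using many points and exploiting that after the forced first opening at $v_1$ (at distance $\delta$ from $\optcenter$), all remaining points are at distance in $[\,0, 2\delta\,]$ from $v_1$. (3) Then for the remaining $m$ points, each has $\distalg{u} \le 2\delta$ but typically $\approx \delta$ to $2\delta$, the algorithm pays $\ge \delta$-ish per point in assignment plus occasionally opens another facility, and critically the ``waiting until a well-placed facility'' overhead and the factor-$2$ slack in Lemma~\ref{lem:price_factor} (service cost $2p - p^2 \approx 2p$ for small $p$) are both nearly tight. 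Summing, $\E{\ALG(\cI_k)} \approx 2 + 4m\delta$ (roughly: $2$ from the $\approx 1/\delta$ opening overhead doubled, and $4m\delta$ from the $2p$-per-point cost with $p\approx 2\delta$). (4) Divide: $\E{\ALG(\cI_k)}/\OPT(\cI_k) \to 4$ as $m\delta \to \infty$.

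The main obstacle is making step (2)--(3) rigorous: one must lower-bound the expected cost of $\rofl[]$, which is genuinely the harder direction since the algorithm's facility set evolves and the coin probabilities depend on the whole history. I would handle this by choosing the geometry so that the dependence is controlled — e.g.\ an ultrametric-like placement (all pairwise distances among demand points equal to $2\delta$, all distances to $\optcenter$ equal to $\delta$) so that after the first opening, every subsequent point is at distance exactly $2\delta$ from every existing facility located at a demand point (until possibly a second facility is opened), making $\coin{u} = 2\delta$ deterministically for all remaining points, and the per-point expected cost is exactly $2\delta - (2\delta)^2$, with extra facilities opened at rate $2\delta$. Then $\E{\ALG(\cI_k) \mid \text{first opening at round } 1}$ is computable in closed form as a sum of independent contributions, giving $\E{\ALG} = 1 + m(2\delta - 4\delta^2) + (\text{small correction for the very first point at }\delta)$. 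Choosing $\delta = \delta(k) \to 0$ with $m\delta \to \infty$ (so the ``$+1$'' facility term and the $\delta^2$ corrections vanish relative to $m\delta$) yields $\E{\ALG(\cI_k)}/(1 + m\delta) \to 2m\delta/(m\delta) = 2$ — which only gives $2$, not $4$. So the ultrametric is too clean; the extra factor of $2$ in the true ratio must come from the waiting-time overhead being $\Theta(1/\delta)$ points each paying $\Theta(\delta)$, i.e.\ one needs the \emph{first} facility to be hard to open, which requires $\optcenter$ to be a genuine limit point that is never itself a demand arrival forcing $\coin{}=1$ prematurely. I would therefore refine the construction: use $m$ demand points at distance $\delta$ from $\optcenter$ but arranged so that pairwise distances are also $\approx \delta$ (not $2\delta$), e.g.\ on a tiny simplex, and crucially insert the dense ``trigger'' structure so that the first opening occurs only after $\Theta(1/\delta)$ arrivals each paying $\approx\delta$ (contributing $\Theta(1)$, doubled to $2$ via Lemma~\ref{lem:price_factor}-tightness) \emph{and} the post-opening points each pay $\approx 2\delta$ via a badly-placed first facility (contributing $4m\delta$ after the price-of-randomness doubling with $d \approx 2\delta$). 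Verifying that both slacks are simultaneously tight — and that no cheaper behavior of $\rofl[]$ is possible — is the crux, and I expect it to require a careful second-moment or coupling argument to show concentration of the number of pre-opening arrivals around $1/\delta$ and of the post-opening assignment costs around $2\delta$.
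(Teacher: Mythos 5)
You had the right construction in hand and then discarded it because of an arithmetic slip. The ``ultrametric'' instance you describe --- $k$ demand points pairwise at distance $2\delta$, each at distance $\delta$ from a non-demand point $\optcenter$ --- is exactly the paper's instance. Your computation of the post-opening per-point cost is where things go wrong: when every remaining point is at distance exactly $2\delta$ from every open facility, the coin probability is $p = 2\delta$ and the expected service cost is $p\cdot 1 + (1-p)\cdot p = 4\delta - 4\delta^2$, not $(1-p)\,p = 2\delta - 4\delta^2$; you dropped the facility-opening cost of $1$ paid when the coin comes up Heads. With the correct value, $\E{\ALG(\cI_k)} = 1 + (k-1)(4\delta - 4\delta^2)$ while $\OPT(\cI_k) \leq 1 + k\delta$, and choosing $\delta$ so that $\delta \to 0$ and $k\delta \to \infty$ (the paper takes $\delta = 1/(4\sqrt{k})$) gives a ratio tending to $4$. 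Note also that your worry about history dependence is moot here: every arriving point is at distance exactly $2\delta$ from every previously opened facility no matter which facilities were opened, so the per-round costs are deterministic conditioned on nothing, and no coupling or concentration argument is needed.

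Because of the miscomputed ``$2$'', your subsequent refinement looks for the missing factor in the waiting phase before the first facility opens, and this cannot work. In the regime $k\delta \to \infty$ that you (correctly) need so that the $+1$ facility terms vanish, any pre-opening phase of $\Theta(1/\delta)$ arrivals each costing $\Theta(\delta)$ contributes only $\Theta(1)$ to the algorithm's cost, which disappears in the ratio against the $\Theta(k\delta)$ assignment cost; making $k\delta = \Theta(1)$ to keep the waiting phase relevant would reinstate $\OPT$'s facility cost in the denominator and still not yield $4$. The factor $4$ is entirely per-point: distance $2\delta$ to the (badly placed) first facility, times the price-of-randomness factor $2$ from Lemma~\ref{lem:price_factor}, against $\OPT$'s $\delta$. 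So the proposed trigger structure, the requirement that $\optcenter$ be ``hard to open near,'' and the second-moment/coupling machinery are unnecessary, and as written the proposal does not constitute a proof --- its final plan rests on unverified claims about simultaneous tightness of two slacks, one of which is asymptotically irrelevant.
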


\begin{proof}
In the instance $\cI_{k}$ the metric space consists of $k+1$ points $\{u_1,\dots,u_k, \optcenter\}$ where $d(u_i,u_j) = 2 \delta$ for all $i\neq j$ and $d(u_i, \optcenter) = \delta$ for all $1 \leq i \leq k$ and $\delta =  1/(4 \sqrt{k})$ (observe that $2 \delta < 1$ for all $k \in \mathbb{N}$). The demand points are  $\demands = \{u_1,\dots,u_k\}$. $\rofl[]$ opens a facility at $v_1$ and pays $1$. Then, for each demand $v_\ell$ that arrives at round $\ell \in \{2,\dots,k\}$ we have $d(F_{\ell-1},v_\ell) = 2\delta$, regardless of the decisions of the algorithm in previous rounds. Hence, the expected cost of the algorithm for serving $v_\ell$ is $2\delta \cdot 1 + (1-2\delta)\cdot 2\delta = 4\delta - 4\delta^2$. Hence, 
\begin{align*}
    \E{\ALG(\cI_{k})} = 1 + (k-1)(4\delta - 4\delta^2).
\end{align*}

On the other hand, $\OPT$ can open a facility at $\optcenter$ and serve each demand by $\optcenter$ at a cost of $\delta$. Thus, $\E{\OPT} \leq 1 + k\delta$. To conclude

\begin{align*}
    \frac{\E{\ALG(\cI_{k})}}{\OPT(\cI_{k})} \geq \frac{1 + (k-1)(4\delta - 4\delta^2)}{1+k\delta} = \frac{1 + (k-1)(1/\sqrt{k} - 1/k)}{1+\sqrt{k}/4},
\end{align*}
which approaches $4$ as $k \rightarrow \infty$.
\end{proof}

\section{Improving \texorpdfstring{$\rofl[]$}{DistProb}}\label{sec:improve}
In this section, we present a modified version of $\rofl[]$ with an improved competitive-ratio. Our construction in Theorem~\ref{thm:tight_4} will be useful to guide us towards the improved algorithm. Observe that on our constructed instances in Theorem~\ref{thm:tight_4}, after the first facility is opened at $v_1$, opening additional facilities at future demand points $v_2,\dots,v_n$ does not reduce the service cost. Nevertheless, $\rofl[]$ randomly opens a facility at each arriving demand point $v_\ell$ with probability $\coin{v_\ell} = 2 \delta$, which leads to an expected service cost of $4\delta- 4\delta^2$, while $\OPT$ pays only $\delta$. In this way, we get the competitive-ratio of $4$. On these instances, reducing the probability of opening a facility results in better performance.

Generally, for any $\ell \in [n]$ we modify the probability of opening a facility at a demand point $v_\ell$ at distance $\distalg{v_\ell} = d(F_{\ell-1}, v_\ell)$ and reduce it from $\min\{\distalg{v_\ell},1\}$ to some probability $g(\distalg{v_\ell})$, for $g : \mathbb{R}_{\geq 0} \rightarrow [0,1]$ (see Algorithm~\ref{alg:rofl_q} for a formal description). Observe that for $\distalg{v_\ell} \geq 1$, it is always better to open a facility (with probability $1$), and for $\distalg{v_\ell} = 0$ there is no reason to open an additional facility. Therefore, we can focus our attention on $\distalg{v_\ell} \in (0,1)$.

\IncMargin{1em}
\begin{algorithm2e}[ht]
\caption{Generic $\rofl[]$}\label{alg:rofl_q}
$F_0 \leftarrow \emptyset$\;
\For {a demand $v_{\ell}$ that arrives at round $\ell$ } {
    $\distalg{v_\ell} \leftarrow d(F_{\ell-1}, v_\ell)$\;
    $\coin{v_\ell} \leftarrow g(\distalg{v_\ell})$\;\label{line:flip_start}
    Flip a coin with probability $\coin{v_\ell}$ of Heads\; \label{line:roflq_heads}
    \uIf{Heads}{
        $F_{\ell} \leftarrow F_{\ell-1} \cup \{v_\ell\}$\;
    } \Else {
        $F_{\ell} \leftarrow F_{\ell - 1}$
    }\label{line:flip_end}

    Assign $v_\ell$ to the nearest facility in $F_\ell$\;
}
\end{algorithm2e}
\DecMargin{1em}

By our observation above, to improve upon the competitive-ratio of $4$, $g$ has to satisfy $g(\distletter) < \distletter$. Next, we construct another example to derive a lower bound on $g(\distletter)$. Concretely, we show that to get an improvement, $g$ must satisfy $g(\distletter) > \distletter/4$. These two bounds leads us to the choice of a function $g$ of the form $g(\distletter) = q \cdot \distletter$ for some $q \in (1/4,1)$.

We now derive the lower bound on $g(\distletter)$. The idea is simple: We construct a family of instances in which a very large number of demand points arrive at each point in the metric space. This way, the best approach is to open a facility at each point in the metric space. Hence, on these instances, larger $g(\distletter)$ provides better performance. 

\begin{theorem}\label{thm:f_lower_bound}
For all $\delta \in (0,1)$, there is an infinite sequence of input instances $\cI_{\delta,1},\cI_{\delta,2},\dots$, in metric spaces with uniform distances of $\delta$, where the competitive-ratio of Algorithm~\ref{alg:rofl_q} on $\cI_{\delta,k}$ approaches $1 + \delta/g(\delta) - \delta$ as $k \rightarrow \infty$.
\end{theorem}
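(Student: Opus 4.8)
The plan is to construct, for each $\delta \in (0,1)$ and each $k \in \mathbb{N}$, an instance $\cI_{\delta,k}$ living in a metric space where all pairwise distances equal $\delta$, and where the demand multiset consists of $k$ distinct points each appearing with very high multiplicity $m = m(k)$ (say $m = k$ or $m$ growing fast enough). Since $\delta < 1 = f$, opening a facility at \emph{every} occupied point costs $k$ for facilities and $0$ for assignments, so $\OPT(\cI_{\delta,k}) \le k$. In fact $\OPT$ on such an instance is essentially $k$ once $m$ is large (opening fewer facilities forces $\Omega(m)$ demands to pay $\delta$ each, which dominates the $1$ saved per unopened facility for $m$ large), so we may take $\OPT(\cI_{\delta,k}) = k$ or bound it above by $k$ — an upper bound on $\OPT$ is all we need for a competitive-ratio lower bound.

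Next I would analyze Algorithm~\ref{alg:rofl_q} on $\cI_{\delta,k}$. The key structural observation is that once a facility has been opened at a point $x$, every subsequent demand at $x$ sees $\distalg{\cdot} = 0$ (distance to the co-located facility), so it is assigned for free and opens nothing. Before the first facility is opened at $x$, a demand at $x$ sees $\distalg{\cdot} = \delta$ (distance to some facility at another point — there is always at least one open facility after the very first arrival, and all distances are $\delta$; the first arrival overall opens a facility and pays $1$). So for each point $x$, consider the sequence of demands landing at $x$ before any facility is opened at $x$: each such demand flips a coin with Heads-probability $g(\delta)$, pays $1$ on Heads (opening the facility, ending the sequence) or pays $\delta$ on Tails and continues. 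The expected cost incurred at $x$ until its facility opens is therefore a geometric-type sum: with probability $g(\delta)(1-g(\delta))^{i-1}$ the facility opens on the $i$-th demand at $x$, contributing $(i-1)\delta + 1$. Summing, the expected cost attributable to point $x$ is $1 + \delta \cdot \frac{1-g(\delta)}{g(\delta)} = 1 + \delta/g(\delta) - \delta$ (plus lower-order corrections for the finitely many demands that may land at $x$ while its facility is still unopened but the multiset runs out — these vanish as $m \to \infty$). Here I should be slightly careful about the first global arrival, which pays $1$ regardless, but this only affects one of the $k$ points and washes out in the ratio.

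Combining, $\E{\ALG(\cI_{\delta,k})} \le k \cdot (1 + \delta/g(\delta) - \delta) + o(k)$ and in fact equals this up to the boundary corrections, while $\OPT(\cI_{\delta,k}) \le k$, giving
\begin{align*}
\frac{\E{\ALG(\cI_{\delta,k})}}{\OPT(\cI_{\delta,k})} \;\ge\; \frac{k\,(1 + \delta/g(\delta) - \delta) - o(k)}{k} \;\xrightarrow[k\to\infty]{}\; 1 + \delta/g(\delta) - \delta .
\end{align*}
To make the $o(k)$ term genuinely negligible one takes the multiplicity $m$ large enough as a function of $k$ (the "infinite sequence" in the statement can be indexed so that both the number of distinct points and their multiplicities grow).

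The main obstacle, and the part requiring the most care, is the exact bookkeeping of the per-point expected cost: one must handle the truncation caused by a finite multiset (a point's facility might never get opened within its $m$ demands, in which case the series is a finite partial sum rather than the full geometric series), confirm that this truncation error is $o(1)$ per point for large $m$, and separately account for the single first-ever arrival that deterministically pays the facility cost $1$ with $\distalg{\cdot}$ undefined/$\infty$ rather than $\delta$. A clean way around most of this is to let the multiplicities tend to infinity much faster than $k$, so that with overwhelming probability every point's facility is opened long before its demands are exhausted, making the expected per-point cost exactly $1 + \delta/g(\delta) - \delta$ in the limit. I would also note in passing that we only need the \emph{upper} bound $\OPT \le k$, so I need not prove $\OPT$ equals $k$ — this sidesteps a minor optimization argument.
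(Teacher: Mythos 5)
Your proposal is correct and follows essentially the same route as the paper: the paper's instance $\cI_{\delta,k}$ is exactly your construction with multiplicity $m=k$ at each of $k$ pairwise-$\delta$-distant locations, it bounds $\OPT\le k$ by opening a facility at every location, and it computes the same per-location truncated geometric sum $\left(g(\delta)+(1-g(\delta))\delta\right)\left(1-(1-g(\delta))^{k}\right)/g(\delta)$, whose limit $1+\delta/g(\delta)-\delta$ gives the bound. Your extra care about letting multiplicities grow faster than $k$ is harmless but unnecessary, since with $m=k$ the truncation factor $1-(1-g(\delta))^{k}$ already tends to $1$.
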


\begin{proof}
Let $\delta \in (0,1)$. In the instance $\cI_{\delta, k}$ the metric space consists of $k$ points $\{c_1,\dots,c_k\}$ where $d(c_i,c_j) = \delta$ for all $1 \leq i < j \leq k$. There are $k^2$ demand points in total, $k$ demand points arrive at each location $c_i$ in the metric space. 
First, observe that $\OPT \leq k$ as one can open a facility at each point in the metric space and pay $k$. On the other hand, the algorithm first opens a facility at the location $c_{t_1}$ of the first arriving demand point $v_1$ and pays $1$. Let $c_{t_2},\dots, c_{t_k}$ be the remaining  points in the metric space. Before a facility is opened at $c_{t_j}$, the expected cost that the algorithm pays for each demand point that arrive at $c_{t_j}$ is $g(\delta) \cdot 1 + (1-g(\delta)) \cdot \delta$. Hence, the expected cost that the algorithm pays for serving the demand points at $c_{t_j}$ is $
    \sum_{i=1}^{k} \left( 1 - g(\delta) \right)^{i-1} \left( g(\delta) + (1-g(\delta)) \delta \right) = \left( g(\delta) + (1-g(\delta)) \delta \right) \left( 1- (1-g(\delta))^{k} \right) /g(\delta)$.
Therefore, we get that the competitive-ratio of the algorithm is lower bounded by
\begin{align*}
    \frac{1 + (k-1) \left( g(\delta) + (1-g(\delta)) \delta \right) \left(1- (1-g(\delta))^{k}\right)/ g(\delta)}
    {k},
\end{align*}
which approaches $1 + \delta/g(\delta) - \delta$, as $k$ approaches infinity. 
\end{proof}

Following Theorem~\ref{thm:f_lower_bound}, to improve upon a competitive-ratio of $4$, $g$ must satisfy $1 + \delta/g(\delta) - \delta < 4$ and so $g(\delta) > \delta/(3+\delta) > \delta/4$, for all $\delta \in (0,1)$.

To sum up, we choose $g$ of the form $g(\distletter) = q \cdot \distletter$. We note that to simplify our analysis, we use the continuous function $g(\distletter) = \min\{ q\cdot \distletter, 1\}$, instead of the piecewise function $g(\distletter) = q \cdot \distletter$, if $\distletter \leq 1$, and $g(\distletter) = 1 $, otherwise. This choice has no impact on the competitive-ratio, and the same techniques can be used to analyze the piecewise function and gives the same results. Yet, in practice, it is always better to use the peicewise function. We refer to Algorithm~\ref{alg:rofl_q} with the choice $g(\distletter) = \min\{ q\cdot \distletter, 1\}$ by the name $\rofl$.

From our constructed instances above with the choice $g(\distletter) = q \cdot \distletter$, we get that the competitive-ratio of $\rofl$ is at most $1 + 1/q - \delta$, and since we can choose an arbitrarily small $\delta$, we get the following.

\begin{theorem}
There is an infinite sequence of instances $\cI_{1},\cI_{2},\dots$, where the competitive-ratio of $\rofl$ on $\cI_{k}$ approaches $1+1/q$ as $k \rightarrow \infty$.
\end{theorem}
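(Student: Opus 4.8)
The plan is to specialize the construction already used in Theorem~\ref{thm:f_lower_bound} to the function $g(\distletter) = \min\{q \cdot \distletter, 1\}$ and then push the free parameter $\delta$ to zero. Concretely, I would take the same family of instances $\cI_{\delta,k}$: a metric space on $k$ points $c_1,\dots,c_k$ with all pairwise distances equal to $\delta$, and $k$ demand points colocated at each $c_i$, for a total of $k^2$ demand points. Since $2\delta$ is irrelevant here and $\delta \in (0,1)$, we have $g(\delta) = q\delta$ (choosing $\delta$ small enough that $q\delta < 1$, which is automatic since $q < 1$). Substituting $g(\delta) = q\delta$ into the competitive-ratio bound from Theorem~\ref{thm:f_lower_bound}, the limiting ratio as $k \to \infty$ becomes $1 + \delta/(q\delta) - \delta = 1 + 1/q - \delta$.

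The remaining step is a diagonalization over $\delta$: I would pick a sequence $\delta_k \to 0$ slowly enough (e.g.\ $\delta_k = 1/k$, or any sequence for which the inner limit in $k$ has already ``kicked in''), and for each index choose $k$ large enough that the competitive-ratio on $\cI_{\delta_k, k}$ is within, say, $1/k$ of its limit $1 + 1/q - \delta_k$. Relabeling this diagonal subsequence as $\cI_1, \cI_2, \dots$ yields a sequence of instances on which the competitive-ratio of $\rofl$ approaches $1 + 1/q$. One should double-check that the number of demand points is increasing and that distances can be taken decreasing along the diagonal (the latter just requires $\delta_k$ decreasing, which we have); this matches the phrasing of the analogous Theorem~\ref{thm:tight_4}, though the statement here only asserts the limit.

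The only mild subtlety—and the closest thing to an obstacle—is making the double limit rigorous: Theorem~\ref{thm:f_lower_bound} gives, for each fixed $\delta$, a limit in $k$, and we want a single sequence achieving $1 + 1/q$. This is routine: since for each fixed $\delta$ the quantity $\frac{1 + (k-1)(q\delta + (1-q\delta)\delta)(1-(1-q\delta)^k)/(q\delta)}{k}$ converges to $1 + 1/q - \delta$ as $k\to\infty$, for every $\varepsilon > 0$ we can find $\delta < \varepsilon/2$ and then $k$ with the ratio exceeding $1 + 1/q - \varepsilon$; iterating with $\varepsilon = 1/m$ produces the desired sequence. No new idea beyond Theorem~\ref{thm:f_lower_bound} is needed, so the proof is short.
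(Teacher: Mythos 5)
Your proposal is correct and follows essentially the same route as the paper: the paper also specializes Theorem~\ref{thm:f_lower_bound} to $g(\distletter) = q\cdot\distletter$, obtaining the limiting ratio $1 + 1/q - \delta$ for each fixed $\delta$, and then lets $\delta$ be arbitrarily small (leaving the diagonalization over $\delta$ implicit, which you spell out). No discrepancy to report.
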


Also, with the instances from the proof of Theorem~\ref{thm:tight_4}, we get the following result.

\begin{theorem}
There is an infinite sequence of instances $\cI_{1},\cI_{2},\dots$, where the competitive-ratio of $\rofl$ on $\cI_{k}$ approaches $2(1+q)$ as $k \rightarrow \infty$.
\end{theorem}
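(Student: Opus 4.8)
The plan is to reuse, almost verbatim, the hard instances from the proof of Theorem~\ref{thm:tight_4}, re-evaluating only the algorithm's expected cost under the new opening probability $g(\distletter)=\min\{q\distletter,1\}$. Fix $\delta = 1/(4\sqrt{k})$ and let $\cI_k$ be the metric space on the $k+1$ points $\{u_1,\dots,u_k,\optcenter\}$ with $d(u_i,u_j)=2\delta$ for $i\neq j$ and $d(u_i,\optcenter)=\delta$, with demand multiset $\demands=\{u_1,\dots,u_k\}$. Exactly as in Theorem~\ref{thm:tight_4}, $\OPT(\cI_k)\le 1+k\delta$, obtained by opening a single facility at $\optcenter$ and serving every demand from it.

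For the algorithm, $\distalg{v_1}=d(\emptyset,v_1)=\infty$, so $\coin{v_1}=g(\infty)=1$ and a facility is opened at $v_1$ at cost $1$. For every later round $\ell\in\{2,\dots,k\}$, all currently open facilities sit at points $u_i$, so the nearest one is at distance exactly $2\delta$ from $v_\ell$, independently of the algorithm's previous coin flips; since $2q\delta<1$ for all $k\ge 1$, we get $\coin{v_\ell}=g(2\delta)=2q\delta$, and the expected service cost of $v_\ell$ equals $2q\delta\cdot 1 + (1-2q\delta)\cdot 2\delta = 2\delta(1+q) - 4q\delta^2$. Summing over rounds gives $\E{\ALG(\cI_k)} = 1 + (k-1)\bigl(2\delta(1+q) - 4q\delta^2\bigr)$.

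Dividing, $\E{\ALG(\cI_k)}/\OPT(\cI_k) \ge \bigl(1 + 2\delta(1+q)(k-1) - 4q\delta^2(k-1)\bigr)/(1+k\delta)$; substituting $\delta = 1/(4\sqrt{k})$, the dominant term of the numerator is $\tfrac{(1+q)\sqrt{k}}{2}$ and that of the denominator is $\tfrac{\sqrt{k}}{4}$, while the additive $1$ and the quadratic correction $4q\delta^2(k-1)=\Theta(1)$ are lower order, so the ratio tends to $2(1+q)$ as $k\to\infty$. The only delicate point is the calibration of $\delta=\delta(k)$: it must decay slowly enough that $k\delta\to\infty$ (so the one-time facility cost of $1$ becomes negligible against the accumulated assignment cost) yet remain $o(1)$, and satisfy $2q\delta<1$, so that $g$ stays in its linear regime and the $-4q\delta^2$ term is genuinely second order relative to $2\delta(1+q)$; the choice $\delta=1/(4\sqrt{k})$ inherited from Theorem~\ref{thm:tight_4} meets all of these requirements, and there is no further obstacle.
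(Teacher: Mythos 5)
Your proposal is correct and is exactly the paper's approach: the paper obtains this theorem by reusing the instances from Theorem~\ref{thm:tight_4} with the opening probability $2q\delta$ in place of $2\delta$, which is precisely the computation you carry out. Your per-round cost $2\delta(1+q)-4q\delta^2$, the bound $\OPT(\cI_k)\le 1+k\delta$, and the limit with $\delta=1/(4\sqrt{k})$ all match what the paper's (implicit) argument relies on.
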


Together, we have the following corollary.
\begin{corollary}\label{cor:roflq_lower_bound}
The competitive-ratio of $\rofl$ is at least $(1+q)\max\{2,1/q\} - o(1)$.
\end{corollary}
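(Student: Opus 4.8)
The plan is to obtain the bound directly by combining the two hard instance families established in the preceding two theorems, followed by a one-line algebraic simplification. The first of these theorems (built on the construction of Theorem~\ref{thm:f_lower_bound} with $g(\distletter) = q\distletter$ and $\delta \to 0$) exhibits a sequence of instances on which the competitive-ratio of $\rofl$ tends to $1 + 1/q$; the second (built on the construction of Theorem~\ref{thm:tight_4}) exhibits a sequence on which it tends to $2(1+q)$. Since the competitive-ratio is defined as a supremum over all input instances, it is at least each of these two limiting values up to an additive $o(1)$, hence at least $\max\{1 + 1/q,\ 2(1+q)\} - o(1)$.

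It then remains only to rewrite this maximum. Using $1+q > 0$ together with the identities $1 + 1/q = (1+q)\cdot(1/q)$ and $2(1+q) = (1+q)\cdot 2$, we get
\[
\max\{1 + 1/q,\ 2(1+q)\} = (1+q)\cdot\max\{1/q,\ 2\},
\]
which is exactly the claimed expression $(1+q)\max\{2,1/q\}$. Combining with the previous paragraph gives the lower bound $(1+q)\max\{2,1/q\} - o(1)$.

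This is essentially a bookkeeping step, so I do not anticipate a genuine obstacle. The only points requiring a moment of care are that both instance sequences are valid for every $q \in (0,1)$ (which the statements of the two preceding theorems already guarantee), and that the two separate $o(1)$ error terms---one vanishing as $k \to \infty$ and, in the first family, additionally as $\delta \to 0$---can be merged into a single $o(1)$ by a standard diagonalization over the parameters.
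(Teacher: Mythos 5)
Your proposal is correct and matches the paper's own (implicit) argument: the corollary is obtained exactly by combining the two preceding instance families, one giving a ratio approaching $1+1/q$ and the other $2(1+q)$, and observing that $\max\{1+1/q,\,2(1+q)\} = (1+q)\max\{1/q,2\}$. The paper states this combination without further detail, so your bookkeeping (including the remark on merging the error terms) is the same proof, spelled out slightly more carefully.
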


We now move to analyze the performance of $\rofl$.
We use the same notations from Section~\ref{sec:meyerson}, and redefine $\pul{u}{\ell}$ to be compatible with $\rofl$: For a demand $u\in \demands$ and online round $\ell \in [n]$, let $\pul{u}{\ell} = \min\{q \cdot d(F_{\ell-1}, u), 1\}$. Note that $\pul{u}{\ell}$ is exactly the probability of Heads in line~\ref{line:roflq_heads} of Algorithm~\ref{alg:rofl_q} with $g(\distletter) = \min\{q\cdot \distletter,1\}$ when $v_\ell = u$ (also note that for $q = 1$, $\pul{u}{\ell}$ coincides with our original definition for $\rofl[]$). Except for the definition of $\pul{u}{\ell}$, our definition of the analysis coin as well as the distinction between a balanced and imbalanced opening of a facility remain the same as in Section~\ref{sec:meyerson}.

Our analysis for $\rofl$ is similar to our analysis of $\rofl[]$ in Section~\ref{sec:meyerson}, therefore, we refer to our analysis of $\rofl[]$ when the details remain the same and apply to $\rofl$, and prove analogues lemmas for $\rofl$ when it is required.

We bound the cost of $\rofl$ on each cluster of $\OPT$ separately. Consider a cluster $\optcluster$ in $\OPT$ with center $\optcenter$. Recall that for $\ell \in [n]$, we define $\cpoints{\ell} \subseteq \optcluster$ to be the set of the remaining demand points $u \in \optcluster$ at round $\ell$, and that $T$ is the online round in which the first balanced facility from $\optcluster$ is opened.  We partition $\optcluster$ in exactly the same way we did in the analysis of $\rofl[]$, that is, $\optcluster =  \uptocpoints{T} \cup \cpoints{T+1}$.

We begin by proving an analogue of Lemma~\ref{lem:price_factor}.
\begin{lemma}\label{lem:price_factor_roflq}
The expected cost of $\rofl$ on a demand $u$ is at most $(1+1/q) \cdot \E{\coin{u}}$
\end{lemma}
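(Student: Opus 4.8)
The plan is to mirror the proof of Lemma~\ref{lem:price_factor} exactly, adjusting only the arithmetic to reflect that $\rofl$ opens a facility at $u$ with probability $\coin{u} = \min\{q\cdot\distalg{u},1\}$ rather than $\min\{\distalg{u},1\}$. Fix $u\in\demands$ and let $\ALG(u)$ denote the service cost that $\rofl$ pays for $u$. I will condition on the value of $\coin{u}$, which is determined by $\distalg{u}=d(F_{\ell-1},u)$ at the round $\ell$ in which $u$ arrives.

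First I would handle the boundary case $\coin{u}=1$: here the algorithm opens a facility at $u$ and pays exactly $1$ for serving $u$ (facility cost $1$, assignment cost $0$), and since $1 \le (1+1/q)\cdot 1 = (1+1/q)\coin{u}$, the bound holds. Next, for $p\in[0,1)$, conditioned on $\coin{u}=p$ we have $p = q\cdot\distalg{u}$, i.e. $\distalg{u}=p/q$. With probability $p$ the algorithm opens a facility at $u$ and pays $1$; with probability $1-p$ it assigns $u$ to the nearest open facility at distance $p/q$ and pays $p/q$. Hence $\E{\given{\ALG(u)}{\coin{u}=p}} = p\cdot 1 + (1-p)\cdot p/q = p/q + p - p^2/q \le p/q + p = (1+1/q)p$. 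Taking the expectation over $\coin{u}$ gives $\E{\ALG(u)} \le (1+1/q)\E{\coin{u}}$, as claimed.

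There is essentially no obstacle here — the only subtlety, and the one thing worth stating carefully, is the relation $\distalg{u}=\coin{u}/q$ valid precisely when $\coin{u}<1$, which is what lets the assignment cost $p/q$ be expressed in terms of $p=\coin{u}$; the inequality then follows from discarding the nonpositive term $-p^2/q$. I would write the lemma proof in three or four lines, paralleling the structure of Lemma~\ref{lem:price_factor} so that the reader immediately sees it as the natural generalization (the original corresponds to $q=1$, where $(1+1/q)=2$).

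\begin{proof}
Fix $u \in \demands$ and let $\ALG(u)$ be the cost that $\rofl$ pays for serving $u$. Conditioned on $\coin{u} = 1$, the algorithm opens a facility at $u$ and pays $1 \leq (1 + 1/q) \cdot \coin{u}$ ($1$ for the facility opening cost, and $0$ for the assignment cost). Now let $p \in [0,1)$. Conditioned on $\coin{u} = p$, it holds that $\coin{u} = q \cdot \distalg{u}$, so $\distalg{u} = p/q$; the algorithm opens a facility at $u$ with probability $p$ and pays $1$, and with probability $1-p$ it serves $u$ through an open facility at distance $p/q$ and pays $p/q$. Hence,
\begin{align*}
\E{\given{\ALG(u)}{\coin{u} = p}} = p \cdot 1 + (1-p) \cdot \frac{p}{q} = \frac{p}{q} + p - \frac{p^2}{q} \leq \left(1 + \frac{1}{q}\right) p.
\end{align*}
The lemma follows by taking the expectation over $\coin{u}$.
\end{proof}
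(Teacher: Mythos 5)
Your proof is correct and follows essentially the same route as the paper's: condition on $\coin{u}=p$, use $\distalg{u}=p/q$ when $p<1$, compute $p\cdot 1+(1-p)\,p/q\leq (1+1/q)p$, handle the case $\coin{u}=1$ separately, and take expectation over $\coin{u}$. No differences worth noting.
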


\begin{proof}
Fix $u \in \demands$. Let $\ALG(u)$ be the cost that $\rofl$ pays for serving $u$. Conditioned on $\coin{u} = 1$, the algorithm open a facility at $u$ and pays $1 \leq (1+1/q)$. Now let $p\in [0,1)$. Conditioned on $\coin{u} = p < 1$, it holds that $\coin{u} = q \cdot \distalg{u}$ and $\distalg{u} = p/q$. $\rofl$ opens a facility at $u$ with probability $p$ and pays $1$, and with probability $(1-p)$, it serves $u$ through an open facility at distance $p/q$ and pays $p/q$. Hence, $\E{\given{\ALG(u)}{\coin{u} = p}} = p \cdot 1 + (1- p) p/q = p + p/q - p^2/q \leq (1+1/q)p$. The lemma follows by taking the expectation over $\coin{u}$.
\end{proof}

To bound the cost of $\rofl$ on $\cpoints{T+1}$, we note that Lemma~\ref{lem:C_Tplus1} and Lemma~\ref{lem:exp_v_T} from Section~\ref{sec:meyerson} also apply to $\rofl$. With the new definition of $\coin{u}$, we have for all $u \in \CS{U}$ that $\coin{u} \leq q \cdot \distalg{u}$, so we get the following corollary of Lemma~\ref{lem:C_Tplus1} (analogously to Corollary~\ref{cor:C_Tplus1}).
\begin{corollary}\label{cor:C_Tplus1_roflq}
$\E{\sum_{u \in \cpoints{T+1}} \coin{u}} \leq q \cdot \E{ \sum_{u \in \cpoints{T}} d^*_u} + q \cdot \E{(|\cpoints{T}| - 2)d^*_{v_T}}$.
\end{corollary}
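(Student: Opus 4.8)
The statement is an immediate consequence of Lemma~\ref{lem:C_Tplus1} combined with the pointwise bound on the coin probabilities. The plan is as follows. First, I would observe that Lemma~\ref{lem:C_Tplus1} and its proof carry over verbatim to $\rofl$: the proof of that lemma only uses the structural facts that $v_T$ is an open facility from $\optcluster$ (so each $u \in \cpoints{T+1}$ can be served through it, giving $\distalg{u} \le d^*_{v_T} + d^*_u$ by the triangle inequality), that $\cpoints{T} = \cpoints{T+1} \cup \{v_T\}$, and that $|\cpoints{T+1}| = |\cpoints{T}| - 1$. None of these depend on the opening probabilities $\coin{\cdot}$ or on the value of $q$; they depend only on the definition of $T$ as the first round in which a balanced facility from $\optcluster$ is opened, which is unchanged. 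Hence
\begin{align*}
\E{\sum_{u \in \cpoints{T+1}} \distalg{u}} \leq \E{ \sum_{u \in \cpoints{T}} d^*_u} + \E{(|\cpoints{T}| - 2)d^*_{v_T}}
\end{align*}
holds for $\rofl$ as well.

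Next, I would use the fact, noted just before the corollary, that with the redefined coin $\coin{u} = \min\{q \cdot \distalg{u}, 1\} \le q \cdot \distalg{u}$ for every $u \in \CS{U}$ and every realization of the randomness. Summing over $u \in \cpoints{T+1}$ gives $\sum_{u \in \cpoints{T+1}} \coin{u} \le q \sum_{u \in \cpoints{T+1}} \distalg{u}$ pointwise, and therefore, taking expectations and then applying the displayed inequality above,
\begin{align*}
\E{\sum_{u \in \cpoints{T+1}} \coin{u}} \le q \cdot \E{\sum_{u \in \cpoints{T+1}} \distalg{u}} \le q \cdot \E{ \sum_{u \in \cpoints{T}} d^*_u} + q \cdot \E{(|\cpoints{T}| - 2)d^*_{v_T}},
\end{align*}
where the last step multiplies the Lemma~\ref{lem:C_Tplus1} inequality through by the positive constant $q$, which preserves its direction.

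There is essentially no obstacle here; the only point requiring a moment's care is that the quantity $|\cpoints{T}| - 2$ can be negative, so one should scale the already-established inequality of Lemma~\ref{lem:C_Tplus1} by $q > 0$ rather than trying to argue termwise about the $d^*_{v_T}$ contribution. This mirrors exactly the passage from Lemma~\ref{lem:C_Tplus1} to Corollary~\ref{cor:C_Tplus1} in the analysis of $\rofl[]$, with the single factor of $q$ inserted.
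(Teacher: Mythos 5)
Your proposal is correct and follows exactly the paper's route: the paper also observes that Lemma~\ref{lem:C_Tplus1} applies verbatim to $\rofl$ (its proof uses only the triangle inequality through the open facility $v_T$ and the set identities for $\cpoints{T}$, not the opening probabilities), and then derives the corollary from the pointwise bound $\coin{u} \leq q \cdot \distalg{u}$. Your remark about scaling the whole inequality by $q$ rather than arguing termwise (since $|\cpoints{T}| - 2$ may be negative) is a valid, if implicit, point of care that matches the paper's intent.
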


It remains to derive bounds for the demand points in $\uptocpoints{T}$. To this end, we prove an analogue of Lemma~\ref{lem:until_opening}.

\begin{lemma}\label{lem:roflq_until_opening}
$\E{\sum_{u \in \uptocpoints{T}} \coin{u}} \leq 1 + 2q \cdot  \E{\sum_{u \in \uptocpoints{T}} d^*_{u}}$
\end{lemma}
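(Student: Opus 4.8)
The plan is to mirror the downward induction used in the proof of Lemma~\ref{lem:until_opening}, adjusting the constants to track the factor $q$ in the opening probability. As before, for $1 \le \ell \le n$ I would let $T_\ell \ge \ell$ be the first round from $\ell$ onward in which a balanced facility from $\optcluster$ is opened, set $\tcpoints{\ell} = \{v_\ell,\dots,v_{T_\ell}\} \cap \optcluster$ and $P_\ell = \sum_{u \in \tcpoints{\ell}} \coin{u}$, and observe that since $T_1 = T$ it suffices to show $\E{P_1} \le 1 + 2q \cdot \E{\sum_{u \in \tcpoints{1}} d^*_u}$. The claim I would prove by downward induction on $\ell$ is that, for every $\CS{F} \subseteq \demands$ and every sub-sequence $\mathbf{x}_{\ell-1}$ with $\Pr[\bm{v}_{\ell-1} = \mathbf{x}_{\ell-1}, F_{\ell-1} = \CS{F}] \ne 0$, one has $\E{\given{P_\ell}{\bm{v}_{\ell-1} = \mathbf{x}_{\ell-1}, F_{\ell-1} = \CS{F}}} \le 1 + 2q \cdot \E{\given{\sum_{u \in \tcpoints{\ell}} d^*_u}{\bm{v}_{\ell-1} = \mathbf{x}_{\ell-1}, F_{\ell-1} = \CS{F}}}$. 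The base case $\ell = n$ is identical: $P_n \le \coin{v_n} \le 1$.

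For the inductive step I would condition on $F_{\ell-1} = \CS{F}$ and $\bm{v}_{\ell-1} = \mathbf{x}_{\ell-1}$, fixing $\pul{u}{\ell}$ to the value $\pful{\CS{F}}{u}{\ell} = \min\{q \cdot d(\CS{F}, u), 1\}$ and $\cpoints{\ell}$ to $\CS{Y} = (\CS{U} \setminus \{x_1,\dots,x_{\ell-1}\}) \cap \optcluster$, and further condition on $v_\ell$. As in Lemma~\ref{lem:until_opening}, for $v_\ell = u_j \in \CS{Y}$ I would decompose the conditional expectation of $P_\ell$ over the events $\balanced{u_j}$, $\imbalanced{u_j}$, $\nofacility{u_j}$ (with probabilities $\pful{\CS{F}}{w_j}{\ell}$, $\pful{\CS{F}}{u_j}{\ell} - \pful{\CS{F}}{w_j}{\ell}$, $1 - \pful{\CS{F}}{u_j}{\ell}$ exactly as before, since the analysis-coin construction is unchanged), apply the induction hypothesis twice, and combine with the decomposition of $\E{\given{\sum_{u \in \tcpoints{\ell}} d^*_u}{\cdots}}$ analogous to Equation~\eqref{eq:exp_sum_dist}. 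The one quantity whose bound needs to change is $\pful{\CS{F}}{u_j}{\ell} - \pful{\CS{F}}{w_j}{\ell}$: the triangle-inequality argument gives $d(\CS{F}, u_j) - d(\CS{F}, w_j) \le 2 d^*_{u_j}$ as in~\eqref{eq:triangle_dist}, and since $\pful{\CS{F}}{u}{\ell} = \min\{q \cdot d(\CS{F}, u), 1\}$ is $q$-Lipschitz and monotone in $d(\CS{F},u)$, the same case analysis as around~\eqref{eq:triangle} yields $\pful{\CS{F}}{u_j}{\ell} - \pful{\CS{F}}{w_j}{\ell} \le q\bigl(d(\CS{F}, u_j) - d(\CS{F}, w_j)\bigr) \le 2q \cdot d^*_{u_j}$. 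Feeding this $2q d^*_{u_j}$ (in place of $2 d^*_{u_j}$) into the combination gives the desired $1 + 2q \cdot \E{\given{\sum_{u \in \tcpoints{\ell}} d^*_u}{v_\ell = u_j, \cdots}}$. The case $v_\ell \notin \CS{Y}$ is handled exactly as in~\eqref{eq:not_in_C}–\eqref{eq:notinc_expression}, only carrying the constant $2q$ instead of $2$. Taking the expectation over $v_\ell$ closes the induction, and unwinding to $\ell = 1$ gives the lemma.

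The main point requiring care — and the only genuine deviation from Lemma~\ref{lem:until_opening} — is the step $\pful{\CS{F}}{u_j}{\ell} - \pful{\CS{F}}{w_j}{\ell} \le q(d(\CS{F},u_j) - d(\CS{F},w_j))$, i.e.\ verifying that truncating $q\cdot d(\CS{F},\cdot)$ at $1$ only shrinks the gap; the three sub-cases (both distances below $1/q$, both above, one of each) are routine but must be checked since the threshold is now $1/q$ rather than $1$. Everything else is bookkeeping: the probabilistic structure of balanced/imbalanced openings, the conditioning argument, and the algebraic recombination are verbatim from Section~\ref{sec:meyerson} with $2$ replaced by $2q$ throughout.
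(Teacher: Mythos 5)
Your proposal is correct and follows essentially the same route as the paper's proof: the same downward induction on $\ell$ with hypothesis $1 + 2q\,\E{\sum_{u \in \tcpoints{\ell}} d^*_u}$, the same balanced/imbalanced/no-facility decomposition, and the same key adjustment $\pful{\CS{F}}{u_j}{\ell} - \pful{\CS{F}}{w_j}{\ell} \leq q\left(d(\CS{F},u_j) - d(\CS{F},w_j)\right) \leq 2q\, d^*_{u_j}$ verified by the three-case analysis of the truncation. The paper's argument is exactly this adaptation of Lemma~\ref{lem:until_opening}, so nothing further is needed.
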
 

\begin{proof}
The proof is very similar to the proof of Lemma~\ref{lem:until_opening}. We use the same definitions and notations as in the proof of Lemma~\ref{lem:until_opening}, except for the new definition of $\pful{\CS{F}}{u}{\ell} = \min\{q \cdot d(\CS{F},u), 1\}$ (instead of $\min\{d(\CS{F},u), 1\}$). With this value of $\pful{\CS{F}}{u}{\ell}$, we prove a more general inductive statement.

Recall that for $1\leq \ell \leq n$, we define $T_\ell \geq \ell$ to be the first online round (from round $\ell$ onward) in which a balanced facility is opened by the algorithm, $\tcpoints{\ell} = \{v_\ell,\dots,v_{T_\ell}\} \cap \optcluster$, and $P_\ell = \sum_{u \in \tcpoints{\ell}} \coin{u}$. We prove by downwards induction on $\ell$ that for any $\CS{F} \subseteq \demands$ and any sub-sequence of the demand points $\mathbf{x}_{\ell-1} = (x_1,\dots,x_{\ell-1})$ such that $\Pr[\bm{v}_{\ell-1} = \mathbf{x}_{\ell-1} ,F_{\ell-1} = \CS{F}] \neq 0$, it holds that $\E{\given{P_\ell}{\bm{v}_{\ell-1} = \mathbf{x}_{\ell-1} ,F_{\ell-1} = \CS{F}}} \leq 1+ 2q\E{\given{\sum_{u \in \tcpoints{\ell}} d^*_{u}}{ \bm{v}_{\ell-1} = \mathbf{x}_{\ell-1} ,F_{\ell-1} = \CS{F}} } $.

For $\ell = n$, we have
$P_n = \coin{v_n}$ if $v_n \in \optcluster$ and $0$ otherwise. Since $\coin{v_n} \leq 1$, the base case of the induction holds. For $\ell < n$, the proof proceeds as in the proof of Lemma~\ref{lem:until_opening}, until we get to  Equation~\eqref{eq:p_l_explained}. Now, similarly to Inequality~\eqref{eq:exp_round_l_2}, we substitute the probabilities with their respective values and use the new induction hypothesis to obtain that
\begin{align}
\begin{split}
    & \E{\given{P_\ell}{ 
        \begin{array}{c}
            v_{\ell} = u_j,\\
            \bm{v}_{\ell-1} = \mathbf{x}_{\ell-1},\\
            F_{\ell-1} = \CS{F}
        \end{array}
        }
    } \\
    & \quad \leq \pful{\CS{F}}{u_j}{\ell} + \left(\pful{\CS{F}}{u_j}{\ell} - \pful{\CS{F}}{w_j}{\ell}\right) \left(1 + 2 q \E{\given{\sum_{u \in \tcpoints{\ell+1}} d^*_{u} }{
            \begin{array}{c}
                 v_\ell = u_j, \\
                 \bm{v}_{\ell-1} = \mathbf{x}_{\ell-1}, \\
                 F_{\ell} = \CS{F} \cup \{u_j\}
            \end{array} } }\right) \\
    & \quad \quad +\left(1-  \pful{\CS{F}}{u_j}{\ell}\right) \left( 1+ 2q \E{\given{\sum_{u \in \tcpoints{\ell+1}} d^*_{u}}{
            \begin{array}{c}
                 v_\ell = u_j, \\
                 \bm{v}_{\ell-1} = \mathbf{x}_{\ell-1}, \\
                 F_{\ell} = \CS{F}
            \end{array} }
        } \right) \\
    & \quad = 1 + \left(\pful{\CS{F}}{u_j}{\ell} - \pful{\CS{F}}{w_j}{\ell}\right)  \\
    & \quad \quad + 2q\left(\pful{\CS{F}}{u_j}{\ell} - \pful{\CS{F}}{w_j}{\ell}\right)  \E{\given{\sum_{u \in \tcpoints{\ell+1}} d^*_{u} }{
            \begin{array}{c}
                 v_\ell = u_j, \\
                 \bm{v}_{\ell-1} = \mathbf{x}_{\ell-1}, \\
                 F_{\ell} = \CS{F} \cup \{u_j\}
            \end{array} } } \\
    & \quad \quad   + 2q\left(1-  \pful{\CS{F}}{u_j}{\ell}\right) \E{\given{\sum_{u \in \tcpoints{\ell+1}} d^*_{u}}{
            \begin{array}{c}
                 v_\ell = u_j, \\
                 \bm{v}_{\ell-1} = \mathbf{x}_{\ell-1}, \\
                 F_{\ell} = \CS{F}
            \end{array} }
        }.
\end{split}\label{eq:step2_roflq}
\end{align}

We now upper bound the term $\left(\pful{\CS{F}}{u_j}{\ell} - \pful{\CS{F}}{w_j}{\ell}\right)$. To this end we use Inequality~\eqref{eq:triangle_dist} (which still holds), and show that $\pful{\CS{F}}{u_j}{\ell} - \pful{\CS{F}}{w_j}{\ell} \leq q \cdot ( d(\CS{F}, u_j) - d(\CS{F}, w_j))$. Recall that $d(\CS{F}, u_j) \geq d(\CS{F}, w_j)$, $\pful{\CS{F}}{u_j}{\ell} = \min\{q\cdot d(\CS{F}, u_j),1\}$ and $\pful{\CS{F}}{w_j}{\ell} = \min\{q \cdot d(\CS{F}, w_j),1\}$. We distinguish between the following cases: If both $q \cdot d(\CS{F}, u_j) \leq 1$ and $q \cdot d(\CS{F}, w_j) \leq 1$, then  $\pful{\CS{F}}{u_j}{\ell} = q \cdot d(\CS{F}, u_j)$ and $\pful{\CS{F}}{w_j}{\ell} = q \cdot d(\CS{F}, w_j)$ and the claim holds. If both $q \cdot d(\CS{F}, u_j) > 1$ and $q \cdot d(\CS{F}, w_j) > 1$ then $\pful{\CS{F}}{u_j}{\ell} - \pful{\CS{F}}{w_j}{\ell} = 0 \leq q \cdot d(\CS{F}, u_j) - q \cdot d(\CS{F}, w_j)$. Finally, if $q \cdot d(\CS{F}, u_j) > 1$ and $q \cdot  d(\CS{F}, w_j) \leq 1$, then  $\pful{\CS{F}}{u_j}{\ell} - \pful{\CS{F}}{w_j}{\ell} = 1 - q \cdot d(\CS{F}, w_j) <  q \cdot d(\CS{F}, u_j) - q \cdot d(\CS{F}, w_j)$. Hence, we obtain 
\begin{align}
    \pful{\CS{F}}{u_j}{\ell} - \pful{\CS{F}}{w_j}{\ell} \leq  2q  d^*_{u_j}.\label{eq:triangle_roflq}
\end{align}
To conclude, we substitute Inequality~\eqref{eq:triangle_roflq} and use Equation~\eqref{eq:exp_sum_dist} (which still holds) in Inequality~\eqref{eq:step2_roflq}, and get that 
\begin{align}
\begin{split}
    \E{\given{P_\ell} { 
        \begin{array}{c}
            v_\ell = u_j, \\
            \bm{v}_{\ell-1} = \mathbf{x}_{\ell-1}, \\
            F_{\ell} = \CS{F}
        \end{array}
        } 
    } 
    &\leq 1 + 2q  \E{\given{\sum_{u \in \tcpoints{\ell}} d^*_{u}} {        \begin{array}{c}
            v_\ell = u_j, \\
            \bm{v}_{\ell-1} = \mathbf{x}_{\ell-1}, \\
            F_{\ell} = \CS{F}
        \end{array}
        }}.
\end{split}
\end{align}

For $v_\ell = u \in \CS{U} \setminus \CS{Y}$, similarly to Inequality~\eqref{eq:not_in_C} (with our new induction hypothesis) we also get that
\begin{align*}
\begin{split}
    \E{\given{P_\ell}{\begin{array}{c}
                 v_\ell = u, \\
                 \bm{v}_{\ell-1} = \mathbf{x}_{\ell-1}, \\
                 F_{\ell} = \CS{F}
            \end{array}}} 
    &= p_{\CS{F}}(\ell, u) 
        \E{\given{P_{\ell+1}} { 
            \begin{array}{c}
                 v_\ell = u, \\
                 \bm{v}_{\ell-1} = \mathbf{x}_{\ell-1}, \\
                 F_{\ell} = \CS{F} \cup \{u \}
            \end{array}}} 
        + (1-p_{\CS{F}}(\ell, u)) 
        \E{\given{P_{\ell+1}}{
            \begin{array}{c}
                 v_\ell = u, \\
                 \bm{v}_{\ell-1} = \mathbf{x}_{\ell-1}, \\
                 F_{\ell} = \CS{F}
            \end{array}
        }} \\
    &\leq 1 + 2 q \cdot p_{\CS{F}}(\ell, u) 
        \E{ \given{\sum_{u \in \tcpoints{\ell+1}} d^*_{u}} {
            \begin{array}{c}
                 v_\ell = u, \\
                 \bm{v}_{\ell-1} = \mathbf{x}_{\ell-1}, \\
                 F_{\ell} = \CS{F} \cup \{u \}
            \end{array}
        }} \\
    & \quad + 2  q\cdot (1-p_{\CS{F}}(\ell, u)) \E{ \given{\sum_{u \in \tcpoints{\ell+1}} d^*_{u}} {
            \begin{array}{c}
                 v_\ell = u, \\
                 \bm{v}_{\ell-1} = \mathbf{x}_{\ell-1}, \\
                 F_{\ell} = \CS{F}
            \end{array}
    }} \\
    &\leq 1 + 2 q\cdot \E{\given{\sum_{u \in \tcpoints{\ell}} d^*_{u}} {
            \begin{array}{c}
                 v_\ell = u, \\
                 \bm{v}_{\ell-1} = \mathbf{x}_{\ell-1}, \\
                 F_{\ell} = \CS{F}
            \end{array}
    } },
\end{split}
\end{align*}
where in the last inequality we used Equation~\eqref{eq:notinc_expression}.
To conclude the inductive argument we take the expectation over $v_\ell$.
\end{proof}

We can now put all the pieces together and derive the competitive-ratio of $\rofl$.

\begin{theorem}\label{thm:qrofl-competitive}
$\rofl$ is $(1+q)\max\{2,1/q\}$-competitive.
\end{theorem}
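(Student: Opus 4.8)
The plan is to mirror the proof of Theorem~\ref{thm:cr_meyerson} (the $4$-competitiveness of $\rofl[]$) exactly, substituting the $\rofl$ analogues of the three key lemmas. Fix a cluster $\optcluster$ of $\OPT$ with center $\optcenter$, and recall the partition $\optcluster = \uptocpoints{T} \cup \cpoints{T+1}$ where $T$ is the first round at which a balanced facility from $\optcluster$ is opened. By Lemma~\ref{lem:price_factor_roflq}, the expected service cost on any demand $u$ is at most $(1+1/q)\E{\coin{u}}$, so $\E{\ALG(\optcluster)} \leq (1+1/q)\bigl(\E{\sum_{u\in\uptocpoints{T}}\coin{u}} + \E{\sum_{u\in\cpoints{T+1}}\coin{u}}\bigr)$.

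Next I would bound the two sums. For the ``before'' part, Lemma~\ref{lem:roflq_until_opening} gives $\E{\sum_{u\in\uptocpoints{T}}\coin{u}} \leq 1 + 2q\E{\sum_{u\in\uptocpoints{T}}d^*_u}$. For the ``after'' part, Corollary~\ref{cor:C_Tplus1_roflq} gives $\E{\sum_{u\in\cpoints{T+1}}\coin{u}} \leq q\E{\sum_{u\in\cpoints{T}}d^*_u} + q\E{(|\cpoints{T}|-2)d^*_{v_T}}$. Adding these, and then splitting off $v_T$ from $\uptocpoints{T}$ (i.e.\ $\sum_{u\in\uptocpoints{T}}d^*_u = \sum_{u\in\uptocpoints{T-1}}d^*_u + d^*_{v_T}$) exactly as in the proof of Theorem~\ref{thm:cr_meyerson}, the $\E{(|\cpoints{T}|-2)d^*_{v_T}}$ term combines with the $2q\,d^*_{v_T}$ term to produce a clean $q\E{|\cpoints{T}|\,d^*_{v_T}}$. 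Applying Lemma~\ref{lem:exp_v_T} (which the text states still holds for $\rofl$) bounds this by $q\E{\sum_{u\in\cpoints{T}}d^*_u}$, and combining with the remaining $q\E{\sum_{u\in\cpoints{T}}d^*_u}$ and $2q\E{\sum_{u\in\uptocpoints{T-1}}d^*_u}$ terms, everything collapses to $\E{\sum_{u\in\uptocpoints{T}}\coin{u}} + \E{\sum_{u\in\cpoints{T+1}}\coin{u}} \leq 1 + 2q\E{\sum_{u\in\optcluster}d^*_u}$, so that $\E{\ALG(\optcluster)} \leq (1+1/q)\bigl(1 + 2q\sum_{u\in\optcluster}d^*_u\bigr) = (1+1/q) + (1+1/q)\cdot 2q\sum_{u\in\optcluster}d^*_u = (1+1/q) + 2(q+1)\sum_{u\in\optcluster}d^*_u$.

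Now I would compare against $\OPT(\optcluster) = 1 + \sum_{u\in\optcluster}d^*_u$. The facility term contributes a ratio of $1+1/q$ and the distance term contributes a ratio of $2(1+q)$, so overall $\E{\ALG(\optcluster)}/\OPT(\optcluster) \leq \max\{1+1/q,\ 2(1+q)\} = (1+q)\max\{1/q,\ 2\}$ (pulling out the common factor $1+q$). Since this holds for every cluster of $\OPT$, summing over clusters gives $\E{\ALG}/\OPT \leq (1+q)\max\{2,1/q\}$, which is the claimed bound (and matches the lower bound of Corollary~\ref{cor:roflq_lower_bound}, showing it is tight).

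I do not anticipate a genuine obstacle here, since the three supporting lemmas are already in place and the theorem is essentially their bookkeeping combination; the only thing requiring care is the algebraic reorganization of the $d^*_{v_T}$ terms so that Lemma~\ref{lem:exp_v_T} applies with the right coefficient, and tracking the factor $(1+1/q)$ through both the facility contribution (giving $1+1/q$) and the distance contribution (giving $2(1+q)$) to land the $\max$ cleanly. The one mild subtlety worth double-checking is that Lemmas~\ref{lem:C_Tplus1} and \ref{lem:exp_v_T} indeed carry over verbatim to $\rofl$ — but the text explicitly asserts this, since those two lemmas only use that $v_T$ is a point of $\optcluster$ opened as a balanced facility and the balancing rule is unchanged, not any property of the specific opening probabilities.
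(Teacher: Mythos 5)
Your proposal is correct and follows essentially the same route as the paper's proof: the same decomposition $\optcluster = \uptocpoints{T} \cup \cpoints{T+1}$, the same three supporting results (Lemma~\ref{lem:price_factor_roflq}, Corollary~\ref{cor:C_Tplus1_roflq}, Lemma~\ref{lem:roflq_until_opening}), the same recombination of the $d^*_{v_T}$ terms via Lemma~\ref{lem:exp_v_T}, and the same per-cluster comparison yielding $\max\{1+1/q,\,2(1+q)\}$. The only difference is the cosmetic choice of factoring out $(1+1/q)$ before rather than after the bookkeeping, which changes nothing.
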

\begin{proof} 
Similarly to the proof of Theorem~\ref{thm:cr_meyerson},
for a cluster $\optcluster$ in $\OPT$, we have $\optcluster =  \uptocpoints{T} \cup \cpoints{T+1} $. By Corollary~\ref{cor:C_Tplus1_roflq} and Lemma~\ref{lem:roflq_until_opening} together with Lemma~\ref{lem:price_factor_roflq}, that
\begin{align*}
\E{\ALG\left(\optcluster\right)} &= \E{\ALG\left(\uptocpoints{T}\right)} +  \E{\ALG\left(\cpoints{T+1}\right)} \\
&\leq (1+1/q)  \E{\sum_{u \in \uptocpoints{T}} \coin{u}} + (1+1/q)  \E{\sum_{u \in \cpoints{T+1}} \coin{u}} \\
&\leq  1+1/q + 2(1+q)\E{\sum_{u \in \uptocpoints{T}} d^*_{u}} +  (1+q) \E{ \sum_{u \in \cpoints{T}} d^*_u} +  (1+q) \E{(|\cpoints{T}| - 2)d^*_{v_T}} \\
&= 1+1/q + 2(1+q)   \E{\sum_{u \in \uptocpoints{T-1}} d^*_{u}}  + 2(1+q)  \E{d^*_{v_T}} + (1+q)  \E{ \sum_{u \in \cpoints{T}} d^*_u}  \\
&\quad + (1+q)  \E{(|\cpoints{T}| - 2)d^*_{v_T}} \\
&= 1+1/q + 2(1+q)   \E{\sum_{u \in \uptocpoints{T-1}} d^*_{u}} + (1+q)  \E{ \sum_{u \in \cpoints{T}} d^*_u}  + (1+q) \E{(|\cpoints{T}|)d^*_{v_T}} \\
&\leq  1+1/q + 2(1+q) \E{\sum_{u \in \uptocpoints{T-1}} d^*_{u}} + (1+q)\E{ \sum_{u \in \cpoints{T}} d^*_u}  + (1+q) \E{\sum_{u \in \cpoints{T}} d^*_{u}} \\
&= 1+1/q + 2(1+q)  \E{\sum_{u \in \optcluster} d^*_{u}}.
\end{align*}
Where in the last inequality we used Lemma~\ref{lem:exp_v_T}. Now since $\OPT(\optcluster) = 1 + \sum_{u \in \optcluster} d^*_u$, we have $\E{\ALG(\optcluster)} / \OPT(\optcluster) \leq \max\{1+1/q,2(1+q)\}$. Since this is true for each cluster $\optcluster$ in $\OPT$, we get that $\E{\ALG}/\OPT \leq (1+q)\max\{1/q,2\}$.
\end{proof}

By Corollary~\ref{cor:roflq_lower_bound}, we get that our analysis of $\rofl$ is tight for all $q \in (0,1)$. Optimizing over the choice of $q$, the best competitive-ratio of $\rofl$ is obtained for $q=1/2$, for which we get a competitive-ratio of $3$.

\begin{corollary}
$\rofl[\nicefrac{1}{2}]$ is $3$-competitive.
\end{corollary}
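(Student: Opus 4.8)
The plan is to obtain this corollary as an immediate specialization of the already-established Theorem~\ref{thm:qrofl-competitive}. That theorem asserts that $\rofl$ is $(1+q)\max\{2,1/q\}$-competitive for every $q\in(0,1)$, so I would simply substitute $q=\nicefrac{1}{2}$: then $1/q = 2 = \max\{2,1/q\}$ and $1+q = \nicefrac{3}{2}$, giving a competitive-ratio of $\nicefrac{3}{2}\cdot 2 = 3$. No new probabilistic argument is needed; the work has already been done in Lemmas~\ref{lem:price_factor_roflq}, \ref{lem:roflq_until_opening}, Corollary~\ref{cor:C_Tplus1_roflq}, and the cluster-wise summation in Theorem~\ref{thm:qrofl-competitive}.

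To justify that $q=\nicefrac{1}{2}$ is the right choice — i.e.\ that $3$ is the best bound this family yields — I would carry out the elementary one-variable optimization of $h(q) = (1+q)\max\{2,1/q\}$ on $(0,1)$. For $q \le \nicefrac{1}{2}$ one has $1/q \ge 2$, so $h(q) = (1+q)/q = 1 + 1/q$, which is strictly decreasing; for $q \ge \nicefrac{1}{2}$ one has $1/q \le 2$, so $h(q) = 2(1+q)$, which is strictly increasing. The two branches agree at $q=\nicefrac{1}{2}$ with $h(\nicefrac{1}{2}) = 3$, so this is the unique minimizer. Optionally, I would note that this is tight: Corollary~\ref{cor:roflq_lower_bound} gives a lower bound of $(1+q)\max\{2,1/q\} - o(1)$, which at $q=\nicefrac{1}{2}$ equals $3-o(1)$, so the analysis of $\rofl[\nicefrac{1}{2}]$ cannot be improved.

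There is essentially no obstacle here — the statement is a direct corollary, and the only content beyond quoting Theorem~\ref{thm:qrofl-competitive} is the trivial computation $\max\{1+1/q, 2(1+q)\} = 3$ at $q=\nicefrac{1}{2}$ together with the monotonicity check above confirming that $q=\nicefrac{1}{2}$ minimizes the bound over the whole family.
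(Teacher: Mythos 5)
Your proposal is correct and matches the paper's approach exactly: the corollary is obtained by substituting $q=\nicefrac{1}{2}$ into Theorem~\ref{thm:qrofl-competitive}, and your observation that this choice minimizes $(1+q)\max\{2,1/q\}$ is precisely the optimization the paper notes before stating the corollary.
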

\section{Lower Bound}\label{sec:lowerbound}

In this section, we show that no online algorithm can have a competitive-ratio better than $2$, even in the weaker online $\IID$ model with full prior knowledge of the distribution. 

In the $\IID$ model, we are given the metric space $(\CS{M},d)$, and a distribution $D$ over $\CS{M}$ upfront. Then, at each online round $\ell \in [n]$, the demand point $v_\ell$ is drawn independently from $D$. In this model, an algorithm $\ALG$ is called $c$-competitive, if for any input instance $\cI = (\CS{M},d,D,n)$, it holds that $\E{\ALG(\cI)} \leq c \cdot \E{\OPT(\cI)}$, where the expectation is taken over $v_1,\dots,v_n \sim D$, and the internal randomness of the algorithm.

\begin{theorem}\label{thm:lower_bound}
Let $\ALG$ be an algorithm for online facility location in the $\IID$ model, then, the competitive-ratio of $\ALG$ is at least $2 - o(1)$.
\end{theorem}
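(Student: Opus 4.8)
The plan is to construct a hard instance in the $\IID$ model where the optimal cost is dominated by a single facility, but any online algorithm is forced to ``pay twice'' --- once for the facility it must eventually open, and again for the demand points that arrive before it commits to opening. The natural candidate is a metric space that is essentially a single cluster: take one central point $c$ together with a large number $k$ of satellite points $u_1,\dots,u_k$ at distance $\delta$ from $c$ and at distance (close to) $2\delta$ from each other, with $\delta$ chosen so that $k\delta$ is a small constant (so $\OPT\approx 1$, achieved by opening one facility at $c$) but $\delta$ itself is not so tiny that a single sample from $D$ already hits every point. The distribution $D$ would place (almost) all its mass uniformly on the satellites $u_1,\dots,u_k$, with a vanishing probability on $c$, and $n$ would be taken moderately large relative to $k$.

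First I would pin down $\E{\OPT(\cI)}$: with high probability the realized demand multiset is a subset of the satellites, and $\OPT$ opens a facility at $c$ and serves everything at total cost $1 + (\text{number of distinct/total demands})\cdot\delta = 1 + o(1)$ by the choice $k\delta = o(1)$; one must also check that opening at a satellite is never cheaper, which follows because any two satellites are at distance $\approx 2\delta$ while the center is at distance $\delta$ from all. Next, for the lower bound on $\E{\ALG}$, the key observation is a symmetry/coupling argument: because $D$ is symmetric among the satellites and $\ALG$ has no way to distinguish them a priori, whenever $\ALG$ decides to open a facility it is ``betting'' on one satellite, and with probability $\to 1$ future demand points land on a different satellite, incurring assignment cost $\approx 2\delta$ each --- i.e.\ the open facility at a satellite is worth almost nothing. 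So $\ALG$ either opens $\Omega(k)$ facilities (paying $\Omega(k)$, far more than $\OPT$), or it keeps most of its served demands at distance $\approx 2\delta$ from their nearest facility. In the latter regime I would lower-bound the total assignment cost: there are $n$ demands, all but a vanishing fraction are served at cost roughly $2\delta$, giving assignment cost $\approx 2n\delta$, versus $\OPT$'s assignment cost $\approx n\delta$ plus the single facility. Tuning $n\delta$ to a large constant (while keeping $k\delta$ small, which is possible since we may take $k \gg 1/\delta \gg 1$... actually $k\delta=o(1)$ forces $k < 1/\delta$, so instead one keeps $n$ large and notes $\OPT$ pays $1$ for the facility which amortizes against $n\delta$), the ratio $\E{\ALG}/\E{\OPT}$ approaches $2$ as the parameters go to their limits.

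The cleanest framing, which I would adopt to avoid messy case analysis, is to fix the metric to be the star with center $c$ and $k$ leaves at distance $1/2$ each (so leaf--leaf distance $1$), let $D$ be uniform on the $k$ leaves, and send $n$ demands. Then $\OPT$ opens one facility at $c$ (cost $1$) and pays $1/2$ per demand, so $\E{\OPT} \le 1 + n/2$. For $\ALG$: consider the first time it opens any facility. Before that, every arriving demand is served at distance $0$ only if a facility coincides with it, impossible, so distance is $+\infty$ unless... this forces $\ALG$ to open at the first demand. After opening at some leaf $u_{i_1}$, each subsequent demand is at $u_{i_1}$ with probability $1/k$ (cost $0$) and elsewhere with probability $1-1/k$ (cost $1$ to the nearest leaf, or it opens a new facility). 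A potential/amortization argument shows the per-demand expected cost is at least $(1-1/k)\cdot\min\{1, \text{cost of opening}\} = (1-1/k)\cdot 1$ amortized, unless it opens $\Theta(k)$ facilities --- but opening $m$ facilities costs $m$ and saves at most $\approx (m/k)\cdot n$ in assignment, which for $m \le k$ saves at most $n$, so it cannot do better than total cost $\approx n$ from assignment alone plus $\Omega(1)$. Hence $\E{\ALG} \ge n - o(n)$, and $\E{\ALG}/\E{\OPT} \ge n/(1+n/2) \to 2$ as $k\to\infty$ then $n\to\infty$.

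The main obstacle I anticipate is making the ``an open facility at a leaf is almost worthless, and opening many facilities does not help'' argument rigorous without hand-waving: one must carefully handle adaptive strategies that open facilities at multiple leaves over time (possibly a constant fraction of the $k$ leaves), and show that the amortized cost per demand still tends to $1$. I would handle this by a direct accounting: let $m$ be the (random) total number of facilities $\ALG$ opens; then $\E{\ALG} \ge \E{m} + \E{\sum_\ell d(F_\ell, v_\ell)}$, and since at the moment $v_\ell$ arrives at most $m$ leaves are ``covered'', $\Pr[v_\ell \text{ covered} \mid \text{history}] \le m/k$, so $\E{\sum_\ell d(F_\ell,v_\ell)} \ge \tfrac12\,\E{\sum_\ell (1 - m/k)} \ge \tfrac{n}{2}\E{1 - m/k}$; combining, $\E{\ALG} \ge \E{m} + \tfrac{n}{2}(1 - \E{m}/k) = \tfrac n2 + \E{m}(1 - \tfrac{n}{2k})$, which for $k \ge n/2$ is at least $n/2 + \text{(something)}$ --- hmm, this only gives ratio $1$, so the construction must be refined so that covering a leaf costs the adversary-facing algorithm more, e.g.\ by having each leaf itself be a small far-apart cloud of demand points so that a facility is worth $1$ per leaf-cloud but the optimum still pays only once for the center; this is exactly the structure in the proof of Theorem~\ref{thm:tight_4}, and I expect the authors' construction to resemble $\cI_k$ there with the distribution placing mass on the $u_i$'s. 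Getting the two competing requirements --- $\OPT$ amortizes its facility, $\ALG$ cannot amortize anything --- to simultaneously push the ratio to exactly $2$ is the delicate part.
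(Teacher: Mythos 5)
There is a genuine gap, and it is the central one: in the $\IID$ model the algorithm knows the metric space and the distribution $D$ in advance, so any construction whose good center is a \emph{fixed, identifiable} point of the metric --- such as the center $c$ of your star, or the center of a ``cloud'' refinement in the spirit of the instance $\cI_k$ from Theorem~\ref{thm:tight_4} --- cannot give a lower bound above $1$: the algorithm simply opens a facility at $c$ upon the first arrival and thereafter pays $1/2$ per demand, matching $\OPT\le 1+n/2$ up to lower-order terms. Your own accounting at the end detects this (you find the bound degenerates to ratio $1$), but you attribute it to a deficiency of the amortization and propose a fix (clouds at the leaves) that does not remove the real obstruction, since the cloud centers are still known to the algorithm; note that $\cI_k$ works in Theorem~\ref{thm:tight_4} only because Meyerson's algorithm is constrained to open facilities at demand points with prescribed probabilities, whereas here you must defeat an arbitrary algorithm with full knowledge of $D$. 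The missing idea is a mechanism that makes the identity of the useful center depend on the \emph{realized} random demand set, so that no algorithm can guess it in advance.

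The paper achieves exactly this: take $m=n^2$ demand locations $x_1,\dots,x_m$ pairwise at distance $1$, let $D$ be uniform on them, and add a ``subset point'' $s_{\CS{I}}$ at distance $1/2$ from $x_j$ for $j\in\CS{I}$ (and $1$ otherwise) for \emph{every} $n$-subset $\CS{I}\subseteq[m]$. Then $\OPT\le 1+n/2$ because some subset point covers the realized demands, but from the algorithm's viewpoint each subset facility it opens costs $1$ and saves at most $1/2$ immediately plus $1/2\cdot n/m$ per future round, for a net expected saving of at most $1/2+n^2/(2m)-1=0$; likewise repeated locations save at most $n^2/m=1$ in total. Hence $\E{\ALG}\ge n-1$ and the ratio tends to $2$. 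Your proposal never reaches this ``exponentially many tailored centers over a location set of size $\gg n$'' idea, and without it the construction and the concluding limit argument do not go through.
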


\begin{proof}
Let $m = n^2$. We construct a metric space with $m + \binom{m}{n}$ points. The metric space consists of two types of points: The first type consists of $m$ points, $x_1,\dots,x_m$, with $d(x_i,x_j) = 1$ for all $i,j\in [m]$. The second type of points are called \textit{subset points}. For each subset $\CS{I} \subseteq [m]$ of cardinality $n$, there is a point $s_{\CS{I}}$ with $d(s_\CS{I},x_j) = 1/2$ if $j \in \CS{I}$ and $d(s_\CS{I},x_j) = 1$ otherwise. Finally, for two subset points $s_\CS{I} \neq s_\CS{J}$, $d(s_\CS{I},s_\CS{J}) = 1$. For the distribution $D$, we take the uniform distribution over $\{x_1,\dots,x_m\}$. 

To upper bound the cost of $\OPT$, observe that the set of arriving demand points $\{v_1,\dots,v_n\}$ is a subset of the points in $\{x_1,\dots,x_m\}$ of cardinality at most $n$, and therefore, there is a subset point $s_{\CS{I}}$ at distance at most $1/2$ from all the arriving demand points. Hence, $\E{\OPT} \leq 1 + n/2$.

We now consider the performance of $\ALG$. When a demand point $v_\ell$ arrives at $x_i$, we distinguish between three cases. First, if there is an open facility at $x_i$, the algorithm can serve the demand point at no cost. Second, if there is an open facility at a subset point $s_\CS{I}$ for $i \in \CS{I}$, the algorithm can assign the demand point to $s_{\CS{I}}$ at a cost of $1/2$. Otherwise, it must pay at least $1$ for serving $v_\ell$ (either by opening a facility at $v_\ell$ or by assigning it to an open facility).


To lower bound the cost of $\ALG$, we start by charging $\ALG$ a cost of $1$ for each demand point. Then, we subtract the cost saved by $\ALG$ due to demand points that arrive at the same location, and due to the opening of subset facilities. At online round $\ell \in [n]$, the probability that $v_\ell$ arrives at the same location as one of the previous demand points $\{v_1,\dots,v_{\ell-1}\}$, is at most $(\ell - 1)/m < n/m$. Hence, the expected cost saved by $\ALG$ for serving demand points that arrive at the same location is at most $n^2/m = 1$.

For the cost saved by subset facilities, when the algorithm opens a subset facility $s_{\CS{I}}$ at online round $\ell$, it pays an opening cost of $1$, and saves a cost of at most $1/2$ for the assignment cost of $v_\ell$ (the demand point at round $\ell$). Then, at each successive online round $j \in \{ \ell+1,\dots,n\}$, $s_{\CS{I}}$ saves a cost of at most $1/2$ if $v_j$ arrives at some $x_i$ for $i \in \CS{I}$, which happens with probability $n/m$. Hence, the expected cost saved by opening a facility at $s_\CS{I}$ is at most $1/2 + \frac{n^2}{2m} - 1 = \frac{n^2}{2m} - 1/2$ (note that we subtract $1$ to account for the facility opening cost). Since $m = n^2$, we have $\frac{n^2}{2m} - 1/2 = 0$, and so, the expected cost of $\ALG$ is not reduced by opening subset facilities.

We get that $\E{\ALG} \geq n - 1$. And so, $\E{\ALG}/\E{\OPT} \geq (n-1)/(1+n/2) = 2 - 6/(n+2)$, which approaches $2$ as $n$ approaches infinity.
\end{proof}

A simple and well-known result is that any $c$-competitive online algorithm in the random-order model, is also $c$-competitive in the $\IID$ model (see e.g.,~\cite{gupta2022random,mehta2013online}). Hence, we have the following corollary.

\begin{corollary}
Let $\ALG$ be an algorithm for online facility location, then, the random-order competitive-ratio of $\ALG$ is at least $2 - o(1)$.
\end{corollary}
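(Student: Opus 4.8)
The plan is to obtain this corollary directly from Theorem~\ref{thm:lower_bound} together with the folklore fact, already stated just above, that any algorithm which is $c$-competitive in the random-order model is $c$-competitive in the $\IID$ model. So let $\ALG$ be $c$-competitive in the random-order model; I want to conclude $c \ge 2 - o(1)$. By the reduction, $\ALG$ is also $c$-competitive in the $\IID$ model, and then Theorem~\ref{thm:lower_bound} gives $c \ge 2 - o(1)$ immediately. Since $\ALG$ was arbitrary, the random-order competitive-ratio of any algorithm is at least $2 - o(1)$.

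For completeness I would also spell out the (standard) reduction. Fix an $\IID$ instance $\cI = (\cM, d, D, n)$ and condition on the realized demand multiset $S = \{v_1,\dots,v_n\}$. The key point is exchangeability: conditioned on $S$, the arrival sequence $v_1,\dots,v_n$ is a uniformly random ordering of the elements of $S$. Hence, treating $S$ as a fixed input multiset, the random-order guarantee of $\ALG$ applies conditionally and yields $\E{\given{\ALG(\cI)}{S}} \le c \cdot \OPT(S)$. Taking expectations over $S$, and noting that $\E{\OPT(S)} = \E{\OPT(\cI)}$, gives $\E{\ALG(\cI)} \le c \cdot \E{\OPT(\cI)}$, i.e., $c$-competitiveness in the $\IID$ model.

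There is essentially no obstacle here; the only subtleties are bookkeeping. When $D$ has atoms, distinct draws may land on the same point, so ``uniformly random ordering of $S$'' must be read at the level of multisets / exchangeable sequences rather than distinct labels --- but exchangeability of the conditional arrival order is exactly what the argument uses, so this is harmless. Also, the random-order guarantee must be invoked for an algorithm that may use knowledge of $n$, which only helps the algorithm and is therefore the correct direction for a lower bound (and, as the footnote in Section~\ref{sec:def} observes, knowledge of $n$ is not even required for this problem). Combining the reduction with Theorem~\ref{thm:lower_bound} yields the claimed $2 - o(1)$ lower bound on the random-order competitive-ratio.
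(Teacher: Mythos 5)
Your proposal is correct and follows exactly the paper's route: the paper also derives this corollary by combining Theorem~\ref{thm:lower_bound} with the standard reduction that a $c$-competitive random-order algorithm is $c$-competitive in the $\IID$ model (which the paper cites as well known rather than re-proving). Your spelled-out exchangeability argument for the reduction, including the multiset/atom bookkeeping, is a sound elaboration of that same step.
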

\section{Mixed Adversarial and Random Arrival Order}\label{sec:mixed}

Interestingly, Meyerson's analysis of $\rofl[]$ in~\cite{meyerson2001online} does not fully utilize the random arrival order of the demand points. More concretely, in~\cite{meyerson2001online} the demand points in each cluster of $\OPT$ are partitioned into  ``close'' and ``far'' points. The close points are half of the demand points which are closest to the optimal center. The analysis in~\cite{meyerson2001online} proves that $\rofl[]$ is $8$-competitive, regardless of the relative ordering of the close points in each cluster (amongst themselves). It only requires that each far point arrives in a uniformly random position between the close points. This result indicates that $\rofl[]$ is robust to adversarial ordering of the close points within each cluster.

Since our analysis does not rely on Meyerson's partition into close and far points, we can prove a stronger statement regarding the robustness of $\rofl$ to partial adversarial orders. We show that for a parameter $\rho \in (0,1)$, with an additional cost of a factor of at most $(2-\rho)/\rho$ in the competitive-ratio of $\rofl$, our analysis holds even when an arbitrary $(1-\rho)$-fraction of the demand points in each cluster are ordered adversarially, and the remaining demand points are randomly positioned between them.

Formally, for $\rho \in (0,1)$, in the \textit{$\rho$-partial random-order} setting, the following process generates the online sequence:
\begin{enumerate}
\item The adversary chooses an input instance, i.e., a metric space $(\CS{M},d)$ and the multiset of demand points $\CS{U}$. Let $\CS{C}^*_1,\dots,\CS{C}^*_t$ be the clusters in an optimal solution, and let $n_j = |\CS{C}^*_j|$.
\item For each cluster $\CS{C}^*_j$, the adversary chooses a subset $\CS{A}^*_j \subseteq \CS{C}^*_j$ of cardinality $\lfloor (1-\rho)  \cdot n_j \rfloor$ to arrive in adversarial order, and let $\CS{R}^*_j  = \CS{C}^*_j \setminus \CS{A}^*_j$ be the subset of remaining demand points. We call the demand points in $\CS{A}^*_j$ adversarial-order points, and the points in $\CS{R}^*_j$ random-order points.\label{step:2}

\item The adversary orders the demand points in $\cup_{i=1}^{t} \CS{A}^*_i$. Then, for each cluster $\CS{C}^*_j$, the points in $\CS{R}^*_j$ are randomly positioned between the demand points in $\CS{A}^*_j$. That is, the relative position of $u \in \CS{R}^*_j$ among the points in $\CS{A}^*_j$ is chosen uniformly at random. More concretely, for each demand point $u \in \CS{R}^*_j$ an index position $s \in \{0,\dots, |\CS{A}^*_j|\}$ is drawn uniformly at random. Then, in the online sequence, $u$ must be positioned by the adversary between the $s$th and $(s+1)$th adversarial points in $\CS{A}^*_j$ (if $s = |\CS{A}^*_j|$, $u$ must be positioned after all the demand points in $\CS{A}^*_j$. The absolute positions of the demand points in $\cup_{i=1}^{t} \CS{R}^*_i$ in the online sequence are then chosen by the adversary, while keeping the relative order between the demand points in each cluster.
\end{enumerate}

The definition of the competitive-ratio in this setting is standard. An algorithm $\ALG$ is called $c$-competitive if for any input instance $\cI = (\CS{M},d,\demands, \CS{A}^*_1,\dots,\CS{A}^*_t)$, $\E{\ALG(\cI)} \leq c \cdot \OPT$, where the expectation is taken over the random positioning of the random-order demand points $\CS{R}^*_1,\dots,\CS{R}^*_t$, and the internal randomness of the algorithm.

To analyze $\rofl$ in the $\rho$-partial random-order setting, we bound the expected cost of $\rofl$ for the adversarial-order points in terms of the cost for the random-order points. As before, we focus our attention on a single cluster of $\OPT$, $\optcluster$. The idea is simple: Our analysis from Section~\ref{sec:improve} applies to the random-order demand points $\CS{R}^*$ in the cluster. Then, for each adversarial-order point $u \in \CS{A}^*$, if $u$ arrives after a random-order demand point $u' \in \CS{R}^*$, we can upper bound the distance from $u$ to its closest open facility by $\distalg{u'} + d^*_{u'} + d^*_{u}$. Otherwise, if there are no demand points in $\CS{R}^*$ that arrive before $u$, we can simply bound the cost paid for the service of $u$ by $1$.\footnote{We note that to get the upper-bound of $1$ on the service cost, we need to choose the piecewise function $g(\distletter) = q \cdot \distletter$ if $\distletter \leq 1$ and $g(\distletter) = 1$ otherwise, instead of $g(\distletter) = \min\{q\cdot \distletter,1\}$.} We formalize this intuition in the next theorem.

\begin{theorem}\label{thm:rho_parital}
for $\rho \in (0,1)$, $\rofl$ is $ (1+q)\max\{3/\rho - 1,(2/\rho -1)/q\}$-competitive in the $\rho$-partial random-order setting.
\end{theorem}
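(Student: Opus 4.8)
The plan is to reuse the cluster-by-cluster analysis from Section~\ref{sec:improve} essentially verbatim for the random-order points, and then pay an extra charge to cover the adversarial-order points by "renting" an already-opened balanced facility. Fix a cluster $\optcluster$ with center $\optcenter$, write $\CS{R}^* = \CS{R}^*_j$ and $\CS{A}^* = \CS{A}^*_j$, and set $n_{\CS{R}} = |\CS{R}^*|$, $n_{\CS{A}} = |\CS{A}^*|$. The key structural observation is that, \emph{conditioned on the positions of the adversarial-order points}, the relative order of the points of $\CS{R}^*$ within $\optcluster$ is still uniformly random among themselves, and each one is uniformly positioned relative to $\CS{A}^*$; this is exactly what makes Lemma~\ref{lem:exp_v_T} and Lemma~\ref{lem:roflq_until_opening} go through with $\optcluster$ replaced by $\CS{R}^*$ (the balanced/imbalanced coin machinery only ever references the ordering of points inside $\optcluster$, and the induction in Lemma~\ref{lem:roflq_until_opening} conditions on the full history anyway). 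So I would first restate: letting $T$ be the first round in which a balanced facility from $\CS{R}^*$ is opened, $\uptocpoints{T}^{\CS{R}} = \{v_1,\dots,v_T\}\cap \CS{R}^*$ and $\cpoints{T+1}^{\CS{R}} = \CS{R}^* \cap \{v_{T+1},\dots\}$, exactly as in the proof of Theorem~\ref{thm:qrofl-competitive} we get
\[
\E{\ALG(\CS{R}^*)} \leq 1 + 1/q + 2(1+q)\E{\textstyle\sum_{u \in \CS{R}^*} d^*_u}.
\]

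Next I would bound the cost of the adversarial-order points. For $u \in \CS{A}^*$, let $u'$ be the first random-order point of $\CS{R}^*$ to arrive before $u$ (if any). If $u'$ exists then at the time $u$ arrives there is an open facility within distance $\distalg{u'} + d^*_{u'} + d^*_u$ of $u$ by two triangle inequalities through $u'$ and $\optcenter$ (the facility that served $u'$ is at distance $\distalg{u'}$ from $u'$, which is at distance $d^*_{u'}$ from $\optcenter$, which is at distance $d^*_u$ from $u$), and moreover $\distalg{u} \le 1$ whenever no facility is opened, so the service cost of $u$ is at most $\min\{1,\distalg{u}\} + \distalg{u} \le 2\distalg{u}$ type bound — more simply, its expected service cost is at most $\E{\distalg{u'} + d^*_{u'} + d^*_u} + \dots$; here is where I invoke the footnote's piecewise $g$ so that the service cost is genuinely $\le 1$ when $\distalg{u}$ could exceed $1$. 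If $u'$ does not exist, the service cost of $u$ is at most $1$, but the number of such $u$ is controlled: since each random-order point is uniformly positioned among the $n_{\CS{A}}+1$ gaps of $\CS{A}^*$, the probability that a given $u \in \CS{A}^*$ has no random-order point before it is $(1/(n_{\CS{A}}+1))^{n_{\CS{R}}}$, which is tiny, but cleaner is to bound the \emph{expected} number of adversarial points preceding all of $\CS{R}^*$ by $\le n_{\CS{A}}/(n_{\CS{R}}+1) \le (n_{\CS{A}}+n_{\CS{R}})/(n_{\CS{R}}+1)$, and since $n_{\CS{R}} \ge \rho n_j - 1$ and $n_{\CS{A}} \le (1-\rho) n_j$ this is $\le (2-\rho)/\rho + O(1)$ — absorbed into the $\OPT(\optcluster) = 1 + \sum d^*_u$ term up to the stated factor. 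The remaining sum over $u \in \CS{A}^*$ with a predecessor $u'$ contributes $\sum_{u \in \CS{A}^*}(\E{\distalg{u'}} + d^*_{u'} + d^*_u)$; using $\E{\distalg{u'}} \le (1+1/q)\E{\coin{u'}}$-style bounds (Lemma~\ref{lem:price_factor_roflq}) and the fact that $u'$ ranges over $\CS{R}^*$ with each $w \in \CS{R}^*$ being the predecessor of at most $n_{\CS{A}}/n_{\CS{R}} \cdot(\text{stuff})$ of the adversarial points in expectation (again by uniform positioning), everything reduces to a constant multiple — $(2-\rho)/\rho$ roughly — of $\E{\ALG(\CS{R}^*)}$ plus a constant multiple of $\sum_{u\in\optcluster} d^*_u$.

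Putting the two pieces together, $\E{\ALG(\optcluster)} = \E{\ALG(\CS{R}^*)} + \E{\ALG(\CS{A}^*)} \le \big(\text{const}(\rho)\big)\cdot\big(1 + 1/q\big) + \big(\text{const}(\rho)\big)\cdot 2(1+q)\E{\sum_{u\in\optcluster} d^*_u}$, and I would then bookkeep the constants so that the facility-term coefficient is $(1+q)(3/\rho - 1)$ and the distance-term coefficient is $(1+q)(2/\rho-1)/q$, giving $\E{\ALG(\optcluster)}/\OPT(\optcluster) \le (1+q)\max\{3/\rho-1,(2/\rho-1)/q\}$; summing over clusters finishes the theorem. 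The main obstacle, and where I would spend the most care, is the precise accounting in the paragraph above: making sure that when one "charges" each adversarial point $u$ to its random-order predecessor $u'$, the total charge laid on any single $w\in\CS{R}^*$ (in expectation over the random positioning) is bounded by the right function of $\rho$, since $n_{\CS{A}}/n_{\CS{R}} \le (1-\rho)/\rho$ and the adversary controls which points are adversarial and their order — one must verify the uniform-positioning expectation genuinely yields the clean ratio and that the floor in $\lfloor(1-\rho)n_j\rfloor$ only costs lower-order terms. The rest is a direct transcription of the Section~\ref{sec:improve} machinery.
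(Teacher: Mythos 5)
Your skeleton is the paper's: decompose $\E{\ALG(\optcluster)} = \E{\ALG(\CS{R}^*)} + \E{\ALG(\CS{A}^*)}$, reuse the Section~\ref{sec:improve} machinery verbatim on the random-order points to get $\E{\ALG(\CS{R}^*)} \leq 1 + 1/q + 2(1+q)\E{\sum_{u\in\CS{R}^*} d^*_u}$, and charge each adversarial point through a preceding random-order point and $\optcenter$ via two triangle inequalities, using the uniform positioning to control the expected charge. However, as written the proposal has two genuine gaps. First, the decisive quantitative step is never carried out: you assert that ``bookkeeping the constants'' yields coefficients $(1+q)(3/\rho-1)$ and $(1+q)(2/\rho-1)/q$, but this is exactly the content of the theorem. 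The paper gets it by a clean aggregation that sidesteps your per-point load accounting entirely: take a uniformly random adversarial round $Z$, use the symmetry of the positioning to argue that $Z$ falls between consecutive random-order arrivals $u_j,u_{j+1}$ with probability $1/(k+1)$ (the ``no predecessor'' case is absorbed as a single $\tfrac{1}{k+1}\cdot 1$ term, so no separate estimate is needed), multiply by $|\CS{A}^*|$, and use $|\CS{A}^*|/(k+1) \leq 1/\rho - 1$. Your alternative estimate that the probability an adversarial point has no random-order predecessor is $(1/(n_{\CS{A}}+1))^{n_{\CS{R}}}$ is wrong (for the first adversarial point it is $(n_{\CS{A}}/(n_{\CS{A}}+1))^{n_{\CS{R}}}$, a constant of order $e^{-\rho/(1-\rho)}$, not tiny); your fallback expected-count bound is the right idea but must be verified under the independent-gap model, and it is precisely what the paper's $|\CS{A}^*|/(k+1)$ term already delivers.

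Second, the step ``$\E{\distalg{u'}} \leq (1+1/q)\E{\coin{u'}}$-style bounds (Lemma~\ref{lem:price_factor_roflq})'' would fail: that lemma bounds the \emph{service cost} of $u'$, not its distance, and $\distalg{u'}$ is unbounded in terms of $\coin{u'}$ whenever the cap binds ($\coin{u'}=1$). The paper avoids this by never uncapping: for an adversarial point $u$ arriving at round $\ell$ after $u_j$ it bounds $\coin{u} \leq \min\{q\, d(F_{\ell-1},\optcenter),1\} + q\, d^*_u \leq \coin{u_j} + q\, d^*_{u_j} + q\, d^*_u$, using $\min\{q(a+b),1\} \leq \min\{qa,1\} + qb$, then applies Lemma~\ref{lem:price_factor_roflq} only once at the end to convert $\E{\sum_{u\in\CS{A}^*}\coin{u}}$ (together with $\E{\sum_{u\in\CS{R}^*}\coin{u}} \leq 1 + 2q\,\E{\sum_{u\in\CS{R}^*} d^*_u}$ from Lemma~\ref{lem:roflq_until_opening}) into cost. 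If you replace your per-predecessor charging and the uncapped distance bound by this capped, averaged argument, your outline becomes the paper's proof.
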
 

\begin{proof}
Let $u_{1},\dots,u_{k}$ be the demand points in $\CS{R}^*$ ordered by their arrival order. We have by Theorem~\ref{thm:qrofl-competitive} that $\E{\ALG(\CS{R}^*)} \leq 1 + 1/q + 2 (1+q) \sum_{u\in \CS{R}^*} d^*_u$. For the adversarial-order demand points, we bound $\E{\sum_{u \in \CS{A}^*} \coin{u}}$ (recall that $\coin{u} = \min\{ q\cdot \distalg{u},1\}$). We first consider the distance from the optimal center $\optcenter$ to the closest open facility at the online rounds in which adversarial-order demand points arrive. Let $J_{\CS{A}} \subseteq [n]$ be the set of online rounds in which an adversarial demand point from $\CS{A}^*$ arrives, and let $Z \in J_{\CS{A}}$ be a random variable that gets a uniformly random online round in $J_{\CS{A}}$. Randomly positioning the random-order online rounds within the adversarial-order online rounds is equivalent to randomly positioning the adversarial-order online rounds within the random-order online-rounds. Hence, the probability that $Z$ arrives between $u_j$ and $u_{j+1}$ is $1/(k + 1)$, and in this case we have $d(F_{Z-1}, c^*) \leq \distalg{u_j} + d^*_{u_j}$, and so, $\min\{q \cdot d(F_{Z-1}, c^*),1\} \leq \min\{q\cdot \distalg{u_j},1\} + q\cdot d^*_{u_j} = \coin{u_j} + q\cdot d^*_{u_j}$. We get that 
\begin{align*}
    \E{\min\{q\cdot d(F_{Z-1}, c^*),1\}} \leq \frac{1}{k+1} + \frac{1}{k+1} \sum_{j=1}^{k} \left( \E{ \coin{u_j}} + q\cdot d^*_{u_j}\right).
\end{align*}
Now, we can simply upper bound $\distalg{v_\ell} \leq d(F_{\ell-1},c^*) + d^*_{v_\ell}$, and so,
\begin{align}
    \E{\sum_{u \in \CS{A^*}}\coin{u}} &= \E{\sum_{u \in \CS{A^*}}\min\{q\cdot \distalg{u},1 \}} \notag \\
    & \leq \E{\sum_{\ell \in J_{A}} \min\{q \cdot (d(F_{\ell-1}, c^*) + d^*_{v_\ell}) ,1 \}}
    \notag \\
    &\leq  \E{\sum_{\ell \in J_{A}} \min\{q \cdot d(F_{\ell-1}, c^*) ,1 \}} + q\sum_{u\in \CS{A}^*} d^*_{u}
    \notag \\
    &\leq 
    |\CS{A^*}| \cdot \left(\frac{1}{k+1} + \frac{1}{k+1} \sum_{j=1}^{k} \left( \E{ \coin{u_j}} + q\cdot d^*_{u_j}\right) \right) + q \sum_{u \in \CS{A}^*} d^*_u \label{eq:line_robust}\\
    &\leq \left(1/\rho - 1\right) \left(1+  \E{\sum_{u \in \CS{R}^*} \coin{u} } + q\sum_{u \in \optcluster} d^*_u \right), \notag
\end{align}
where in the last inequality, we used the fact that $|\CS{A}^*|/(k+1) = |\CS{A}^*|/(|\CS{R}^*|+1) \leq 1/\rho - 1$.
Now, similarly to Theorem~\ref{thm:qrofl-competitive} we have $ \E{\sum_{u \in \CS{R}^*} \coin{u} } \leq 1 + 2q \sum_{u \in \CS{R}^*}d^*_u$, together with Lemma~\ref{lem:price_factor_roflq}, we get that $\E{\ALG(\CS{A}^*)} \leq (1+1/q)(1/\rho -1)(2 + 3q \sum_{u \in \optcluster} d^*_u)$. Finally, we add the cost paid for the service of the points in $\CS{R}^*$, and get that
\begin{align*}
    \E{\ALG(\optcluster)} &= \E{\ALG(\CS{A}^*)} + \E{\ALG(\CS{R}^*)} \\ 
    &\leq 
    \left(1+\frac{1}{q}\right)\left(\frac{1}{\rho} -1\right)\left(2 + 3q \sum_{u \in \optcluster} d^*_u\right) + 1 + \frac{1}{q} + 2 (1+q) \sum_{u\in \CS{R}^*} d^*_u \\
    &\leq \left(1+\frac{1}{q}\right)\left(\frac{2}{\rho} -1\right) + (1+q)\left( \frac{3}{\rho} - 1\right) \sum_{u\in \optcluster} d^*_u. \qedhere
\end{align*}
\end{proof}

We note that in a similar setting, where the adversarial-order demand points are not chosen by the adversary, but rather drawn randomly, we can obtain a better upper bound on the performance of $\rofl$. 

More formally, consider Step~\ref{step:2} in the random process that generates the online sequence in the $\rho$-partial random-order setting, and consider the case where for all $j\in [t]$, $A^*_j \subseteq \CS{C}^*_j$ is a uniformly random subset of cardinality $\lfloor (1-\rho) n_j \rfloor$ (instead of an adversarially chosen subset of the same cardinality), and $R^*_j = \CS{C}^* \setminus A^*_j$. We refer to this setting by \textit{$\rho$-partial random-order with random adversarial-order points}. With small adaptations in the proof of Theorem~\ref{thm:rho_parital}, and by using the fact that in this setting $\E{\sum_{u \in R^*} d^*_u} = \frac{|R^*|}{|\optcluster|} \sum_{u \in \optcluster} d^*_u $ and $\E{\sum_{u \in A^*} d^*_u} = \frac{|A^*|}{|\optcluster|} \sum_{u \in \optcluster} d^*_u $ we get the following result.

\begin{theorem}\label{thm:rho_partial_random}
for $\rho \in (0,1)$, $\rofl$ is $ (1+q)\max\{4-2\rho,(2/\rho -1)/q\}$-competitive in the $\rho$-partial random-order setting with random adversarial-order points.
\end{theorem}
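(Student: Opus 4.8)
The plan is to follow the structure of the proof of Theorem~\ref{thm:rho_parital}, making only the adaptations needed to exploit that the adversarial-order points $A^*$ are now a \emph{uniformly random} subset of $\optcluster$ of size $\lfloor (1-\rho) n_j\rfloor$, rather than adversarially chosen. As in Theorem~\ref{thm:rho_parital}, I would fix a single cluster $\optcluster$ and split its cost as $\E{\ALG(\optcluster)} = \E{\ALG(\CS{A}^*)} + \E{\ALG(\CS{R}^*)}$. The bound on $\E{\ALG(\CS{R}^*)}$ is unchanged: conditioned on the choice of $\CS{R}^*$, its points arrive in random order and Theorem~\ref{thm:qrofl-competitive} gives $\E{\ALG(\CS{R}^*) \mid \CS{R}^*} \leq 1 + 1/q + 2(1+q)\sum_{u \in \CS{R}^*} d^*_u$, and then taking expectation over the random subset and using $\E{\sum_{u\in \CS{R}^*} d^*_u} = \frac{|\CS{R}^*|}{|\optcluster|}\sum_{u\in\optcluster} d^*_u \le \sum_{u\in\optcluster}d^*_u$.

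For $\E{\ALG(\CS{A}^*)}$, I would reuse the chain of inequalities culminating in line~\eqref{eq:line_robust} of the proof of Theorem~\ref{thm:rho_parital}, working conditionally on the partition into $\CS{A}^*$ and $\CS{R}^*$ (the relative positioning argument is identical once the partition is fixed, since it only uses that the adversarial online rounds are inserted in uniformly random relative positions among the $k=|\CS{R}^*|$ random-order rounds). This gives, conditionally,
\[
\E{\sum_{u \in \CS{A}^*}\coin{u} \ \Big|\ \CS{A}^*,\CS{R}^*} \leq \frac{|\CS{A}^*|}{k+1}\left(1 + \E{\sum_{u\in\CS{R}^*}\coin{u}\ \Big|\ \CS{R}^*}\right) + \frac{|\CS{A}^*|}{k+1}\, q \sum_{u\in\CS{R}^*} d^*_u + q \sum_{u\in\CS{A}^*} d^*_u,
\]
and $|\CS{A}^*|/(k+1) \le 1/\rho - 1$ still holds deterministically. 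The difference from Theorem~\ref{thm:rho_parital} is in how the $d^*$-terms aggregate after taking expectation over the random partition: instead of crudely bounding $\sum_{u\in\CS{A}^*}d^*_u + \sum_{u\in\CS{R}^*}d^*_u = \sum_{u\in\optcluster}d^*_u$ and $q\sum_{u\in\CS{R}^*}d^*_u \le q\sum_{u\in\optcluster}d^*_u$, I would use $\E{\sum_{u\in\CS{A}^*}d^*_u} = \frac{|\CS{A}^*|}{|\optcluster|}\sum_{u\in\optcluster}d^*_u$, which for a random subset is a strictly smaller fraction; similarly $\E{\sum_{u\in\CS{R}^*}d^*_u} = \frac{|\CS{R}^*|}{|\optcluster|}\sum_{u\in\optcluster}d^*_u = \frac{k}{|\optcluster|}\sum_{u\in\optcluster}d^*_u$. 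Combining with $\E{\sum_{u\in\CS{R}^*}\coin{u}\mid\CS{R}^*}\le 1 + 2q\sum_{u\in\CS{R}^*}d^*_u$ and Lemma~\ref{lem:price_factor_roflq}, I would collect the coefficient of $\sum_{u\in\optcluster}d^*_u$ and verify it equals $(1+q)(4-2\rho)$ in the regime where the assignment term dominates, and the constant term combines to at most $(1+q)(2/\rho-1)\cdot\text{(something)}$ matching the claimed $\max$.

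The main obstacle I anticipate is the bookkeeping in the aggregation step: tracking which $d^*$-sums are over $\CS{A}^*$, over $\CS{R}^*$, or over all of $\optcluster$, and getting the fractions $|\CS{A}^*|/|\optcluster| \approx 1-\rho$, $|\CS{R}^*|/|\optcluster|\approx\rho$, and $|\CS{A}^*|/(k+1) \le 1/\rho-1$ to interact correctly after taking expectations — in particular making sure that the product $\frac{|\CS{A}^*|}{k+1}\cdot q\sum_{u\in\CS{R}^*}d^*_u$, once expectations are taken, contributes a term like $(1/\rho-1)\cdot q\rho\sum_{u\in\optcluster}d^*_u = q(1-\rho)\sum_{u\in\optcluster}d^*_u$ rather than the worst-case $q(1/\rho-1)\sum_{u\in\optcluster}d^*_u$. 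This is precisely where the savings over Theorem~\ref{thm:rho_parital} come from, turning the $3/\rho-1$ coefficient into $4-2\rho$. I would double-check the final arithmetic by specializing to $\rho\to 1$ (should recover Theorem~\ref{thm:qrofl-competitive}, i.e.\ $(1+q)\max\{2,1/q\}$) and, for $q=1/2$, $\rho=0.9$, confirm the claimed numeric bound of $3.66$.
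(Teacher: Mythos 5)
Your proposal is correct and follows essentially the same route as the paper's proof: condition on the random partition, reuse the chain through Inequality~\eqref{eq:line_robust} with the partition fixed, and then average over the partition using $\E{\sum_{u\in \CS{A}^*}d^*_u} = \tfrac{|\CS{A}^*|}{|\optcluster|}\sum_{u\in\optcluster}d^*_u$ and $\E{\sum_{u\in \CS{R}^*}d^*_u} = \tfrac{|\CS{R}^*|}{|\optcluster|}\sum_{u\in\optcluster}d^*_u$, which collects to the coefficient $(1+q)(4-2\rho)$ exactly as in the paper. One caution: the crude bound $\E{\sum_{u\in\CS{R}^*}d^*_u}\le\sum_{u\in\optcluster}d^*_u$ mentioned when you bound $\E{\ALG(\CS{R}^*)}$ must not actually be used in the final aggregation --- the $2(1+q)$ term also needs the exact fraction $\approx\rho$ (as you correctly do later for the other sums), since otherwise the aggregated coefficient degrades to roughly $6-4\rho$ rather than $4-2\rho$.
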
 

The proof of Theorem~\ref{thm:rho_partial_random} is given in Appendix~\ref{proof:rho_partial_random}. For example, Theorem~\ref{thm:rho_partial_random} shows that $\rofl[]$ is $6$-competitive when a random half of the demand points in each cluster arrive in adversarial order. For comparison, if half of the demand points in each cluster are chosen adversarially, the upper bound that we get from Theorem~\ref{thm:rho_parital} on the competitive-ratio of $\rofl[]$ is only $10$.
\section{Discussion}\label{sec:discussion}

In this work, we resolve the open question regarding the true performance of Meyerson's algorithm ($\rofl[]$) in the random-order model. Furthermore, we introduce a general family of algorithms in the form of $\rofl[]$, and derive the best algorithm in this family, $\rofl[\nicefrac{1}{2}]$, which achieves the state-of-the-art performance. We prove that $\rofl[\nicefrac{1}{2}]$ is $3$-competitive and that the best possible competitive-ratio for this problem is $2$. 

Several interesting questions remain open for future research.
First, the true performance of Meyerson's algorithm for non-uniform facility costs in the random-order model remains open. It would be interesting to see if our techniques can be used to obtain tight analysis for this case too. Another interesting direction is to study the performance of the simple deterministic algorithm by Fotakis~\cite{fotakis2007primal} in the random-order model, which is still unknown. We note that a slight modification of the instance in the proof of our lower bound (Theorem~\ref{thm:lower_bound}) shows that the competitive-ratio of Fotakis' algorithm is no better than $3$. We prove this result in Appendix~\ref{apx:fotakis}.

Finally, a gap between the lower and upper bounds that we obtain for the facility location problem with uniform facility costs remains open. To the best of our knowledge, there are no candidate online algorithms in the literature which could outperform $\rofl[\nicefrac{1}{2}]$ and beat the competitive-ratio of $3$. Hence, if the competitive-ratio of $3$ is not optimal, new algorithmic ideas are needed to beat this bound. Additionally, since our lower bound holds in the weaker $\IID$ model (with full prior knowledge of the distribution), it would be interesting to study whether a competitive-ratio of $2$ can be achieved in the $\IID$ model.

\bibliographystyle{plain}
\bibliography{citations}

\begin{thebibliography}{10}

\bibitem{albers2021improved}
Susanne Albers, Arindam Khan, and Leon Ladewig.
\newblock Improved online algorithms for knapsack and gap in the random order
  model.
\newblock {\em Algorithmica}, 83(6):1750--1785, 2021.

\bibitem{almanza2021online}
Matteo Almanza, Flavio Chierichetti, Silvio Lattanzi, Alessandro Panconesi, and
  Giuseppe Re.
\newblock Online facility location with multiple advice.
\newblock {\em Advances in Neural Information Processing Systems (NeurIPS)},
  34:4661--4673, 2021.

\bibitem{anagnostopoulos2004simple}
Aris Anagnostopoulos, Russell Bent, Eli Upfal, and Pascal Van~Hentenryck.
\newblock A simple and deterministic competitive algorithm for online facility
  location.
\newblock {\em Information and Computation}, 194(2):175--202, 2004.

\bibitem{argue2022robust}
CJ~Argue, Anupam Gupta, Marco Molinaro, and Sahil Singla.
\newblock Robust secretary and prophet algorithms for packing integer programs.
\newblock In {\em Proceedings of the 2022 Annual ACM-SIAM Symposium on Discrete
  Algorithms (SODA)}, pages 1273--1297, 2022.

\bibitem{azar2022online}
Yossi Azar, Debmalya Panigrahi, and Noam Touitou.
\newblock Online graph algorithms with predictions.
\newblock In {\em Proceedings of the 2022 Annual ACM-SIAM Symposium on Discrete
  Algorithms (SODA)}, pages 35--66, 2022.

\bibitem{bahmani2012online}
Bahman Bahmani, Aranyak Mehta, and Rajeev Motwani.
\newblock Online graph edge-coloring in the random-order arrival model.
\newblock {\em Theory of Computing}, 8(1):567--595, 2012.

\bibitem{bhattacharya2021online}
Sayan Bhattacharya, Fabrizio Grandoni, and David Wajc.
\newblock Online edge coloring algorithms via the nibble method.
\newblock In {\em Proceedings of the 2021 ACM-SIAM Symposium on Discrete
  Algorithms (SODA)}, pages 2830--2842, 2021.

\bibitem{bradac2020robust}
Domagoj Bradac, Anupam Gupta, Sahil Singla, and Goran Zuzic.
\newblock Robust algorithms for the secretary problem.
\newblock In {\em 11th Innovations in Theoretical Computer Science Conference
  (ITCS)}, 2020.

\bibitem{charikar2003better}
Moses Charikar, Liadan O'Callaghan, and Rina Panigrahy.
\newblock Better streaming algorithms for clustering problems.
\newblock In {\em Proceedings of the thirty-fifth annual ACM Symposium on
  Theory of Computing (STOC)}, pages 30--39, 2003.

\bibitem{cygan2018online}
Marek Cygan, Artur Czumaj, Marcin Mucha, and Piotr Sankowski.
\newblock Online facility location with deletions.
\newblock In {\em 26th Annual European Symposium on Algorithms (ESA)}, 2018.

\bibitem{FeldmanSZ18}
Moran Feldman, Ola Svensson, and Rico Zenklusen.
\newblock A framework for the secretary problem on the intersection of
  matroids.
\newblock In {\em Proceedings of the Twenty-Ninth Annual {ACM-SIAM} Symposium
  on Discrete Algorithms (SODA)}, pages 735--752, 2018.

\bibitem{fotakis2003competitive}
Dimitris Fotakis.
\newblock On the competitive ratio for online facility location.
\newblock In {\em International Colloquium on Automata, Languages, and
  Programming (ICALP)}, pages 637--652, 2003.

\bibitem{fotakis2007primal}
Dimitris Fotakis.
\newblock A primal-dual algorithm for online non-uniform facility location.
\newblock {\em Journal of Discrete Algorithms}, 5(1):141--148, 2007.

\bibitem{fotakis2011online}
Dimitris Fotakis.
\newblock Online and incremental algorithms for facility location.
\newblock {\em ACM SIGACT News}, 42(1):97--131, 2011.

\bibitem{fotakis2021learning}
Dimitris Fotakis, Evangelia Gergatsouli, Themis Gouleakis, and Nikolas Patris.
\newblock Learning augmented online facility location.
\newblock {\em arXiv preprint arXiv:2107.08277}, 2021.

\bibitem{guha2003clustering}
Sudipto Guha, Adam Meyerson, Nina Mishra, Rajeev Motwani, and Liadan
  O'Callaghan.
\newblock Clustering data streams: Theory and practice.
\newblock {\em IEEE Transactions on Knowledge and Data Engineering},
  15(3):515--528, 2003.

\bibitem{gupta2022random}
Anupam Gupta, Gregory Kehne, and Roie Levin.
\newblock Random order online set cover is as easy as offline.
\newblock In {\em 2021 IEEE 62nd Annual Symposium on Foundations of Computer
  Science (FOCS)}, pages 1253--1264, 2022.

\bibitem{gupta2021random}
Anupam Gupta and Sahil Singla.
\newblock Random-order models.
\newblock {\em Beyond the Worst-Case Analysis of Algorithms}, pages 234--258,
  2021.

\bibitem{jiang2021online}
Shaofeng H-C Jiang, Erzhi Liu, You Lyu, Zhihao~Gavin Tang, and Yubo Zhang.
\newblock Online facility location with predictions.
\newblock {\em arXiv preprint arXiv:2110.08840}, 2021.

\bibitem{kaplan2020competitive}
Haim Kaplan, David Naori, and Danny Raz.
\newblock Competitive analysis with a sample and the secretary problem.
\newblock In {\em Proceedings of the Thirty-First Annual ACM-SIAM Symposium on
  Discrete Algorithms (SODA)}, pages 2082--2095, 2020.

\bibitem{kaplan2022online}
Haim Kaplan, David Naori, and Danny Raz.
\newblock Online weighted matching with a sample.
\newblock In {\em Proceedings of the 2022 Annual ACM-SIAM Symposium on Discrete
  Algorithms (SODA)}, pages 1247--1272, 2022.

\bibitem{kesselheim2015secretary}
Thomas Kesselheim, Robert Kleinberg, and Rad Niazadeh.
\newblock Secretary problems with non-uniform arrival order.
\newblock In {\em Proceedings of the forty-seventh annual ACM Symposium on
  Theory of Computing (STOC)}, pages 879--888, 2015.

\bibitem{kesselheim2020knapsack}
Thomas Kesselheim and Marco Molinaro.
\newblock Knapsack secretary with bursty adversary.
\newblock In {\em 47th International Colloquium on Automata, Languages, and
  Programming (ICALP)}, pages 72:1--72:15, 2020.

\bibitem{DBLP:conf/esa/KesselheimRTV13}
Thomas Kesselheim, Klaus Radke, Andreas T{\"{o}}nnis, and Berthold
  V{\"{o}}cking.
\newblock An optimal online algorithm for weighted bipartite matching and
  extensions to combinatorial auctions.
\newblock In {\em {P}roceedings of the 21st {A}nnual {E}uropean {S}ymposium on
  {A}lgorithms ({ESA})}, pages 589--600, 2013.

\bibitem{kesselheim2018primal}
Thomas Kesselheim, Klaus Radke, Andreas Tonnis, and Berthold Vocking.
\newblock Primal beats dual on online packing lps in the random-order model.
\newblock {\em SIAM Journal on Computing}, 47(5):1939--1964, 2018.

\bibitem{korula2009algorithms}
Nitish Korula and Martin P{\'a}l.
\newblock Algorithms for secretary problems on graphs and hypergraphs.
\newblock In {\em International Colloquium on Automata, Languages, and
  Programming (ICALP)}, pages 508--520, 2009.

\bibitem{lang2018online}
Harry Lang.
\newblock Online facility location against a \textit{t}-bounded adversary.
\newblock In {\em Proceedings of the Twenty-Ninth Annual ACM-SIAM Symposium on
  Discrete Algorithms (SODA)}, pages 1002--1014, 2018.

\bibitem{mehta2013online}
Aranyak Mehta.
\newblock Online matching and ad allocation.
\newblock {\em Foundations and Trends in Theoretical Computer Science},
  8(4):265--368, 2013.

\bibitem{meyerson2001online}
Adam Meyerson.
\newblock Online facility location.
\newblock In {\em Proceedings 42nd IEEE Symposium on Foundations of Computer
  Science (FOCS)}, pages 426--431, 2001.

\bibitem{mirrokni2012simultaneous}
Vahab~S Mirrokni, Shayan~Oveis Gharan, and Morteza Zadimoghaddam.
\newblock Simultaneous approximations for adversarial and stochastic online
  budgeted allocation.
\newblock In {\em Proceedings of the Twenty-Third Annual ACM-SIAM Symposium on
  Discrete Algorithms (SODA)}, pages 1690--1701, 2012.

\bibitem{molinaro2017online}
Marco Molinaro.
\newblock Online and random-order load balancing simultaneously.
\newblock In {\em Proceedings of the Twenty-Eighth Annual ACM-SIAM Symposium on
  Discrete Algorithms (SODA)}, pages 1638--1650, 2017.

\bibitem{nagarajan2013offline}
Chandrashekhar Nagarajan and David~P Williamson.
\newblock Offline and online facility leasing.
\newblock {\em Discrete Optimization}, pages 361--370, 2013.

\bibitem{naori2019online}
David Naori and Danny Raz.
\newblock Online multidimensional packing problems in the random-order model.
\newblock In {\em 30th International Symposium on Algorithms and Computation
  (ISAAC)}. Schloss Dagstuhl-Leibniz-Zentrum fuer Informatik, 2019.

\bibitem{raghvendra2016robust}
Sharath Raghvendra.
\newblock A robust and optimal online algorithm for minimum metric bipartite
  matching.
\newblock In {\em Approximation, Randomization, and Combinatorial Optimization.
  Algorithms and Techniques (APPROX/RANDOM)}, 2016.

\bibitem{lecture_roughgarden}
Tim Roughgarden.
\newblock The random-order model for online algorithms, lecture notes in beyond
  worst-case analysis course.
\newblock \url{http://timroughgarden.org/w17/l/l16.pdf}, 2017.

\end{thebibliography}

\appendix
\section{Proof of Theorem~\ref{thm:rho_partial_random}}\label{proof:rho_partial_random}
\begin{proofw}
The proof is very similar to the proof of Theorem~\ref{thm:rho_parital}. Let $n' = |\CS{C}^*|$, $k = \lceil \rho n' \rceil$ and $a = \lfloor (1-\rho) n' \rfloor$. As opposed to the proof of Theorem~\ref{thm:rho_parital}, in this setting, $R^*$ is a random subset. Hence, we condition on $R^* = \CS{Y}$ for some fixed subset $\CS{Y} \subseteq \CS{C}^*$ of cardinality $k$. The proof proceeds as the proof of Theorem~\ref{thm:rho_parital}, until we reach Inequality~\eqref{eq:line_robust}. Conditioned on the event $\{R^* = \CS{Y}\}$, we have $A^* = \CS{C}^* \setminus \CS{Y}$ and by substituting $|A^*| = a$ in Inequality~\eqref{eq:line_robust}, we get that
\begin{align*}
    \E{\given{\sum_{u \in \CS{A^*}}\coin{u}}{R^* = \CS{Y}}} &\leq
    a \cdot \left(\frac{1}{k+1} + \frac{1}{k+1} \sum_{j=1}^{k} \left( \E{ \coin{u_j}} + q\cdot d^*_{u_j}\right) \right) + q \sum_{u \in \CS{C}^* \setminus \CS{Y}} d^*_u \\
    &= \frac{a}{k+1} \left(1+  \E{\sum_{u \in \CS{Y}} \coin{u} } + q\sum_{u \in \CS{Y}} d^*_u \right) + q \sum_{u \in \CS{C}^* \setminus \CS{Y}} d^*_u , 
\end{align*}

Now, similarly to Theorem~\ref{thm:qrofl-competitive} we have $ \E{\given{\sum_{u \in \CS{Y}} \coin{u}}{R^* = \CS{Y}} } \leq 1 + 2q \sum_{u \in \CS{Y}}d^*_u$, together with Lemma~\ref{lem:price_factor_roflq}, we get that $\E{\given{\ALG(A^*)}{R^* = \CS{Y}}} \leq (1+1/q)\frac{a}{k+1}(2 + 3q \sum_{u \in \CS{Y}} d^*_u) + (1+q) \sum_{u \in \CS{C}^* \setminus \CS{Y}} d^*_u$. Finally, we add the cost paid for the service of the points in $R^*$, and use the bound $\E{\given{\ALG(R^*)}{R^* = \CS{Y}}} \leq 1 + 1/q + 2 (1+q) \sum_{u\in \CS{Y}} d^*_u$. We have 
\begin{align*}
    & \E{\given{\ALG(\optcluster)}{R^* = \CS{Y}}} \\
    & \quad = \E{\given{\ALG(A^*)}{R^* = \CS{Y}}} + \E{\given{\ALG(R^*)}{R^* = \CS{Y}}} \\ 
    & \quad \leq  \left(1+\frac{1}{q}\right)\frac{a}{k+1}\left(2 + 3q \sum_{u \in \CS{Y}} d^*_u  \right) + (1+q) \sum_{u \in \optcluster \setminus \CS{Y}} d^*_u +  1 + \frac{1}{q} + 2 (1+q) \sum_{u\in \CS{Y}} d^*_u\\
    & \quad = 
    \left(1+\frac{1}{q}\right)\left(\frac{2a}{k+1} + 1\right) + (1+q)\left( \frac{3a}{k+1} + 2 \right) \sum_{u\in \CS{Y}} d^*_u + (1+q) \sum_{u \in \optcluster \setminus \CS{Y}} d^*_u.
\end{align*}
By taking the expectation over $R^*$, we get that
\begin{align}
\begin{split}
    \E{\ALG(\optcluster)}  &\leq 
    \left(1+\frac{1}{q}\right)\left(\frac{2a}{k+1} + 1\right) + (1+q)\left( \frac{3a}{k+1} + 2 \right) \E{\sum_{u\in R^*} d^*_u} \\
    & \quad  + (1+q) \E{ \sum_{u \in A^*} d^*_u}.
\end{split}\label{eq:apx_r*}
\end{align}
Now since $R^* \subseteq \optcluster$ is a uniformly random subset of cardinality $k$, we have $\E{\sum_{u\in R^*} d^*_u} = \frac{k}{n'}\sum_{u\in \optcluster} d^*_u $, and likewise,  $\E{\sum_{u\in A^*} d^*_u} = \frac{a}{n'}\sum_{u\in \optcluster} d^*_u $. By substituting these two equations in Equation~\eqref{eq:apx_r*}, we obtain
\begin{align*}
    \E{\ALG(\optcluster)} & \leq \left(1+\frac{1}{q}\right)\left(\frac{2a}{k+1} + 1\right) + (1+q)\left( \frac{3a}{k+1} \frac{k}{n'}+ \frac{2k}{n'}+ \frac{a}{n'}
    \right) \sum_{u\in \optcluster} d^*_u  \\
    &\leq \left(1+\frac{1}{q}\right)\left(\frac{2a}{k+1} + 1\right) + (1+q)\left( \frac{4a + 2k}{n'} \right) \sum_{u\in \optcluster} d^*_u \\
    & \leq  \left(1+\frac{1}{q}\right)\left(\frac{2}{\rho} -1\right) + (1+q) \left(  4 - 2\rho  \right) \sum_{u\in \optcluster} d^*_u,
\end{align*}
where in the last inequality we used the fact that $2a/(k+1) \leq 2(1/\rho - 1)$ and $(4a + 2k)/n' = (4n' - 2k)/n' \leq 4-2\rho$. 
\end{proofw}

\section{A Lower Bound on the Random-Order Competitive-Ratio of Fotakis' Algorithm}\label{apx:fotakis}

Fotakis' algorithm maintains a potential for each point $z$ in the metric space $\CS{M}$. The potential of $z$ at online round $\ell$ is defined by $p_\ell(z) = \sum_{i = 1}^{\ell} \max\{d(F_{\ell-1}, v_i) - d(z,v_i), 0\}$. The algorithm operates as follows: When a demand point $v_\ell$ arrives at online round $\ell$, the algorithm computes the potentials $p_\ell(z)$ of all points $z \in \CS{M}$. Then, it considers the point $z_\ell$ with the largest potential (ties are broken arbitrarily). If $p_\ell(z_\ell) \geq 1$, it opens $z_\ell$. Then, it assigns the demand point $v_\ell$ to its closest open facility (see~\cite{fotakis2007primal} and \cite{nagarajan2013offline} for more details).\footnote{For convenience, we choose to describe the facility opening criterion with a weak inequality as in~\cite{nagarajan2013offline} (i.e., $p_\ell(z_\ell) \geq 1$) instead of a strict inequality as in~\cite{fotakis2007primal}.}

To obtain the lower bound on the competitive-ratio of Fotakis' algorithm, we modify the constructed instance in the proof of Theorem~\ref{thm:lower_bound} as follows. For $0 < \delta < 1$, we multiply all the distances in the metric space by a factor of $\delta$. For our purpose, it also suffices to choose $m = 2n$ (instead of $m=n^2$). Concretely, for $m = 2n$ we construct a metric space with $m + \binom{m}{n}$ points of two types: The first type consists of $m$ points, $x_1,\dots,x_m$, with $d(x_i,x_j) = \delta$ for all $i,j\in [m]$. The second type are the \textit{subset points}. For each subset $\CS{I} \subseteq [m]$ of cardinality $n$, there is a point $s_{\CS{I}}$ with $d(s_\CS{I},x_j) = \delta/2$ if $j \in \CS{I}$ and $d(s_\CS{I},x_j) = \delta$ otherwise. Finally, for two subset points $s_\CS{I} \neq s_\CS{J}$, $d(s_\CS{I},s_\CS{J}) = \delta$. 
For the input, we can simply take $\CS{U} = \{x_1,\dots,x_n\}$.

To upper bound the cost of $\OPT$, observe that $\CS{U}$ is a subset of the points in $\{x_1,\dots,x_m\}$ of cardinality $n$, and therefore, there is a subset point $s_{\CS{I}}$ at distance $\delta/2$ from all the demand points in $\CS{U}$. Hence, $\E{\OPT} \leq 1 + n (\delta/2)$.

We now consider the performance of Fotakis' algorithm which we denote by $\ALG$. When the first demand point $v_1$ arrives, $\ALG$ opens a facility at $v_1$. Then, the algorithm does not open additional facilities until a point has a potential of at least $1$. First, note that except for the first demand point $v_1$, the points in $\{x_1,\dots,x_m\}$ never have a potential $ \geq 1$. This is because after $v_1$ is opened, each demand point $v_\ell$ (for $\ell > 1$) is at distance at most $\delta$ from its closest open facility, i.e., $d(F_{\ell-1}, v_\ell) \leq \delta$. So it does not contribute to the potential of $\{x_1,\dots,x_m\}$ except for its own potential, for which it contributes $\delta < 1$. Hence, except for $v_1$, no facilities in $\{x_1,\dots,x_m\}$ are opened by the algorithm.

On the other hand, the subset points may accumulate a potential of $1$. When a demand point $v_\ell = x_j$ arrives, and $d(F_{\ell-1},v_\ell) = \delta$, it contributes $\delta/2$ to the potential of each subset point $s_{\CS{I}}$ such that $j \in \CS{I}$. Therefore, at online round $\ell = 2/\delta + 1$, there are subset points $s_{\CS{I}}$ such that $d(v_{i},s_{\CS{I}}) = \delta/2$ for all $i \leq \ell$, and so $p_\ell(s_{\CS{I}}) = (2/\delta) \cdot (\delta/2) = 1$.\footnote{For convenience, we assume that $2/\delta$ and $(n-1) \delta/2$ are integers.} Since the algorithm chooses to open a facility at an arbitrary subset point $s_{\CS{I}}$ with $p_\ell(s_{\CS{I}}) = 1$, we can assume that it chooses $s_{\CS{I}}$ which is close only to the demand points that arrived until round $\ell$, and at distance $\delta$ from all future demand points $v_{\ell+1},\dots,v_{n}$ (for instance, we can choose $\CS{I} = \{j : v_i = x_j, i \leq \ell\} \cup \{m,m-1,\dots,m - (n- \ell) + 1\}$). Then, the potential of all the subset points return to zero, and this process is repeated every $2/\delta$ online rounds.

To sum up, the algorithm opens $1 + (n-1)\delta/2$ facilities. For the assignment cost it pays $0$ for $v_1$, $\delta/2$ for all the demand points that arrive at online rounds in which a facility is opened, and $\delta$ for all other demand points. Hence, the total assignment cost is $(n-1) \frac{\delta}{2} \cdot \frac{\delta}{2} + (n - 1 - (n-1) \frac{\delta}{2}) \cdot \delta $. Overall, the algorithm pays $1 + (n-1) \frac{3 \delta}{2} - (n-1)\frac{\delta^2}{4} $. By taking $\delta = 1/\sqrt{n-1}$, we get that $\E{\ALG}/\E{\OPT}$ approaches $3$ as $n$ approaches infinity.

\end{document}